%%
%% This is file `sample-manuscript.tex',
%% generated with the docstrip utility.
%%
%% The original source files were:
%%
%% samples.dtx  (with options: `manuscript')
%% 
%% IMPORTANT NOTICE:
%% 
%% For the copyright see the source file.
%% 
%% Any modified versions of this file must be renamed
%% with new filenames distinct from sample-manuscript.tex.
%% 
%% For distribution of the original source see the terms
%% for copying and modification in the file samples.dtx.
%% 
%% This generated file may be distributed as long as the
%% original source files, as listed above, are part of the
%% same distribution. (The sources need not necessarily be
%% in the same archive or directory.)
%%
%% The first command in your LaTeX source must be the \documentclass command.
%%%% Small single column format, used for CIE, CSUR, DTRAP, JACM, JDIQ, JEA, JERIC, JETC, PACMCGIT, TAAS, TACCESS, TACO, TALG, TALLIP (formerly TALIP), TCPS, TDSCI, TEAC, TECS, TELO, THRI, TIIS, TIOT, TISSEC, TIST, TKDD, TMIS, TOCE, TOCHI, TOCL, TOCS, TOCT, TODAES, TODS, TOIS, TOIT, TOMACS, TOMM (formerly TOMCCAP), TOMPECS, TOMS, TOPC, TOPLAS, TOPS, TOS, TOSEM, TOSN, TQC, TRETS, TSAS, TSC, TSLP, TWEB.
% \documentclass[acmsmall]{acmart}

%%%% Large single column format, used for IMWUT, JOCCH, PACMPL, POMACS, TAP, PACMHCI
% \documentclass[acmlarge,screen]{acmart}

%%%% Large double column format, used for TOG
% \documentclass[acmtog, authorversion]{acmart}

%%%% Generic manuscript mode, required for submission
%%%% and peer review
% \documentclass[manuscript,screen,review]{acmart}
\documentclass[acmsmall]{acmart}

\usepackage{balance}  % for  \balance command ON LAST PAGE  (only there!)
\usepackage{times}
\usepackage{graphicx}
\usepackage{epstopdf}
\usepackage{latexsym}
\usepackage{algorithm}
\usepackage[noend]{algpseudocode}
\usepackage{epsfig}
\usepackage{amsfonts}
\usepackage{amsmath}
\usepackage{amsthm}
\usepackage{subfigure}
\usepackage{stmaryrd}
\usepackage{bm}
\usepackage{algpseudocode}%pseudocode package
\usepackage{latexsym}%binary relation symbol
\usepackage{setspace}%Ê¹ÓÃ¼ä¾àºê°ü
\usepackage{varwidth}
\usepackage{tabu}
\usepackage{stfloats}
\usepackage{multirow}
\usepackage[english]{babel}
\usepackage{url}
\usepackage{xcolor}
\usepackage{pdflscape}
\usepackage{graphicx}
\usepackage{float}
\usepackage{comment}
\usepackage{ulem}
\usepackage{multibib}

% \newcites{rebuttal}{Reference}
%\renewcommand{\baselinestretch}{2}  %turns on double spacing
\newtheorem{Def}{\textbf{Definition}}
\newtheorem{Lem}{\textbf{Lemma}}
\newtheorem{Cor}{\textbf{Corollary}}
\newtheorem{example}{\textbf{Example}}
\newtheorem{rul}{\textbf{Rule}}

%%
%% \BibTeX command to typeset BibTeX logo in the docs
\AtBeginDocument{%
  \providecommand\BibTeX{{%
    \normalfont B\kern-0.5em{\scshape i\kern-0.25em b}\kern-0.8em\TeX}}}

%% Rights management information.  This information is sent to you
%% when you complete the rights form.  These commands have SAMPLE
%% values in them; it is your responsibility as an author to replace
%% the commands and values with those provided to you when you
%% complete the rights form.
\setcopyright{acmcopyright}
\copyrightyear{2022}
\acmYear{2022}
\acmDOI{10.1145/1122445.1122456}

%% These commands are for a PROCEEDINGS abstract or paper.
% \acmConference[Woodstock '18]{Woodstock '18: ACM Symposium on Neural
%   Gaze Detection}{June 03--05, 2018}{Woodstock, NY}
\acmBooktitle{Woodstock '18: ACM Symposium on Neural Gaze Detection,
  June 03--05, 2018, Woodstock, NY}
\acmPrice{15.00}
\acmISBN{978-1-4503-XXXX-X/18/06}

%%
%% Submission ID.
%% Use this when submitting an article to a sponsored event. You'll
%% receive a unique submission ID from the organizers
%% of the event, and this ID should be used as the parameter to this command.
%%\acmSubmissionID{123-A56-BU3}

%%
%% The majority of ACM publications use numbered citations and
%% references.  The command \citestyle{authoryear} switches to the
%% "author year" style.
%%
%% If you are preparing content for an event
%% sponsored by ACM SIGGRAPH, you must use the "author year" style of
%% citations and references.
%% Uncommenting
%% the next command will enable that style.
%%\citestyle{acmauthoryear}

%%
%% end of the preamble, start of the body of the document source.
\begin{document}

%%
%% The "title" command has an optional parameter,
%% allowing the author to define a "short title" to be used in page headers.
\title{Online Discovery of Evolving Groups over Massive-Scale Trajectory Streams}

\author{Yanwei Yu}
\authornote{This is the corresponding author.}
\email{yuyanwei@ouc.edu.cn}
\author{Ruoshan Lan}
\email{lanruoshan@gmail.com}
\affiliation{%
  \institution{Ocean University of China}
  \streetaddress{238 Songling RD}
  \city{Qingdao}
  \state{Shandong}
  \country{China}
  \postcode{266100}
}

\author{Lei~Cao}
\affiliation{%
  \institution{Massachusetts Institute of Technology}
%   \streetaddress{8600 Datapoint Drive}
  \city{Cambridge}
  \state{MA}
  \country{USA}
  \postcode{02139}}
\email{lcao@csail.mit.edu}

\author{Peng~Song}
\email{pengsong@ytu.edu.cn}
\author{Yingjie Wang}
\email{wangyingjie@ytu.edu.cn}
\affiliation{%
  \institution{Yantai University}
  \streetaddress{30 Qingquan RD}
  \city{Yantai}
  \state{Shandong}
  \country{China}
  \postcode{264005}
}

%%
%% By default, the full list of authors will be used in the page
%% headers. Often, this list is too long, and will overlap
%% other information printed in the page headers. This command allows
%% the author to define a more concise list
%% of authors' names for this purpose.
\renewcommand{\shortauthors}{Yu and Lan, et al.}

%%
%% The abstract is a short summary of the work to be presented in the
%% article.
\begin{abstract}
The increasing pervasiveness of object tracking technologies leads to huge volumes of spatiotemporal data collected in the form of trajectory streams.
The discovery of useful group patterns from moving objects' movement behaviours in trajectory streams is critical for real-time applications ranging from transportation management to military surveillance.
%On such basis, One important analysis on discovering useful patterns from their movement behaviors can be made to find the moving objects that travel together.
Motivated by this, we first propose a novel pattern, called evolving group, which models the unusual group events of moving objects that travel together within density connected clusters in evolving streaming trajectories. Our theoretical analysis and empirical study on the Osaka Pedestrian data and Beijing Taxi data demonstrate its effectiveness in capturing the development, evolution, and trend of group events of moving objects in streaming context.
Moreover, we propose a discovery method that efficiently supports online detection of evolving groups over massive-scale trajectory streams using a sliding window. It contains three phases along with a set of novel optimization techniques designed to minimize the computation costs.
Furthermore, to scale to huge workloads over evolving streams, we extend our discovery method to a parallel framework by using a sector-based partition.
Our comprehensive empirical study demonstrates that our online discovery framework is effective and efficient on real-world high-volume trajectory streams. 
\end{abstract}

%%
%% The code below is generated by the tool at http://dl.acm.org/ccs.cfm.
%% Please copy and paste the code instead of the example below.
%%

\begin{CCSXML}
<ccs2012>
   <concept>
       <concept_id>10002951.10003227.10003236</concept_id>
       <concept_desc>Information systems~Spatial-temporal systems</concept_desc>
       <concept_significance>500</concept_significance>
       </concept>
   <concept>
       <concept_id>10002951.10003227.10003351.10003446</concept_id>
       <concept_desc>Information systems~Data stream mining</concept_desc>
       <concept_significance>500</concept_significance>
       </concept>
   <concept>
       <concept_id>10010147.10010169.10010170</concept_id>
       <concept_desc>Computing methodologies~Parallel algorithms</concept_desc>
       <concept_significance>500</concept_significance>
       </concept>
 </ccs2012>
\end{CCSXML}

\ccsdesc[500]{Information systems~Spatial-temporal systems}
\ccsdesc[500]{Information systems~Data stream mining}
\ccsdesc[500]{Computing methodologies~Parallel algorithms}

%%
%% Keywords. The author(s) should pick words that accurately describe
%% the work being presented. Separate the keywords with commas.
\keywords{Moving objects, pattern mining, evolving group pattern, trajectory streams}

%%
%% This command processes the author and affiliation and title
%% information and builds the first part of the formatted document.
\maketitle

%\section{Introduction}
\section{Introduction}
\label{sec.intro}

%\vspace{-0.1cm}
In recent years, location tracking technologies have been broadly utilized in a variety of applications ranging from traffic management to mobile social networks, which have generated huge volumes of trajectory data from moving objects including people, vehicles and animals, etc.  
Such trajectory data can be utilized for different purposes, such as travel-route prediction, friends recommendation, anomaly detection, and traffic control~\cite{zheng2015trajectory}\cite{li2011movemine}\cite{yu2014detecting}.  In this work, we focus on detecting a particular type of movement pattern called \textit{evolving group} from massive-scale moving object trajectory streams effectively and efficiently.

Evolving group pattern can be considered as a special type of group pattern that models the behavior of the moving objects that travel together over time. Techniques have been proposed in the literature to detect group patterns such as $flock$~\cite{benkert2008reporting}, $convoy$~\cite{jeung2008convoy}\cite{jeung2008discovery}, $swarm$~\cite{li2010swarm} and $gathering$~\cite{zheng2013discovery}\cite{zheng2014online}.
%that objects are considered to form a swarm group even if they are not always moving together for a number of $k$ consecutive time points.
A flock \cite{benkert2008reporting} consists of objects that travel together within a user-specified distance range threshold. A convoy pattern in \cite{jeung2008convoy}\cite{jeung2008discovery} is defined as a set of objects that move together (always falling into the same density-based clusters) during at least $k$ consecutive timestamps. Swarm pattern \cite{li2010swarm} is a variation of convoy pattern. It allows the moving objects to leave the swarm temporally. However, their objective is to discover groups of objects that move together in static trajectory database.
Among these techniques, the $gathering$ pattern is closest to our evolving group pattern.  %Gathering models the group event in the streaming trajectory data that involves congregation of moving objects. 
A gathering is defined as a sequence of density-based clusters for a period of at least $k_c$ consecutive timestamps in which adjacent clusters are within a given distance range $d$. Moreover, each cluster is required to contain at least $m_p$ dedicated members (so-called $participators$) who appear in at least $k_p$ clusters, although not necessarily to be consecutive.
Similar to the swarm pattern from \cite{li2010swarm}, members in a gathering group also enter and leave the group.
%Although an incremental discovery of $gathering$ patterns over trajectories is proposed to handle new arrived data in \cite{gather2014},
%Discovering and monitoring group patterns are helpful in sensing, tracking and predicating the non-trivial events in daily life.
However, the $gathering$ pattern requires at least $k_c$ consecutive timestamps, which might result in the loss of interesting sequence of moving object clusters. Furthermore, understanding the trend and evolution of group events in streaming environments is considered more useful and helpful than simply extracting group patterns along time. For example, the causal interactions of discovered groups in time sequence can reveal the inherent relationships of groups of moving objects.

Figure~\ref{example} illustrates an example of gathering patterns. Let $m_c$=$3$ (minimal number of objects for a cluster), $k_c$=$3$ (minimal duration of a gathering), $k_p$=$2$ (minimal number of participated clusters for a participator) and $m_p$=$3$ (minimal number of participators). Suppose cluster $c_3$ is too far away from $c_2$, and $c_4$ is also far away from $c_1$, namely, $distance(c_2,c_3)>d$ and $distance(c_1,c_4)>d$. From this example, two sequences of clusters $\langle c_1,c_3,c_5\rangle$ and $\langle c_2,c_4,c_5\rangle$ form two gathering candidates from $t_1$ to $t_3$. In this case, only $\langle c_1,c_3,c_5\rangle$ is a gathering pattern with participator set $\{o_1,o_2,o_3,o_4\}$ because the sequence contains at least three participators at each cluster, whereas $\langle c_2,c_4,c_5\rangle$ (participator set $\{o_3,o_4,o_5,o_6\}$) only contains two participator $\{o_3,o_4\}$ in $c_5$. Similarly, from $t_5$ to $t_7$, there are two gathering candidates $\langle c_6,c_7,c_9\rangle$ and $\langle c_6,c_8,c_9\rangle$. However, only $\langle c_6,c_7,c_9\rangle$ forms a gathering since each cluster includes at least three participators.

\begin{figure}[htbp]
\centering
\includegraphics[width=0.85\textwidth]{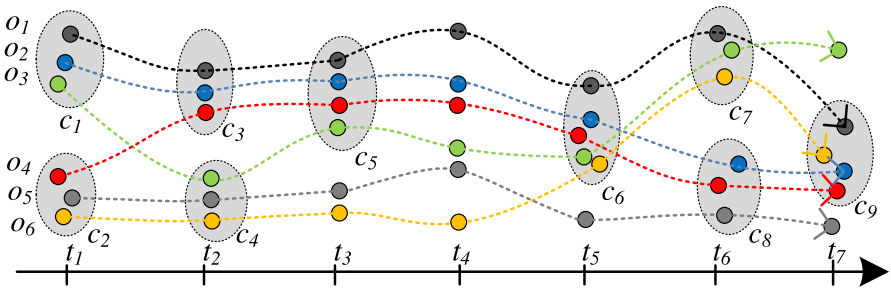}
\caption{Comparison of gathering and evolving group. A small circle represents a trajectory point. A large gray dashed oval denotes a cluster.}
\label{example}
\end{figure}

Using \textit{gathering} pattern, we can get two independent gatherings from this example. This is because gathering requires the group to move in consecutive clusters during its lifetime. However, \textit{do the two gatherings have relationship with each other?} We find that there are four participators $\{o_1,o_2,o_3,o_4\}$ in the gathering $\langle c_1,c_3,c_5\rangle$ and five participators $\{o_1,o_2,o_3,o_4,o_6\}$ in $\langle c_6,c_7,c_9\rangle$. 
The four participators in the first gathering $\langle c_1,c_3,c_5\rangle$ still take an active part in the second gathering $\langle c_6,c_7,c_9\rangle$ and are also detected as participators. 
Additionally, these two gatherings are close to each other in both aspects of location and time. 
Hence gathering $\langle c_6,c_7,c_9\rangle$ is actually developed from $\langle c_1,c_3,c_5\rangle$ and grows larger. 
Therefore, the $gathering$ pattern is not suitable for modeling the variational groups in which memberships evolve gradually in streaming environments. %This observation motivates a more flexible concept to be developed for discovering evolving groups in context of massive-scale trajectory streams.

To overcome the aforementioned shortcomings of gathering pattern, we try to define an evolving group as a dense group of objects that share common behaviors in most of the time and change gradually over time. 
Although gathering pattern is regarded as a dense and continuing group of individuals, we regard that it is unrealistic to use such a strict criterion of consecutive moving object clusters to track the practical group events.
For example, the crowding might become ``eased somewhat'' at some time points in a traffic jam, which results in there is no enough dense clusters at that time. But they might again fall into a congestion soon at next crossroad. The lightweight timestamp might split the whole group of objects ponderously. 
The another important characteristic of evolving group is that there is no requirement for coherent membership of group over streams, namely, the members could change gradually over time. Gathering pattern also allows members to enter and leave group at any time, however, gathering does desire the stable $participator$ to participate the group event during the whole lifetime period, while evolving group should capture the gradual change of \textit{participator} in streams, i.e., the evolution of participators.
Furthermore, the definition of evolving group should be suitable for capturing the group patterns from high-volume evolving streaming data. In \cite{zheng2014online}, a solution is proposed to support online discovery of gatherings in incremental manner, however, it does not take the moving object evolving trajectories into account, e.g., monitoring the dynamic data of the most recent five minutes over time.

In this article, we propose a new group pattern definition of \textit{evolving group} by introducing
the notions of ``crowd'' and ``aggregation'' \textbf{in sliding window} to capture the evolving aggregation of objects over trajectory streams.
Specifically, we define our `crowd'' to model the congregation of moving objects with an relaxed timestamp constrain in a sliding window.
%Unlike $crowd$ definition in \cite{gather2013},  our ``crowd'' concept models the aggregation of moving objects with relaxed time points in trajectory streams. Namely it not only considers the evolving trajectory segments of objects using the sliding window, but also takes the duration of the clusters across time points into account.
The ``aggregation'' is a special ``crowd'' that contains at least $m_p$ participators in each cluster. Then we define the ``group'' as the set of participators in the ``aggregation'' of a window to model the \textit{core members}. We further define the \textit{evolving group} to refer a sequence chain of evolved ``groups'' that share most core members (e.g., $m_g$) in adjacent windows to capture the evolution of \textit{core members}.

In Figure \ref{example}, let window size be 4, $m_g$=$3$, and other parameters are same with $gathering$ pattern. There is a ``aggregation'' $\langle c_1,c_3,c_5\rangle$ along with a ``group'' $g_1$=$\{o_1,o_2,o_3,o_4\}$ in window $W_{t_1}^{t_4}$. Suppose $distance(c_5,c_6)<d$, $\langle c_3,c_5,c_6\rangle$ also forms a ``aggregation'' in $W_{t_2}^{t_5}$ and evolves from $\langle c_1,c_3,c_5\rangle$ with same ``group'' $g_2$=$\{o_1,o_2,o_3,o_4\}$. After the time $t_6$ arrives, the window slides from $W_{t_2}^{t_5}$ to $W_{t_3}^{t_6}$. Two ``crowds'' $\langle c_5,c_6,c_7\rangle$ and $\langle c_5,c_6,c_8\rangle$ are discovered, whereas only $\langle c_5,c_6,c_7\rangle$ is a ``aggregation'' that corresponds to the ``group'' $g_3$=$\{o_1,o_2,o_3,o_4,o_6\}$, which evolves from $g_2$ and grows larger. In window $W_{t_4}^{t_7}$, there is also only one ``aggregation'' $\langle c_6,c_7,c_9\rangle$ that satisfies $m_p$ participators at each cluster, and the corresponding ``group'' $g_4$=$\{o_1,o_2,o_3,o_4,o_6\}$ evolves and keeps from last window. Therefore, an \textit{evolving group} $\langle g_1, g_2, g_3, g_4\rangle$ reveals the development and trend of group events shown in the example.

The second challenge we must deal with is the \textit{design of a solution that can efficiently discover evolving groups from massive-scale streaming trajectories.}  Many applications require the real time monitoring and tracking of moving objects to discover the groups as soon as possible for supporting real time decision-making. Hence the discovering algorithm should report the results simultaneously while receiving and processing the massive-scale trajectory streams.
Apparently we can not simply apply or extend existing mining algorithms of group patterns to support online discovery of evolving groups.
%The CuTS algorithm \cite{convoy} for $convoy$ pattern obtains convoys that share enough common objects on the sequence of consecutive clusters over time. The ObjectGrowth algorithm \cite{swarm} for $swarm$ pattern tries to find the closure swarms using proposed pruning strategies from all subsets of objects to reduce the search space of combination. However they do not apply to our problem because the evolving group pattern does not require coherent objects to fall in the same cluster at each time point.
%In \cite{gather2014}, a growth-style algorithm is proposed to discover the \textit{closed crowds}, and then a test-and-divide (TAD) algorithm is released to update the $gathering$ patterns in an incremental way, which splits a crowd into sub-crowds by removing the \textit{invalid clusters} having no enough participators and further checks the sub-crowds recursively.
%However, our relaxed ``crowd'' in sliding window does not have the closure property with the most consecutive clusters, and the closure of ``crowd'' may evolve as window slides. Our ``gathering'' also would not be cut off by the invalid clusters, since it is not restricted with consecutive clusters over streams.
%Last but not least, how to present our evolving groups appropriately that can give out useful information which we discover from massive moving objects trajectory streams more intuitively and naturally.
In this article, we first propose an efficient discovery framework of the \textit{evolving group}, which contains three phases: 1) discover all closed ``crowds'' in the current window, 2) detect all closed ``aggregations'' with their corresponding ``groups'' from the discovered crowds, and 3) update the evolving groups. Furthermore, we design a multi-threading based parallel solution to online discover the evolving groups over high-volume trajectory streams, called \underline{m}ulti-\underline{t}hreading based \underline{o}nline \underline{d}iscovery (MTOD) framework.

We conduct extensive experiments on three real world datasets (\textit{Pedestrian}~\cite{zanlungo2014potential}\cite{zanlungo2015spatial}, \textit{Taxi}~\cite{yuan2013t}, and citywide surveillance \textit{Traffic} data in Jinan, China) to demonstrate the effectiveness and efficiency of our framework.

\textbf{Contributions.} The main contributions of this work are as follow:
(1) We propose a novel concept of evolving group that captures the interesting group events of moving objects in the stream context.
(2) Our empirical study using Pedestrian and Taxi datasets demonstrates the effectiveness of our new proposed group definition in capturing evolving groups over trajectory streams.
(3) We design an efficient evolving group discovery method that efficiently discovers the evolving groups from streaming trajectories in an incremental manner.
(4) We further enhance the proposed discovery method by proposing a multi-threading based parallel framework to online discover evolving groups over massive-scale moving object trajectory streams.
(5) Extensive evaluation study is conducted on the real large-scale traffic datasets to evaluate the efficiency of the proposed framework in near real time.

\textbf{Extension from the Conference Version.} While this work is based on a conference
article~\cite{lan2017discovering}, the scope of the proposed work has been significantly extended.

\begin{itemize}
\item We now extend our previous algorithm to a scalable online discovery framework based on multi-threading for evolving group mining over massive-scale trajectory streams (Sec.~\ref{sec.mtod}).
\item We now give a formal complexity analysis of our proposed algorithms (Sec.~\ref{subsec.complex}). In addition, the complexity analysis is also validated by the results of our extensive efficiency evaluation. 
Moreover, we also elaborate the Incre and VRA algorithms in this article.
\item We significantly extend the experimental evaluation, focusing on the effectiveness of our proposed evolving group definition (Sec.~\ref{subsec.effect}). Specifically, we now evaluate the effectiveness of the proposed group definition on a new real world dataset that contains the group annotations~\cite{zanlungo2014potential}\cite{zanlungo2015spatial} compared against state-of-the-art \textit{gathering} pattern~\cite{zheng2014online}.
% \item We extend the efficiency evaluation of the proposed DEG algorithm with respect to the important parameters. Furthermore, we give a formal analysis of the performance of our proposed method in detecting evolving groups.
\item We conduct an additional experimental study to demonstrate the efficiency of the newly proposed parallel framework based on multi-threading on one additional large-scale real-world dataset (the \textit{Traffic} data) in Sec.~\ref{subsec.scale}. %Furthermore, we give a formal analysis of the performance of the proposed framework.
\end{itemize}

The remainder of this article is organized as follows. We discuss related work which is related to our model in next section. We define the necessary concepts and formulate the focal problem of this paper in Section~\ref{sec.def}. Efficient methods for discovering evolving groups on archived and new arrivals of trajectory data are presented in Section~\ref{sec.alg}. Online discovery solution based on multi-threading techniques is presented in Section~\ref{sec.mtod}. Section~\ref{sec.exp} reports the experimental observations. %, followed by a brief review of related work in Section VI. 
Section~\ref{sec.con} concludes the article.

\section{Related Work}
\label{sec.relwork}
%Most of the related work on group pattern mining has been discussed in Section~\ref{sec.intro}. 
In this section, we mainly review the representative work that are related to our problem in the three areas of trajectory cluttering, group pattern mining in static trajectory databases, and in streaming trajectories.

\textbf{Trajectory clustering.}
Gaffney et al.~\cite{gaffney1999trajectory} first propose the fundamental principles of clustering trajectories based on probabilistic modeling. They consider the trajectory as a whole and represent a set of trajectories using a mixture of regression models. Then an unsupervised learning is carried out using the maximum likelihood principle. In particular, they use EM algorithm to estimate hidden parameters involved in probability models, and then determine the clusters membership. 
% {\color{red}
Fr\'echet distance is proposed to measure the similarity between curves~\cite{eiter1994computing}\cite{alt1995computing}. In the recent years, it is widely used to measure the similarities between trajectories~\cite{toohey2015trajectory}\cite{xie2017distributed}. 
More recently, Jin et  al.~\cite{Jin2019MovingOL} study the problem of moving object linking based on their historical traces. They define moving object linking as a $k$-nearest neighbour search problem on the collection of signatures and aim at measuring the similarity of two trajectories.
% } 
As pointed out by Lee et al.~\cite{lee2007trajectory}, distance measure based on whole trajectories may miss interesting common paths in sub-trajectories.
Lee et al. \cite{lee2007trajectory} propose a partition and group framework to discover common sub-trajectory clusters in static trajectory databases. They first partition trajectories into a set of quasi-linear segments using MDL principle. Then the line segments are grouped using density-based clustering to find common sub-trajectory clusters.
Lee et al.~\cite{lee2008traclass} succeedingly propose a hierarchical feature generation framework for trajectory classification by partitioning trajectories and exploring region-based and trajectory-based clustering.
Jensen et al.~\cite{jensen2007continuous} present a scheme that is capable of incrementally clustering moving objects by employing object dissimilarity and clustering features to improve the clustering effectiveness. %for trajectory data.
Li et al.~\cite{li2010incremental} further propose an incremental trajectory clustering framework that contains online micro-cluster maintenance and offline macro-cluster creation for incremental trajectory databases.
\textit{Unlike the group pattern mining focused in this paper, this category of proposals only regard trajectories as sequences of line segments without considering temporal information, resulting in objects whose trajectory points fall in the same clusters may not actually move together in time domain.} 

\textbf{Co-locating pattern discovery in static databases.}
One of the earliest works on co-locating pattern discovery is introduced by Laube and Imfeld~\cite{laube2002analyzing} and further $flock$~\cite{benkert2008reporting}\cite{vieira2009line}, $convoy$~\cite{jeung2008convoy}, $swarm$~\cite{li2010swarm} and $gathering$~\cite{zheng2013discovery} are studied by others.
Kalnis et al.~\cite{kalnis2005discovering} propose one similar notion of moving cluster, which is a set of objects when they can be clustered at consecutive time points, and the portion of common objects in any two consecutive clusters is not below a predefined threshold.
Another similar notion, \textit{moving group pattern}~\cite{wang2006efficient}, relies on disk-based clustering to mine a set of objects that are within a distance threshold from one another for a minimum duration. Similar with $swarm$, \textit{moving group pattern} also permits members of pattern to travel together for a number of nonconsecutive timestamps by a weight threshold.
A recent study by Li et al.~\cite{li2013effective} proposes the notion of $group$ that uses density connectedness for clustering trajectories without relying on sampling points. Namely, the group pattern simultaneously satisfies sampling independence, density connectedness, trajectory approximation and online processing.
Recently, a kind of loose group movement pattern~\cite{wang2015discovering}\cite{naserian2016discovery}\cite{naserian2018framework} is proposed to discover the groups with coherent members and strict density connectedness constraint among members from trajectories. 
\textit{Such work is to discover the clusters of coherent objects that move together in static trajectory databases, which are quit different from our evolving group mining in the context of trajectory streams.} 
% {\color{red}More recently, Jin et  al.~\cite{Jin2019MovingOL} study the problem of moving object linking based on their historical traces. They define moving object linking as a $k$-nearest neighbour search problem on the collection of signatures and aim at measuring the similarity of two trajectories. 
% % The idea is unlike the above described research. They put focus on extracting patterns that are both common and unique to a moving object.
% \textit{However, the problem studied in the work is completely different from ours. }}

\textbf{Group pattern discovery in trajectory streams.}
%online discovery of gathering, travelling companion, online discovery of flock patterns.
Tang et al~\cite{tang2012discovery} recently propose the problem of discovering travelling companions in the context of streaming trajectories. The notion of travelling companion is as essential as $convoy$~\cite{jeung2008convoy}. However, they work on incremental algorithm of pattern discovery when the trajectories of users arrive in form of data streams.
Vieira et al.~\cite{vieira2009line} present the online flock discovery solution in polynomial time by identifying a discrete number of locations to place the center of the flock disk inside the spatial universe. They further propose a framework that uses a lightweight grid-based structure to efficiently and incrementally process the trajectory locations to discover flock patterns in streaming spatiotemporal data.
Aung and Tan~\cite{aung2010discovery} propose the notion of evolving convoys to better understand the states of convoys. Specifically, an evolving convoy contains both dynamic members and persisted members. As time passes, the dynamic members are allowed to move into or out of the evolving convoy, creating many stages of the same convoy. At the end, the evolving convoys with their stages are returned.
Yu et al.~\cite{yu2013online} propose an online clustering over trajectory streams to discover the groups with coherent members during consecutive timestamps. They first perform density-based clustering on trajectory line segments, and then update trajectory clusters inclemently.
Zheng et al.~propose an online discovery of ~\textit{gathering pattern}~\cite{zheng2013discovery} over trajectories in an incremental manner in \cite{zheng2014online}, which can capture congregations of moving individuals incrementally from durable and stable area with high density in trajectories.
\textit{However, our goal is totally different from these work. These studies try to find object clusters for consecutive duration of time points over trajectories, while our work attempts to discover evolving groups of dynamic core objects in most recent trajectory streams.}

\begin{table}[h]
\begin{center}
\caption{Notations and definitions}
\label{notations}
% \vspace{5mm}
\begin{tabular}{p{1.3cm}|p{6.7cm}}
\hline
Notation  & Definition \\
\hline \hline
$o_{i}$ & a moving object   \\
\hline
$O_{DB}$ & the set of all moving objects   \\
\hline
$t_{i}$ & timestamp at the $i^{th}$ time point   \\
\hline
$c_{t_i}^j$ & a cluster at timestamp $t_i$ \\
\hline
$C_{t_i}$ & the collection of clusters at timestamp $t_i$ \\
\hline
$Cr$ & a crowd in a sliding window  \\
\hline
$k_c$ & the timestamp threshold of a crowd  \\
\hline
$m_c$ & the support threshold of a crowd  \\
\hline
$d$ & the distance threshold in crowd \\
\hline
$k_{p}$ & the cluster support threshold of a participator  \\
\hline
$m_{p}$ & the participator count threshold of an aggregation  \\
\hline
$m_{g}$ & the support threshold for evolving group \\
\hline
$k_{g}$ & the lifetime support threshold for evolving group \\
\hline
$W$, $w$ & a sliding window and its window size \\
\hline
$W_s$,$W_e$ & the starting and ending time of a window $W$  \\
\hline
$Ag$ & an aggregation in a window  \\
\hline
$Gr$ & a group in a window \\
\hline
$CanSet$ & the set of closed crowd candidate in a window \\
\hline
$endclu$ & the set of ending clusters of $CanSet$ \\
\hline
$c_{t_i}.st$ & the status flag of ending cluster $c_{t_i}$ \\
\hline
$o_{i}.cnt$ & the \# of clusters where $o$ appears in a crowd \\
\hline
\end{tabular}
\end{center}
\end{table}

\section{Problem Definition}
\label{sec.def}
In this section, we first introduce the definitions of all concepts used throughout the paper, and then formally state the focal problem to be solved. The list of major symbols and notations in this article is summarized in Table~\ref{notations}. 

Let $O_{DB}$  = $\{o_{1},o_{2},\dots,o_{n}\}$ be the set of all moving objects and $t_{i}$ be the timestamp at the $i^{th}$ time points in the trajectory streams.
We adopt the notion of density-based clustering \cite{ester1996density} to define the snapshot cluster.
A snapshot cluster is a group of objects with arbitrary shape and size, which are density-connected to each other at a given timestamp.
$C_{t_i}$=$\{c_{t_i}^1,c_{t_i}^2,\dots,c_{t_i}^m\}$ is the collection of snapshot clusters at timestamp $t_i$. 
Notice that we suppose that each snapshot is a short time interval, and the trajectory points of all moving objects in the same snapshot are received at the same timestamp, that is, the trajectory points of all moving objects generated within this interval are considered to be synchronized.  

In this work, we use the periodic sliding window semantics to define the sub-stream of an infinite trajectory data stream.
%Given a trajectory stream $S$, any subsequence $W$ = $\{P_{t_{i-w-1}}, P_{t_{i-w+2}},\dots,P_i\}$ of $S$, is called a sliding window, where $P_{t_i}$ is the collection of moving object points at timestamp $t_i$ and $w$ is a predefined window size.
Each window $W$ has a starting time $W.s$ and an ending time $W.e=W.s+w-1$, where $w$ is a predefined window size. The window whose $W.e$ equals to the current timestamp is called the current window, denoted as $W_c$. We also use $W_{t_i}^{t_j}$ to denote the window whose $W.s=t_i$ and $W.e=t_j$.
Periodically the current window slides, causing $W.s$ and $W.e$ to increase by one timestamp.
Now we first introduce our ``crowd'' concept in sliding window as following Definition \ref{Def-crowd}.

\begin{Def}[Crowd]
\label{Def-crowd}
Given a trajectory stream in the sliding window $W$, a support threshold $m_{c}$, a distance threshold $d$, and a timestamp threshold $k_{c}$, a crowd $Cr$ is a sequence of clusters at non-consecutive timestamps, i.e., $Cr$ =$\langle c_{t_a} ,c_{t_b},\dots, c_{t_k} \rangle$ ($W.s\leq t_a < t_{b}<\dots< t_k\leq W.e$), which satisfies the following three requirements:\\
(1) The number of clusters in $Cr$, i.e., the number of timestamps, is not less than $k_c$.\\
(2) There should be at least $m_c$ objects in each cluster of $Cr$.\\
(3) The distance between any adjacent pair of clusters in $Cr$ is not greater than $d*\Delta t$, where $\Delta t$ is time difference between the pair of clusters.
\end{Def}

Intuitively, a crowd is bounded in a sliding window, so $k_c$ should be less than $w$. If $k_c$ is equal to $w$, our ``crowd'' degenerates to the $crowd$ in~\cite{zheng2013discovery}. We also use the Hausdorff distance~\cite{huttenlocher1993comparing} to measure the distance between two clusters.
%Hausdorff distance is a widely used metric for point sets in the community of computer vision and image processing.
Given two clusters $c_1$ and $c_2$, the Hausdorff distance $d_H(c_1,c_2)$ between them is defined as: 
\begin{equation}
d_{H}(c_1,c_2) = \max \{\max\limits_{p\in c_1}\min\limits_{q\in c_2}d(p,q),\max\limits_{q\in c_2}\min\limits_{p\in c_1}d(p,q)\}
\end{equation}

Moreover, a crowd $Cr$ is said to be closed iff there is no superset of $Cr$ which is a crowd in the current window. Our goal is to find the closed crowds to avoid exploring redundant crowds. Essentially the concept of crowd in sliding window can capture dense group of object clusters in most recent time.
Unlike $crowd$ in~\cite{zheng2013discovery},
we do not require that clusters in a crowd are consecutive, i.e., our ``crowd'' has a relaxed time restriction.
%However, we also use the size of window to bound the lifetime of crowds, hence we see that a crowd also is a dense group of clusters that occupy most of the timestamps in a window.
Furthermore, the adjacent clusters in a crowd should satisfy the condition that their distance should be less than $d*\Delta t$, namely, we enlarge the distance threshold by being proportional to time difference to connect the clusters at the non-consecutive timestamps reasonably. Hausdorff distance obeys metric properties~\cite{huttenlocher1993comparing}, i.e., Hausdorff distance has the properties of identity, symmetry, and triangle inequality.
This guarantees that the subset of $Cr$ is also a crowd if the length is not less than $k_c$, meaning that crowd also satisfies the downward closure property.
Before defining the evolve group, we define the notions of participator and aggregation in sliding window environment first.

\begin{Def}[Participator]
\label{Participator}
Given a crowd $Cr$ in the current window $W$ and a cluster support threshold $k_p$, an object $o$ is called a participator of $Cr$ iff it appears in at least $k_p$ clusters of $Cr$.
\end{Def}

\begin{Def}[Aggregation]
\label{Gather}
Given a crowd $Cr$ in the current window $W$ and participator count threshold $m_p$, $Cr$ is called a aggregation iff each cluster of $Cr$ includes at least $m_p$ participators.
\end{Def}

By Definition~\ref{Participator}, a participator need not stay in each cluster of the crowd. As long as an object occurs in the crowd at enough timestamps, it is regarded as a participator. An aggregation also need not require a participator to occur at each timestamp in the window. 

If a crowd $Cr$ is an aggregation in the current window $W_c$ and there is no super-crowd $Cr' \supset Cr$ that is an aggregation, then $Cr$ is a closed aggregation in $W_c$.
However, even if some clusters of a crowd $Cr$ do not include enough participators, its sub-crowd may still be an aggregation if the sub-crowd satisfies the constraint of $m_p$ participators in each cluster.
Therefore, in each window we may still need to detect the closed aggregations by exploring sub-crowds space although we aim to find the closed crowds to reduce redundant crowd searching.

\begin{Def}[Group]
\label{Group}
Given an aggregation $Cr$ in the current window $W$, the set of all participators in $Cr$ is called a group that corresponds to $Cr$ in $W$.
\end{Def}

There is a one-to-one correspondence between a group and the aggregation in which it appears.
We can also refer the concept of group to the \textit{core members} of the aggregation in current window.
Next we introduce the concept of evolving group that is exactly the problem which this paper studies.

%If $Gr_j$ is evolved from $Gr_i$, and $Gr_k$ is evolved from $Gr_j$, we denote the evolving relationship as evolving group chain $\langle Gr_i, Gr_j, Gr_k \rangle$.
%Given a group $Gr_1$ in $W_i$, a group $Gr_2$ in $W_{i+1}$, a member support threshold $m_g$, $Gr_{2}$ is called evolved from $Gr_1$ in streaming environments iff $|Gr_1\cap Gr_2|\geq m_g*Min(|Gr_1|,|Gr_2|)$, denoted as $\langle Gr_1, Gr_2 \rangle$.

\begin{Def}[Evolving Group]
\label{EvolvingGroup}
Given a group $Gr_1$ in $W_i$, a group $Gr_2$ in $W_{i+1}$, a fraction support threshold $m_g$, $Gr_{2}$ is evolved from $Gr_1$ in streaming environments iff $|Gr_1\cap Gr_2|\geq m_g*Min(|Gr_1|,|Gr_2|)$, denoted as $\langle Gr_1, Gr_2 \rangle$.
Given a lifetime support threshold $k_g$, an evolving group is a chain of groups during at least $k_g$ consecutive windows, i.e.,  $\langle Gr_{t_a}, Gr_{t_{a+1}}, \dots, Gr_{t_b} \rangle$, where $Gr_{t_{i+1}}$ is evolved from $Gr_{t_{i}}$ $(i=a,a+1,\dots, b-1)$ and $|b-a+1|\geq k_g$.
\end{Def}

Essentially, evolving group can capture the change of \textit{core members} of aggregations over time, i.e., the development of group patterns and causal relationships of groups between adjacent windows, which provides great opportunities for analyzing the impact of the different types of group events on current situation in stream context.

% The list of major symbols and notations in this article is summarized in Table~\ref{notations}.

\begin{figure}[h]
\begin{center}
\includegraphics[width=0.85\linewidth]{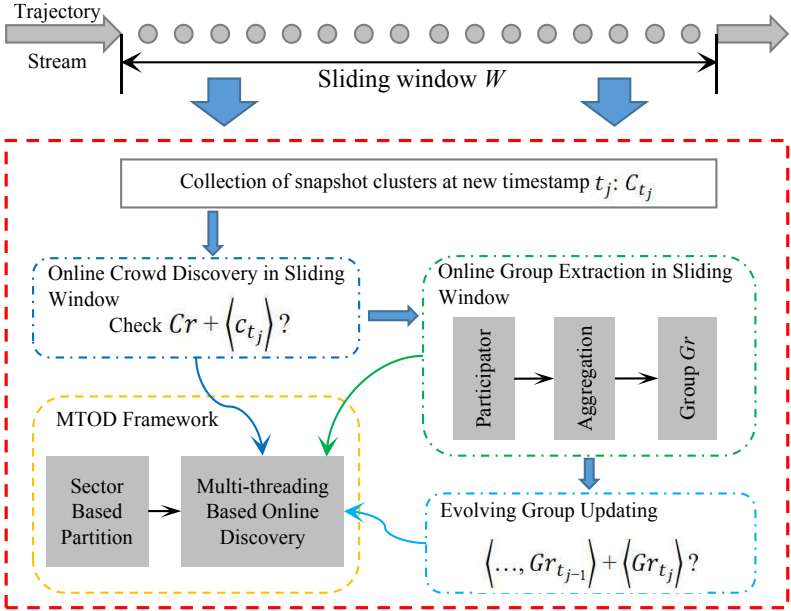}
\caption{Architectures of DEG and MTOD frameworks.}
% \vspace{-5mm}
\label{fig.framework}
\end{center}
\end{figure}

\section{DEG: Discovering Evolving Groups}
\label{sec.alg}
In this section, we now present our DEG method for \underline{d}iscovering all \underline{e}volving \underline{g}roups over trajectories in streaming window environment. 
% {\color{red}
Figure~\ref{fig.framework} illustrates the overall architectures of DEG and MTOD frameworks.
% } 
Basically, our DEG method includes three phases: online crowd discovery, online group detection, and evolving group updating.
In the first phase, we first perform density-based clustering (DBSCAN) on the trajectory points of objects to find all the snapshot clusters at new timestamp. Thus we get the set of snapshot clusters $C_{t_j}$ = $\{c_{t_j}^1 ,c_{t_j}^2,\dots,c_{t_j}^n \}$ at timestamp $t_j$. Then we try to find all closed crowds based on the clusters in the current window.
The second phase aims to validate each closed crowd to see if it is or contains a closed aggregation, and then extract the corresponding group of core members. In third phase, the evolving groups are checked and updated as the window slides. 
% \textcolor{red}{
MTOD framework is presented in Section~\ref{sec.mtod}.
% }

\subsection{Online Crowd Discovery in Sliding Window}
We first introduce the incremental algorithm, called \textbf{Incre}, which discovers the closed crowds in sliding window environment. It first utilizes the obtained clusters (by DBSCAN) to construct the sequences of clusters at new timestamp. Then the sequences of clusters are used as candidates to validate if they are closed. By leveraging overlap of adjacent windows in sliding window, our incremental algorithm successfully avoids the redundant crowd searching at previous timestamp.
Using an example-driven approach, we now describe how Incre algorithm detects the closed crowds in sliding window.

\begin{table}[h]
\begin{center}
\caption{Example 1. snapshot clusters in sliding window}
\label{exam1}
% \vspace{5mm}
\begin{tabular}{p{0.8cm}|p{0.8cm}|p{0.8cm}|p{0.8cm}|p{0.8cm}}
\hline
$t_{1}$ & $t_{2}$ & $t_{3}$ & $t_{4}$ & $t_{5}$  \\
\hline\hline
        & $c_2^1$ &         &         & $c_5^2$ \\
\hline
        &         & $c_3^1$ &         &         \\
\hline
$c_1^1$ &         &         & $c_4^1$ & $c_5^1$ \\
\hline
$c_1^2$ &         &         &         &         \\
\hline
\end{tabular}
\end{center}
\end{table}

\begin{example}
\label{exap1}
We use the example in Table~\ref{exam1} to illustrate the discovery of closed crowds in sliding window. To keep its simplicity, we assume the two clusters in the same row or adjacent rows are close to each other, i.e., their Hausdorff distance is not greater than $d*\Delta t$.
\end{example}

\begin{table}[htbp]
\newcommand{\tabincell}[2]{\begin{tabular}{@{}#1@{}}#2\end{tabular}}
\begin{center}
\caption{Illustration of closed crowd discovery}
\label{crowds}
% \vspace{-2mm}
% \begin{tabular}{p{1cm}|p{2.5cm}|p{1.5cm}|p{2.5cm}|p{2.5cm}}
\begin{tabular}{c|c|c|c|c}
\hline
Timestamp  & $CanSet$ & $endclu$ & $status$  & $CloCr$\\
\hline
  $t_1$     & $\langle c_1^1\rangle$;$\langle c_1^2\rangle$ &  $c_1^1$;$c_1^2$  &    &    \\
\hline
  $t_2$     & $\langle c_2^1\rangle$;$\langle c_1^1\rangle$;$\langle c_1^2\rangle$ & $c_2^1$;$c_1^1$; $c_1^2$ & \tabincell{c}{$c_1^1.st$=$unmth$ \\ $c_1^2.st$=$unmth$ } & \\
\hline
  $t_3$     &\tabincell{c}{ $\langle c_2^1, c_3^1\rangle$; \\ $\langle c_1^1, c_3^1\rangle$; \\ $\langle c_2^1\rangle$; $\langle c_1^1\rangle$; $\langle c_1^2\rangle$} &$c_3^1$;$c_2^1$; $c_1^1$;$c_1^2$ &\tabincell{c}{$c_2^1.st$=$match$ \\ $c_1^1.st$=$match$ \\$c_1^2.st$=$unmth$} &  \\
\hline
  $t_4$     &\tabincell{c}{$\langle c_2^1, c_3^1, c_4^1 \rangle$; \\ $\langle c_1^1, c_3^1, c_4^1\rangle$; \\$\langle c_1^2, c_4^1\rangle$; \\ $\langle c_2^1, c_3^1\rangle$; \\ $\langle c_1^1, c_3^1\rangle$; \\ $\langle c_2^1\rangle$; $\langle c_1^1\rangle$; $\langle c_1^2\rangle$} &$c_4^1$;$c_3^1$; $c_2^1$;$c_1^1$; $c_1^2$ &\tabincell{c}{$c_3^1.st$=$match$ \\ $c_2^1.st$=$match$ \\ $c_1^1.st$=$match$ \\$c_1^2.st$=$match$} & \tabincell{c}{$\langle c_2^1, c_3^1, c_4^1\rangle$;\\ $\langle c_1^1, c_3^1, c_4^1\rangle$}\\
\hline
\sout{$t_1$} & \tabincell{c}{$\langle c_2^1, c_3^1, c_4^1 \rangle$; \\ \sout{$\langle c_3^1, c_4^1\rangle$;} \\ \sout{$\langle c_4^1\rangle$;}\\ $\langle c_2^1, c_3^1\rangle$; \sout{$\langle c_3^1\rangle$;} \\ $\langle c_2^1\rangle$ } & $c_4^1$;$c_3^1$; $c_2^1$ &   & \\
\hline
\multirow{2}{*}{$t_5$} &  \multirow{2}{*}{\tabincell{c}{$\langle c_2^1,c_3^1, c_4^1 , c_5^1 \rangle$; \\$\langle c_2^1, c_3^1, c_5^2\rangle$; \\ $\langle c_2^1,c_3^1, c_4^1 \rangle$; \\ \sout{$\langle c_2^1, c_3^1\rangle$;}\\ \sout{$\langle c_2^1\rangle$} }}& \multirow{2}{*}{$c_5^1$;$c_5^2$; $c_4^1$;$c_3^1$; $c_2^1$ } & \tabincell{c}{$c_4^1.st$=$match$ \\ $c_3^1.st$=$match$ \\ $c_2^1.st$=$match$} & \tabincell{c}{$\langle c_2^1, c_3^1, c_4^1, c_5^1 \rangle$;\\ $\langle c_2^1, c_3^1, c_5^2 \rangle$}
  \\ \cline{4-4}
 & & & \tabincell{c}{$c_4^1.st$=$unmth$ \\ $c_3^1.st$=$match$ \\ $c_2^1.st$=$match$} &\\
\hline
\end{tabular}
\end{center}
\end{table}

We first introduce the data structures used in our algorithm to discover closed crowds in streaming windows.
We use $CanSet$ to store all closed crowd candidates in the current window $W_c$. To support closure check, each candidate $Cr$ maintains a $Cr.endclu$ to indicate the ending cluster of $Cr$ in $W_c$. For each ending cluster, we need to validate whether there exists a new cluster which can be appended to $Cr$ in the new window. Therefore, a status flag of $endclu$, denoted as $endclu.st$, is maintained for each ending cluster, which is set to $uncheck$ initially at the beginning of each new window. If there exists a new cluster $c_{new}$ that can be appended to the ending cluster $endclu$, then $endclu.st$=$match$, otherwise $endclu.st$ is set to unmatch (abbr. $unmth$). We use $CloCr$ to denote the closed crowds at each window.
In particular, it is easy to observe that a closed crowd in the current window can be directly checked in the next window according to the following lemma.

\begin{Lem}
\label{lem1}
Given a closed crowd $Cr = \langle c_{t_{i+a}}, \dots, c_{t_{j-b}}\rangle$ $(a\geq 0, b\geq 0)$ in window $W_{t_i}^{t_j}$, if $\exists c_{t_{j+1}}$ in window $W_{t_{i}}^{t_{j+1}}$ such that $Cr_{new} = \langle c_{t_{i+a}}, \dots, c_{t_{j-b}}, c_{t_{j+1}}\rangle$ is a new crowd, then $Cr_{new}$ is a closed crowd in $W_{t_{i}}^{t_{j+1}}$.
\end{Lem}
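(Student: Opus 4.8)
The plan is to argue by contradiction, reducing the closure of $Cr_{new}$ in the extended window $W_{t_i}^{t_{j+1}}$ to the assumed closure of $Cr$ in $W_{t_i}^{t_j}$. Suppose, toward a contradiction, that $Cr_{new}$ is \emph{not} closed in $W_{t_i}^{t_{j+1}}$. Then by the definition of closedness there is a crowd $Cr'$ in $W_{t_i}^{t_{j+1}}$ with $Cr' \supsetneq Cr_{new}$. The goal is to delete from $Cr'$ its cluster at the newest timestamp and show that what remains is a crowd that \emph{strictly} contains $Cr$ and lies entirely inside the old window $W_{t_i}^{t_j}$, which contradicts the closure of $Cr$ there.

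First I would pin down the shape of $Cr'$. Because a crowd holds at most one cluster per timestamp and $Cr' \supseteq Cr_{new} \ni c_{t_{j+1}}$, the unique cluster of $Cr'$ at timestamp $t_{j+1}$ must be exactly $c_{t_{j+1}}$; and since $t_{j+1}$ is the largest timestamp available in $W_{t_i}^{t_{j+1}}$, this cluster is the terminal element of the sequence $Cr'$. Hence every other cluster of $Cr'$ sits at a timestamp in $[t_i,t_j]$, and in particular any cluster witnessing $Cr' \supsetneq Cr_{new}$ occupies a timestamp different from $t_{j+1}$. I would then set $Cr'' := Cr' \setminus \{c_{t_{j+1}}\}$ and verify the three requirements of Definition~\ref{Def-crowd}. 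Requirement (2) is inherited at once, since the clusters of $Cr''$ form a subset of those of the crowd $Cr'$. Requirement (3) is the delicate point: deleting the terminal cluster $c_{t_{j+1}}$ only discards the final adjacent pair, so each adjacency surviving in $Cr''$ was already an adjacency of $Cr'$ and still meets the $d\cdot\Delta t$ bound, as the interior Hausdorff distances and time gaps are untouched. For requirement (1), the containment $Cr'' = Cr' \setminus \{c_{t_{j+1}}\} \supseteq Cr_{new} \setminus \{c_{t_{j+1}}\} = Cr$ together with the surviving extra cluster gives $Cr'' \supsetneq Cr$, whence $|Cr''| > |Cr| \geq k_c$. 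Thus $Cr''$ is a crowd of $W_{t_i}^{t_j}$ strictly containing $Cr$, contradicting the closure of $Cr$; so no such $Cr'$ exists and $Cr_{new}$ is closed.

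The main obstacle I anticipate is requirement (3): one must be certain that excising the last cluster cannot break the distance condition for the remaining consecutive pairs. This is precisely where it matters that adjacency in a crowd is imposed only between \emph{consecutive} clusters rather than between all pairs, and it is the same monotonicity that underlies the downward-closure remark stated just before the lemma. A secondary point worth making explicit is the one-cluster-per-timestamp convention, which forces the cluster of $Cr'$ at $t_{j+1}$ to be $c_{t_{j+1}}$ and to be the terminal element, so that the deletion genuinely places $Cr''$ within the old window $W_{t_i}^{t_j}$ and the reduction to the closure hypothesis goes through.
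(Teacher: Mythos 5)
Your proof is correct. The paper states Lemma~\ref{lem1} without any proof (it is presented as an observation and immediately used), so there is no authorial argument to compare against; your contradiction argument is a complete and sound justification, and it correctly isolates the two points that actually need care --- that the one-cluster-per-timestamp convention forces $c_{t_{j+1}}$ to be the terminal element of any superset crowd $Cr'$, and that deleting the terminal cluster removes only the final adjacent pair so requirement~(3) of Definition~\ref{Def-crowd} survives in $Cr''$, yielding a crowd in $W_{t_i}^{t_j}$ that strictly contains $Cr$ and contradicts its closure.
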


%\begin{Cor}
%Given a closed crowd $Cr = \langle c_{t_i}, c_{t_{i+a}}, \dots, c_{t_{j-b}}\rangle$ $(a\geq 1, b\geq 0)$ in window $W_{t_i}^{t_j}$, if $\exists c_{t_{j+1}}$ in window $W_{t_{i+1}}^{t_{j+1}}$ such that $Cr_{new} = \langle c_{t_{i+a}}, \dots, c_{t_{j-b}}, c_{t_{j+1}}\rangle$ is a new crowd, then $Cr_{new}$ is a closed crowd.
%\end{Cor}

By Lemma \ref{lem1}, we can discover the closed crowds in the first window by incrementally validating the clusters at each new timestamp.
Table \ref{crowds} shows the intermediate status of crowds in Example~\ref{exap1} at each timestamp.
Suppose $w=4$ and $k_c=3$. There are two clusters $c_1^1$ and $c_1^2$ at time $t_1$, hence two candidates $\langle c_1^1\rangle$ and $\langle c_1^2\rangle$ are stored in $CanSet$, and the ending clusters are inserted into $endclu$. After $t_2$ arrives, we get a cluster $c_2^1$, however, $c_2^1$ is very far away from the clusters at $t_1$. Therefore, no crowd candidate is updated, and the statuses of $c_1^1$ and $c_1^2$ are labeled to $unmth$.
At $t_3$, there is a cluster $c_3^1$ being close to $c_2^1$ and $c_1^1$, thus $\langle c_2^1, c_3^1\rangle$ and $\langle c_1^1, c_3^1\rangle$ are inserted into $CanSet$. $c_2^1.st$ and $c_1^1.st$ are set to $match$, while $c_1^2.st$ is set to $unmth$. However, the candidates of last timestamp are still stored in $CanSet$ for validating potential crowd candidates in next timestamp.
After $t_4$ arrives, we find that $c_4^1$ is close with $c_3^1$, thus $\langle c_2^1, c_3^1, c_4^1\rangle$ and $\langle c_1^1, c_3^1, c_4^1\rangle$ are generated as closed crowd candidates in window $W_{t_1}^{t_4}$. Hence $c_3^1.st$=$match$.
Moreover, we also observe that $\langle c_1^1, c_4^1\rangle$ is already included in $\langle c_1^1, c_3^1, c_4^1\rangle$ although $c_1^1$ is also close to $c_4^1$ w.r.t. $d*\Delta t$. $\langle c_2^1, c_4^1\rangle$ is also included in $\langle c_2^1, c_3^1, c_4^1\rangle$ even assuming that $c_2^1$ was close to $c_4^1$. Namely, any other candidates in $CanSet$ ending with $c_2^1$ or $c_1^1$ must not form a closed crowd in the current window.
Therefore, we set $c_2^1.st$ and $c_1^1.st$ to $match$ directly instead of validating the candidates with $c_4^1$ again. We can further deduce the following corollary.

%\vspace{-0.2cm}
\begin{Cor}
\label{cor1}
Given a crowd candidate $Cr = \langle c_{t_{i+a}}, \dots, c_{t_{j-b}}\rangle$ $(a\geq 0, b\geq 1)$ in window $W_{t_i}^{t_j}$, if $c_{t_{j-b}}$ is already included in a closed crowd candidate in $W_{t_i}^{t_{j+1}}$, then $Cr +\langle c_{t_{j+1}} \rangle$ must not form a closed crowd in $W_{t_i}^{t_{j+1}}$.
\end{Cor}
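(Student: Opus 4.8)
The plan is to prove the contrapositive-flavoured statement directly, by exhibiting a crowd in $W_{t_i}^{t_{j+1}}$ that strictly contains $Cr+\langle c_{t_{j+1}}\rangle$; producing such a proper super-crowd immediately witnesses that $Cr+\langle c_{t_{j+1}}\rangle$ is not closed. First I would dispose of the trivial case: if $Cr+\langle c_{t_{j+1}}\rangle$ is not even a crowd (fewer than $k_c$ clusters, or a cluster with fewer than $m_c$ objects), then the conclusion holds vacuously, so I may assume it is a genuine crowd of length at least $k_c$ and concentrate on showing non-closure.

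Next I would unpack the hypothesis with Lemma~\ref{lem1}. By the incremental construction, the closed crowd candidate $Cr^{*}$ of $W_{t_i}^{t_{j+1}}$ that now includes $c_{t_{j-b}}$ arises by appending the new cluster $c_{t_{j+1}}$ to a closed crowd of the previous window; hence $Cr^{*}$ ends at $c_{t_{j+1}}$ and has the form
\[
Cr^{*}=\langle \dots, c_{t_{j-b}}, c_{r_1}, \dots, c_{r_m}, c_{t_{j+1}}\rangle ,
\]
where $c_{r_1},\dots,c_{r_m}$ are exactly the clusters of $Cr^{*}$ lying strictly between $c_{t_{j-b}}$ and $c_{t_{j+1}}$. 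The crucial observation is that $m\geq 1$: since $b\geq 1$, the cluster $c_{t_{j-b}}$ is not the most recent cluster, and it is included in $Cr^{*}$ only as an interior element, so the cluster at which the extension actually took place—necessarily at a timestamp strictly later than $t_{j-b}$ and strictly earlier than $t_{j+1}$—sits between them.

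Then I would splice the tail of $Cr^{*}$ into $Cr$ and set
\[
Cr' = \langle c_{t_{i+a}}, \dots, c_{t_{j-b}}, c_{r_1}, \dots, c_{r_m}, c_{t_{j+1}}\rangle .
\]
The timestamps are strictly increasing, so this is a well-formed sequence, and I would verify the three crowd conditions. The support condition is immediate, since every cluster of $Cr'$ comes from the crowd $Cr$, from the crowd $Cr^{*}$, or is $c_{t_{j+1}}$ (which carries at least $m_c$ objects because $Cr+\langle c_{t_{j+1}}\rangle$ is a crowd); the length condition holds because $Cr'$ is strictly longer than $Cr+\langle c_{t_{j+1}}\rangle$. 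The point I would emphasise is that \emph{no new distance verification is required}: the adjacencies internal to the prefix $\langle c_{t_{i+a}},\dots,c_{t_{j-b}}\rangle$ are inherited from $Cr$, while each of $c_{t_{j-b}}\!\to\! c_{r_1}$, $c_{r_l}\!\to\! c_{r_{l+1}}$ and $c_{r_m}\!\to\! c_{t_{j+1}}$ is a consecutive-pair adjacency of $Cr^{*}$ and therefore already satisfies the $d\cdot\Delta t$ bound. Hence $Cr'$ is a crowd, and because $m\geq 1$ it is a proper super-crowd $Cr'\supset Cr+\langle c_{t_{j+1}}\rangle$, so the latter is not closed.

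The main obstacle is precisely establishing $m\geq 1$, i.e. that $c_{t_{j-b}}$ genuinely appears as an interior cluster of $Cr^{*}$ rather than immediately before $c_{t_{j+1}}$; this does not follow from the crowd axioms alone but from the meaning of ``already included in a closed crowd candidate'' within the incremental procedure, namely that the extension point of $Cr^{*}$ is a cluster strictly later than $c_{t_{j-b}}$. Once that is pinned down, the rest reduces to reading off the distance constraints directly from $Cr^{*}$ rather than re-deriving them, which is exactly why invoking Lemma~\ref{lem1} (closure is preserved under the canonical one-cluster extension) is the right lever, in place of a fresh triangle-inequality computation.
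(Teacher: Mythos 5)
Your argument is correct and matches the paper's (implicit) reasoning: the paper gives no formal proof of Corollary~\ref{cor1}, only the example-driven observation that a candidate ending at an already-covered cluster, once extended by $c_{t_{j+1}}$, is contained in the longer candidate covering that cluster --- which is exactly the proper-super-crowd witness you construct. Your two refinements --- splicing the prefix of $Cr$ onto the tail of $Cr^{*}$ so that $Cr\subseteq Cr^{*}$ need not be assumed, and flagging that $m\geq 1$ rests on the operational meaning of ``already included'' (the covering candidate's extension point lies strictly after $c_{t_{j-b}}$, as it indeed does in the corollary's actual use inside Lemma~\ref{lem3}) --- go beyond what the paper writes down and correctly isolate the only point where the statement, read literally, is loose.
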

%\vspace{-0.2cm}

Then we can quickly detect all closed crowds in $W_{t_1}^{t_4}$. Namely, the candidates that end with clusters at $t_4$ or clusters whose status are $unmth$, and have number of clusters not less than $k_c(3)$. So $\langle c_2^1, c_3^1, c_4^1\rangle$ and $\langle c_1^1, c_3^1, c_4^1\rangle$ are reported as closed crowds in final.

After $t_5$ arrives, the window slides from $W_{t_1}^{t_4}$ to $W_{t_2}^{t_5}$. Since time $t_1$ has expired, all clusters at $t_1$ are removed from $CanSet$ and $endclu$, as shown in Table \ref{crowds}.
But we can see that some sub-crowds of existing candidates incur due to the remove of $t_1$. \textit{Which sub-crowds should be removed? And which sub-crowds should be reserved?} We observe that the sub-crowd that shares the ending clusters with its super-crowd should be removed from $CanSet$ because it must be included in a close crowd candidate in next window. On the other hand, the sub-crowd that does not include the ending cluster of its super-crowd should be reserved since it may form a close crowd candidate with new cluster in next window.
Therefore, $\langle c_2^1, c_3^1\rangle$ is reserved, while other sub-crowd candidates are removed from $CanSet$.
At $t_5$, there are two clusters $c_5^1$ and $c_5^2$. For validating these two clusters, we need to store two copes of $CanSet$, $endclus$ and their status. For $c_5^1$, we generate a new crowd candidate $\langle c_2^1, c_3^1, c_4^1, c_5^1\rangle$ since $c_5^1$ is near $c_4^1$, and the statuses of $c_4^1$, $c_3^1$ and $c_2^1$ are labeled to $match$.
For $c_5^2$, since $c_5^2$ is far from $c_4^1$ but is close to $c_3^1$, we generate a new crowd candidate $\langle c_2^1, c_3^1, c_5^2 \rangle$ to insert into $CanSet$, and set $c_4^1.st$ to $unmth$, $c_3^1.st$ and $c_2^1.st$ to $match$.

Again, we can output the closed crowds from $CanSet$ in $W_{t_2}^{t_5}$, $\langle c_2^1, c_3^1, c_4^1, c_5^1 \rangle$ and $\langle c_2^1, c_3^1, c_5^2 \rangle$, namely, candidates ending with clusters at $t_5$ or clusters whose $st$ is $unmth$ at all copies.

\subsubsection{Optimization strategies.}

We next present two optimization strategies to minimize computation cost. It is easy to observe that the number of $CanSet$ has a great impact on cost of our Incre algorithm. We first propose the Lemma~\ref{lem2} to reduce the unnecessary maintained candidates.

\begin{Lem}
\label{lem2}
Given a crowd candidate $Cr = \langle c_{t_{i+a}}, \dots, c_{t_{j-b}}\rangle$ $(a\geq 0, b\geq 0)$ in window $W_{t_i}^{t_j}$, and support threshold $k_c$, if $b>j-i+1-k_c$, then $Cr$ is not a crowd or a part of a crowd from $W_{t_{i}}^{t_{j}}$ till to expiration of $Cr$.
\end{Lem}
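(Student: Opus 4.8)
The plan is to show that, under the hypothesis $b > j-i+1-k_c$, no crowd can ever contain $Cr$ as an initial segment---in the current window $W_{t_i}^{t_j}$ or in any later window before $Cr$ expires---by bounding from above the number of clusters (equivalently, the number of distinct timestamps) such a crowd can occupy, and showing this bound lies strictly below $k_c$. Writing $w = j-i+1$ for the window size, the hypothesis reads $w - b < k_c$, so it suffices to prove that every crowd having $Cr$ as a prefix contains at most $w - b$ clusters; requirement~(1) of Definition~\ref{Def-crowd} is then violated and the claim follows.

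First I would pin down the lifespan of $Cr$. Its earliest cluster $c_{t_{i+a}}$ is fixed, and a window of width $w$ contains $t_{i+a}$ only up to the window $W_{t_{i+a}}^{t_{j+a}}$ (whose end is $t_{i+a}+w-1 = t_{j+a}$); afterwards $c_{t_{i+a}}$ has expired and so has $Cr$. Hence any crowd $\widehat{Cr}$ obtained by extending $Cr$ must lie inside a single window ending no later than $t_{j+a}$, so all of its clusters sit at timestamps in the range $[t_{i+a},\,t_{j+a}]$.

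The crux is to argue that the $b$ trailing slots $t_{j-b+1},\dots,t_{j}$ of the current window cannot contribute any cluster to such a crowd. Since $Cr$ survives as a candidate whose ending cluster is $c_{t_{j-b}}$ after the window $W_{t_i}^{t_j}$ has been fully processed, the incremental construction of \textbf{Incre} has already tested $c_{t_{j-b}}$ against every cluster occurring at each of $t_{j-b+1},\dots,t_j$ and found none adjacent within the corresponding threshold $d*\Delta t$; otherwise the candidate would already end later than $t_{j-b}$. Because the clusters recorded at these past timestamps are fixed, this non-adjacency persists in every later window, so the first cluster appended to $Cr$ in any extension must sit at a timestamp strictly greater than $t_j$. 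Consequently the clusters of $\widehat{Cr}$ occupy only timestamps in $[t_{i+a},\,t_{j-b}]\cup[t_{j+1},\,t_{j+a}]$.

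Counting distinct timestamps then finishes the argument: the first interval offers at most $(j-b)-(i+a)+1 = w-a-b$ timestamps and the second at most $(j+a)-(j+1)+1 = a$, for a total of at most $w-b$. As a crowd carries at most one cluster per timestamp, $\widehat{Cr}$ has at most $w-b < k_c$ clusters and cannot satisfy requirement~(1). I expect the main obstacle to be the third paragraph: making precise, from the invariant maintained by \textbf{Incre}, that a surviving candidate ending at $t_{j-b}$ really has been checked against (and rejected by) \emph{every} later cluster inside the window, so that no extension can reuse a trailing slot. Once this non-adjacency is nailed down, the lifespan bound and the final counting are routine.
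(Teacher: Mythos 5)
Your overall counting---at most $w-b$ timestamps can carry a cluster of any crowd extending $Cr$, and $w-b<k_c$---is essentially the same arithmetic as the paper's, which counts the complementary quantity (a crowd in a window of size $w$ may be absent from at most $w-k_c$ timestamps, while $Cr$ and any future extension of it are absent from at least $b>w-k_c$ of them). The genuine problem is your third paragraph, and it is not merely a detail to be ``nailed down'': the invariant you invoke is false. When a new cluster $c_{t_k}$ with $t_{j-b}<t_k\leq t_j$ is found adjacent to the ending cluster $c_{t_{j-b}}$, Incre inserts the extension $Cr+\langle c_{t_k}\rangle$ as a \emph{new} candidate and leaves the old candidate $Cr$, still ending at $c_{t_{j-b}}$, in $CanSet$ (see Table~\ref{crowds}, where $\langle c_2^1,c_3^1\rangle$ coexists with $\langle c_2^1,c_3^1,c_4^1\rangle$); moreover, under the optimization of Corollary~\ref{cor1} the adjacency test against $c_{t_{j-b}}$ may be skipped entirely once that cluster is covered by a longer candidate. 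So the survival of a candidate ending at $t_{j-b}$ tells you nothing about non-adjacency to clusters at $t_{j-b+1},\dots,t_j$, and you cannot conclude that the first cluster appended to $Cr$ must sit after $t_j$.

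The paper's proof sidesteps this entirely: it considers only extensions of $Cr$ by clusters at timestamps $t_{j+1}$ or later---the only extensions the algorithm will ever attempt on this particular candidate, since every cluster up to $t_j$ has already been processed and any adjacency with one of them is recorded in a separate, longer candidate. For such an extension the $b$ timestamps $t_{j-b+1},\dots,t_j$ lie strictly between $c_{t_{j-b}}$ and the next connected cluster and are contained in every window that contains both, so more than $w-k_c$ timestamps of that window carry no cluster and requirement (1) of Definition~\ref{Def-crowd} fails; the same count already shows $Cr$ itself is not a crowd in $W_{t_i}^{t_j}$. If you insist on the literal ``prefix of any crowd'' reading you attempted, the trailing-slot case must be handled not by a non-existence argument but by a correctness-of-pruning argument (any crowd that uses a trailing-slot cluster is represented by a different candidate in $CanSet$, so discarding $Cr$ loses nothing). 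As written, your paragraph 3 is a gap that cannot be closed the way you propose; your lifespan bound and final count are fine but do not rescue it.
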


%Lemma \ref{lem2} is intuitive. We omit the proof due to the space limitations.
\begin{proof}
By Definition~\ref{Def-crowd}, the number of clusters of a crowd in a window is not less than $k_c$, thus the number of timestamps at which absent clusters in the crowd is not greater than $w-k_c$. Here, $w=j-i+1$. So if $b>j-i+1-k_c$, there are no clusters in $Cr$ for at least $w-k_c$ timestamps, namely $Cr$ is not a crowd in $W_{t_{i}}^{t_{j}}$.

Moreover, $Cr$ will not be updated to a crowd in next window since the gap between $c_{t_{j-b}}$ and its next connected cluster must be greater than $w-k_c$. Therefore, $Cr$ will never be a part of a crowd in future windows even if all clusters in $Cr$ expire with window sliding.
\end{proof}

Applying Lemma~\ref{lem2}, the optimized $CanSet$ in window $W_{t_2}^{t_5}$ is shown as last row in Table~\ref{crowds}.

Furthermore, we also observe that the order of candidates in $CanSet$ and their corresponding ending clusters can improve the efficiency of our algorithm by pruning the unnecessary verification for new clusters in the new window.

\begin{Lem}
\label{lem3}
Given a crowd candidate $Cr = \langle c_{t_{i+a}}, \dots, c_{t_{j-b}}\rangle$ $(a\geq 0, b\geq 0)$ in window $W_{t_i}^{t_j}$, and a new cluster $c_{t_{j+1}}$, if $d_H(c_{t_{j-b}},c_{t_{j+1}})\leq d*(b+1)$, then there is no need to validate the candidates that end with any $c$ in $Cr$-$\langle c_{t_{j-b}} \rangle$ with $c_{t_{j+1}}$ again.
\end{Lem}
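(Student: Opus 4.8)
The plan is to reduce the claim to the closure statement already captured by Corollary~\ref{cor1}, using the metric (triangle-inequality) property of the Hausdorff distance only to explain why the distance checks we are skipping would otherwise all succeed. First I would unpack the hypothesis quantitatively: the time difference between the ending cluster $c_{t_{j-b}}$ and the new cluster $c_{t_{j+1}}$ is exactly $\Delta t=(j+1)-(j-b)=b+1$, so the assumption $d_H(c_{t_{j-b}},c_{t_{j+1}})\le d*(b+1)$ is precisely condition (3) of Definition~\ref{Def-crowd} applied to the adjacent pair $(c_{t_{j-b}},c_{t_{j+1}})$. Hence $c_{t_{j+1}}$ may legitimately be appended to $Cr$, producing the crowd $Cr_{new}=\langle c_{t_{i+a}},\dots,c_{t_{j-b}},c_{t_{j+1}}\rangle$.

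Next, since the candidates maintained in $CanSet$ are by construction closed crowd candidates of the current window, $Cr$ is a closed crowd candidate in $W_{t_i}^{t_j}$; by Lemma~\ref{lem1}, $Cr_{new}$ is therefore a closed crowd candidate in $W_{t_i}^{t_{j+1}}$. The structural observation that drives the proof is that every cluster $c\in Cr-\langle c_{t_{j-b}}\rangle$ still occurs in $Cr_{new}$, i.e. each such interior cluster is now \emph{included in a closed crowd candidate} of $W_{t_i}^{t_{j+1}}$.

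I would then invoke Corollary~\ref{cor1} directly, letting its ``$c_{t_{j-b}}$'' be an arbitrary interior cluster $c$ and letting its candidate be any candidate $Cr'$ whose ending cluster is $c$. Because $c$ is already included in the closed crowd candidate $Cr_{new}$, Corollary~\ref{cor1} yields that $Cr'+\langle c_{t_{j+1}}\rangle$ cannot be a closed crowd in $W_{t_i}^{t_{j+1}}$ (it is subsumed, the chain from $c$ to $c_{t_{j+1}}$ inside $Cr_{new}$ collapsing to $\langle c,c_{t_{j+1}}\rangle$). Consequently, appending $c_{t_{j+1}}$ to any candidate ending with such $c$ never produces a new closed crowd, so those validations are redundant and may be skipped, which is exactly the lemma. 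To confirm that this pruning discards only useless work rather than a feasible connection, I would note that the triangle inequality for $d_H$ gives, for an interior cluster $c=c_{t_{j-b'}}$ with $b'>b$, the bound $d_H(c,c_{t_{j+1}})\le d*(b'+1)$ by telescoping the pairwise bounds along $Cr_{new}$; thus the skipped distance checks would in fact all pass, and the saving is purely in avoiding the generation of subsumed crowds.

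The main obstacle I anticipate is the careful justification of this subsumption step and its reconciliation with the processing order of $CanSet$. The argument needs the candidate whose ending cluster is closest to the window boundary to be examined before the candidates ending at its interior clusters, so that $Cr_{new}$ is already available as the witnessing closed crowd candidate when those interior clusters are reached. I would therefore make explicit the invariant that $CanSet$ is scanned in decreasing order of ending-cluster timestamp, and verify under this order that every interior cluster $c$ is indeed ``already included in a closed crowd candidate'' at the moment its candidates would otherwise be validated, i.e. that the precondition required by Corollary~\ref{cor1} genuinely holds at that point.
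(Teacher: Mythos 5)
Your proof is correct and follows exactly the route the paper takes: the paper's entire proof is the one-line remark that Lemma~\ref{lem3} follows from Corollary~\ref{cor1}, and your argument (append $c_{t_{j+1}}$ to $Cr$ to obtain a closed crowd candidate $Cr_{new}$ containing every interior cluster, then apply Corollary~\ref{cor1} to each candidate ending at such a cluster) is precisely the elaboration of that remark. Your additional observations --- the triangle-inequality telescoping showing the skipped checks would all pass, and the need to scan $CanSet$ in decreasing order of ending-cluster timestamp --- are sound and in fact match the processing order the paper adopts ("last time first order").
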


Lemma \ref{lem3} can be easily proved by Corollary \ref{cor1}.
%Since $d_H(c_{t_{j-b}},c_{t_{j+1}})\leq d*(b+1)$, $Cr$+ $\langle c_{t_{j+1}} \rangle$ forms a crowd in $W_{t_i}^{t_j+1}$. Thus the candidates that end with any $c$ in $Cr$-$\langle c_{t_{j-b}} \rangle$ would not form a closed crowd in $W_{t_i}^{t_j+1}$. Namely, the candidates appended plus  $c_{t_{j+1}}$ already be included in a closed crowd candidate that ends with $c_{t_{j+1}}$. Therefore, there no need to validate them again with $c_{t_{j+1}}$.
Based on the observation, we sort $CanSet$ in the last time first order by ending clusters, and also store the corresponding ending clusters in last time first order in $endclu$. For validating a new cluster $c_{new}$, we traverse the candidates with the latest time ending cluster, if $c_{new}$ is appended into a candidate $Cr$, all clusters in $Cr$ are set to $match$ if they are in $endclu$ copy, i.e., we will skip the candidates that end with the clusters in next traversal on $CanSet$ for $c_{new}$.

%\vspace{-0.3cm}
\subsubsection{Cluster pruning strategy.}
%\textbf{Pruning cluster strategy.}
Indexing clusters with R-tree or grid can improve the efficiency of discovery algorithm~\cite{zheng2014online}. However the indexing methods suffer from two major drawbacks. First, indexing does not work very well in dynamic streaming environments in which the index has to be continuously rebuilt when streaming data evolves. Second, it is also very expensive for mapping the clusters that include lots of objects into index. Here, we propose a cluster pruning strategy to reduce the number of Hausdorff distance computation.

Specifically, we use the mean center of cluster $m_i$ and maximum radius $r_i$ to represent the cluster $c_i$. $r_i$ is the distance from the center $m_i$ to the farthest point in $c_i$. Intuitively, we get following two pruning rules.

\begin{rul}[Long-distance Pruning]
Consider two clusters $c_i$ and $c_j$, their mean centers $m_i$ and $m_j$, radius $r_i$ and $r_j$, and time difference $\Delta t$, if the distance between $m_i$ and $m_j$ is greater than $r_i+r_j+d*\Delta t$, then the Hausdorff distance between $c_i$ and $c_j$ must be greater than $d*\Delta t$.
\end{rul}

\begin{rul}[Short-distance Pruning]
Consider two clusters $c_i$ and $c_j$, their mean centers $m_i$ and $m_j$, radius $r_i$ and $r_j$, and time difference $\Delta t$, if the distance between $m_i$ and $m_j$ is not greater than $d*\Delta t-\max\{r_1,r_2\}$, then the Hausdorff distance between $c_i$ and $c_j$ must not be greater than $d*\Delta t$.
\end{rul}

\begin{figure}[htbp]
% \vspace{-0.3cm}
\centering
\subfigure[Long-distance pruning]{
\label{fig.prune1}
\includegraphics[width=0.46\textwidth]{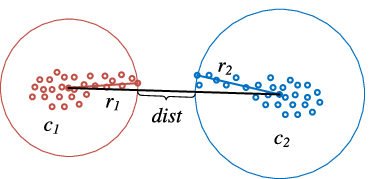}}
\hspace{5mm}
\subfigure[Short-distance pruning]{
\label{fig.prune2}
\includegraphics[width=0.36\textwidth]{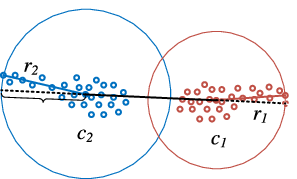}}
\caption{Examples of pruning cluster}
\label{fig.prune}
\end{figure}

%We omit the proof of these rules due to the space limitations.
As shown in Fig.~\ref{fig.prune1}, the smallest minimum distance from points in $c_1$ to $c_2$ must be larger than $dist$, similarly, the smallest minimum distance from points in $c_2$ to $c_1$ also must be larger than $dist$. Thus $d_H(c_1,c_2)>dist$. Therefore, if $|m_1-m_2|>r_1+r_2+d*\Delta t$, namely, $dist>d*\Delta t$, then $d_H(c_1,c_2)>dist>d*\Delta t$.

In Fig.~\ref{fig.prune2}, the maximum minimum distance from points in $c_2$ to $c_1$ must be less than $r_2+|m_1-m_2|$, similarly, the maximum minimum distance from points in $c_1$ to $c_2$ must be less than $r_1+|m_1-m_2|$. Hence, $d_H(c_1, c_2)<\max\{r_1,r_2\}+|m_1-m_2|$. If $|m_1-m_2|< d*\Delta t-\max\{r_1,r_2\}$, namely, $\max\{r_1,r_2\}+|m_1-m_2|<d*\Delta t$, then $d_H(c_1, c_2)<d*\Delta t$.

By the pruning rules, we first calculate the distance between the mean centers to determine whether we need to continue to compute the Hausdorff distance for validating the final results.
Obviously, only the clusters that are apart between $d*\Delta t-\max\{r_i,r_j\}$ and $r_i+r_j+d*\Delta t$ need to be calculated the Hausdorff distances.

\subsubsection{Incre algorithm.}

Algorithm~\ref{alg-crowd} shows the pseudo-code of our Incre algorithm. First, Incre removes the information of $t_{i-1}$ from $CanSet$ of last window. In particular, we also use Lemma \ref{lem2} to remove the unnecessary maintained candidates (lines 4-5). $Cr.endclu.t$ denotes the timestamp of ending cluster of candidate $Cr$.
For a new cluster $c_{t_j}$, we maintain a copy of $CanSet$, denoted as $c_{t_j}.CanSet$. Each candidate whose ending cluster is unchecked would be validated with $c_{t_j}$, as shown in lines 9-16. If $c_{t_j}$ is close to $Cr.endclu$, a new candidate is generated and inserted into $CanSet$, and all ending clusters in $Cr$ are set to $match$ (lines 11-14). $Cr.allendclu$ stands for all ending clusters in $Cr$.
Finally, all length enough ($\geq k_c$) candidates that end with clusters at $t_j$ or with $numth$ status in all copies are reported as closed crowds (lines 17-19). $Cr.endclu.allst=unmth$ denotes the statuses of $Cr.endclu$ are $unmath$ in $CanSet$ copies of all new clusters.

\begin{algorithm}[htbp]
\caption{Incre algorithm for Crowd Discovery}
\label{alg-crowd}
\renewcommand{\algorithmicrequire}{\textbf{Input:}}
\renewcommand{\algorithmicensure} {\textbf{Output:}}
\begin{algorithmic}[1]
\Require{the current window $W_{t_i}^{t_j}$, $C_{t_j}$, $k_{c}$, $m_{c}$, $d$}
\Ensure {Closed crowds $CloCr$}
\State $CloCr\leftarrow \emptyset$; $CanSet \leftarrow W_{t_{i-1}}^{t_{j-1}}.CanSet$;
\For {each $Cr \in CanSet$}
    \State Remove $c_{t_{i-1}}$ from $Cr$; //delete clusters at time $t_{i-1}$
    \If {($|t_{j-1}-Cr.endclu.t|>j-i+1-k_c$)}
        \State Remove $Cr$ from $CanSet$;
    \EndIf
\EndFor
\State $CanSetCopy \leftarrow CanSet$;
\For {each $c_{t_j}\in C_{t_j}$}
\State $c_{t_j}.CanSet \leftarrow CanSetCopy$; // a copy of $CanSet$
    \For {each $Cr \in c_{t_j}.CanSet$}
        \If {($Cr.endclu.st == uncheck$)}
            \If {($d_H(c_{t_j},Cr.endclu)\leq d*\Delta t$)}
                \State Insert $Cr+\langle c_{t_j}\rangle$ into $CanSet$;
                \State $Cr.endclu.st\leftarrow match$;
                \State $Cr.allendclu.st\leftarrow match$;
            \Else
                \State $Cr.endclu.st\leftarrow unmth$;
            \EndIf
        \EndIf
    \EndFor
\EndFor
\For {each $Cr \in CanSet$}
    \If {(($Cr.endclu.t==t_j|Cr.endclu.allst==unmth$) $\& Cr.len\geq k_c$)}
        \State $CloCr \leftarrow CloCr\cup Cr$;
    \EndIf
\EndFor
\end{algorithmic}
\end{algorithm}

\subsection{Online Group Extraction in Sliding Window}

Next, we discuss the algorithm to detect closed aggregation and corresponding group in each obtained closed crowd in sliding window. We also first use an example to elaborate our process method. Then we present the details of our algorithm on basis of the proposed principles.

\begin{table}[h]
\caption{Example 2. a closed crowd in $W_{t_1}^{t_8}$ and $W_{t_2}^{t_9}$}
\label{tab.gather}
\centering
% \begin{tabular}{|p{0.5cm}|p{0.5cm}|p{0.5cm}|p{0.5cm}|p{0.5cm}|p{0.5cm}|p{0.5cm}|p{0.5cm}|p{0.5cm}|}
\begin{tabular}{c|c|c|c|c|c|c|c|c}
\hline
$c_{t_1}$&$c_{t_2}$&$c_{t_3}$&$c_{t_4}$&$c_{t_5}$&$t_6$   &$c_{t_7}$&$c_{t_8}$&$c_{t_9}$  \\
\hline
$o_{1}$ & $o_{1}$ & $o_{1}$ &         & $o_{1}$ &         & $o_{1}$ & $o_{1}$ &       \\
\hline
$o_{2}$ & $o_{2}$ &         & $o_{2}$ & $o_{2}$ &         & $o_{2}$ & $o_{2}$ & $o_{2}$\\
\hline
$o_{3}$ & $o_{3}$ &         & $o_{3}$ & $o_{3}$ &         & $o_{3}$ & $o_{3}$ & $o_{3}$\\
\hline
        &         & $o_{4}$ &         &         &         & $o_{4}$ &         &        \\
\hline
        & $o_{5}$ &         & $o_{5}$ & $o_{5}$ &         & $o_{5}$ & $o_{5}$ & $o_{5}$\\
\hline
o$_{6}$ & $o_{6}$ & $o_{6}$ & $o_{6}$ &         &         &         & o$_{6}$ &        \\
\hline
\end{tabular}
%\vspace{-0.3cm}
\end{table}

\begin{example}
\label{examp2}
Consider a closed crowd shown in Table~\ref{tab.gather}, and let $w=8$, $k_c=6$, $k_p=5$ and $m_p=3$. There is a closed crowd $Cr_1= \langle c_{t_1}, c_{t_2}, c_{t_3}, c_{t_4}, c_{t_5}, c_{t_7}, c_{t_8}\rangle$ with 6 objects in window $W_{t_1}^{t_8}$ that also evolves to a closed crowd $Cr_2= \langle c_{t_2}, c_{t_3}, c_{t_4}, c_{t_5}, c_{t_7}, c_{t_8}, c_{t_9}\rangle$ in $W_{t_2}^{t_9}$.
\end{example}

We first verify if the crowd is an aggregation in window $W_{t_1}^{t_8}$. %The process is illustrated as Fig. \ref{fig.gather1}-\ref{fig.gather3}.
It is easy to see that the objects $\{o_1, o_2, o_3, o_5, o_6\}$ are participators w.r.t. threshold $k_p(5)$. But not every cluster in $Cr_1$ contains $m_p(3)$ participators, e.g., $c_{t_3}$ is considered as an invalid cluster since it only contains two participators. However, $\langle c_{t_1}, c_{t_2}, c_{t_4}, c_{t_5}, c_{t_7}, c_{t_8}\rangle$ still is a crowd in this window by removing the invalid cluster, meaning that it still may be an aggregation. Therefore, we need to further check the sub-crowd whether it is a closed aggregation. %, as shown in Fig. \ref{fig.gather2}.
Again we get participator set $\{o_1, o_2, o_3, o_5\}$ from the sub-crowd. Obviously, the set of participators is a subset of that of $Cr_1$. Now we introduce the following lemma by the observation.

\begin{Lem}
\label{lem4}
Given a crowd $Cr$ and its corresponding participator set $O$ in window $W_{t_i}^{t_j}$, for any crowd $Cr' \subset Cr$, the corresponding participator set $O'$ of $Cr'$ is a subset of $O$.
\end{Lem}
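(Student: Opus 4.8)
The plan is to argue directly from the definition of participator (Definition~\ref{Participator}), exploiting the fact that the cluster-support threshold $k_p$ is a fixed global parameter shared by both $Cr$ and its sub-crowd $Cr'$. First I would fix an arbitrary object $o \in O'$ and unfold the meaning of ``$o$ is a participator of $Cr'$'': by Definition~\ref{Participator}, $o$ occurs in at least $k_p$ of the clusters that constitute $Cr'$.

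The key step is a monotonicity observation. Since $Cr' \subset Cr$ means that the cluster sequence of $Cr'$ is a sub-sequence of that of $Cr$, every cluster in which $o$ appears within $Cr'$ is also a cluster of $Cr$. Consequently the number of clusters of $Cr$ that contain $o$ is at least as large as the number of clusters of $Cr'$ that contain $o$, and the latter count is already $\geq k_p$ by the previous step. Hence $o$ appears in at least $k_p$ clusters of $Cr$ as well.

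Applying Definition~\ref{Participator} once more, now in the reverse direction, I would conclude that $o$ is a participator of $Cr$, i.e., $o \in O$. Since $o$ was an arbitrary element of $O'$, this yields $O' \subseteq O$, which is exactly the claim. The argument is a one-pass inclusion proof and uses nothing beyond the two definitions and the subsequence relationship $Cr' \subset Cr$.

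As for the main obstacle, there is really none of substance: the statement is essentially the anti-monotonicity of the participator count with respect to the set of clusters, so the whole proof is a short chain of implications. The only point deserving care is to make explicit that the threshold $k_p$ is identical for $Cr$ and $Cr'$, so that the inequality between their cluster counts transfers directly into participator membership without any rescaling of the threshold.
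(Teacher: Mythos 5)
Your argument is correct and is essentially identical to the paper's own proof: both fix an arbitrary participator $o$ of $Cr'$, note that the $k_p$ clusters of $Cr'$ containing $o$ are also clusters of $Cr$ because $Cr'\subset Cr$, and conclude $o\in O$ by Definition~\ref{Participator}. Your added remark that the threshold $k_p$ is the same global parameter for both crowds is a worthwhile explicit touch, but it does not change the route.
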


Lemma~\ref{lem4} is intuitive. %We omit the proof due to the space limitations.
For any participator $o \in O'$ of $Cr'$, there must exist at least $k_p$ clusters in $Cr'$ that contain object $o$. Since $Cr' \subset Cr$, the $k_p$ clusters also must be included in $Cr$. Therefore, $o$ must be a participator of $Cr$, namely, $o \in O$.

Specifically, we use $o.cnt$ to denote the number of clusters in which $o$ appears in a crowd. For example, we get $\{o_1.cnt=6, o_2.cnt=6, o_3.cnt=6, \overline{o_4.cnt=2,}\ o_5.cnt=5, o_6.cnt=5\}$ in $Cr_1$, where the object with an overline denotes a non-participator because its $cnt$ is less than $k_p(5)$. When re-checking the participators in $Cr_1- \langle c_{t_3} \rangle$, we only need to verify if these participators appear in $c_{t_3}$. If yes, $o.cnt$ is reduced by 1, otherwise, $o.cnt$ stays the same. Therefore, we can fast find out the new participator set $\{o_1.cnt=5, o_2.cnt=6, o_3.cnt=6, o_5.cnt=5, \overline{o_6.cnt=4}\}$. Then we detect a closed aggregation $\langle c_{t_1}, c_{t_2}, c_{t_4}, c_{t_5}, c_{t_7}, c_{t_8} \rangle$ along with its group $\{o_1, o_2, o_3, o_5\}$, since each cluster contains at least $m_p (3)$ participators. %(shown in Fig. \ref{fig.gather3}).

After $t_9$ arrives, $W_{t_1}^{t_8}$ slides to $W_{t_2}^{t_9}$. A new closed crowd $Cr_2$ in $W_{t_2}^{t_9}$ is generated based on $Cr_1$ in last step.
First, we need to delete the information of $t_1$ from $Cr_1$. Similarly, we can directly use the above method to fast update participator set by removing invalid cluster $c_{t_1}$ from $Cr_1$. So the participator set is updated to $\{o_1.cnt=5, o_2.cnt=5, o_3.cnt=5, \overline{o_4.cnt=2,}\ o_5.cnt=5, \overline{o_6.cnt=4}\}$.
Next, we consider how to update new cluster $c_{t_9}$ into the obtained result using minimal computation. %As shown in Fig. \ref{fig.gather4},
We here add the new cluster into $Cr_1-\langle c_{t_1}\rangle$, namely, we verify $Cr_2$ in basis of the obtained participator set.
Therefore, we only need to check whether the objects appear in new cluster $c_{t_9}$. Thus, the participator set is updated to $\{o_1.cnt=5,\ o_2.cnt=6,\ o_3.cnt=6,\ \overline{o_4.cnt=2,}\ o_5.cnt=6,\ \overline{o_6.cnt=4}\}$.

Similarly, we find that $c_{t_3}$ also does not satisfy the condition of $m_p$ participators, i.e., an invalid cluster, hence we next continue to verify the candidate using above removing method.
%The participator set is reduced to $\{\overline{o_1.cnt=4},\ o_2.cnt=6,\ o_3.cnt=6,\ o_5.cnt=6 \}$ in $Cr_2$-$\langle c_{t_3} \rangle$. By the participator set, we confirm that $\langle c_{t_2}, c_{t_4}, c_{t_5}, c_{t_7}, c_{t_8}, c_{t_9} \rangle$ is a gathering with corresponding group $\{o_2, o_3, o_5\}$.
Finally, $\langle c_{t_2}, c_{t_4}, c_{t_5}, c_{t_7}, c_{t_8}, c_{t_9} \rangle$ is reported as a closed aggregation with corresponding group $\{o_2, o_3, o_5\}$ in $W_{t_2}^{t_9}$.

Moreover, if $c_{t_9}$ was an invalid cluster, we can see that the detection falls back to obtained result in last window, i.e., $W_{t_2}^{t_8}$. Namely, we only need to directly verify the obtained result in $W_{t_1}^{t_8}$ by removing $t_1$.
%Intuitively, if there is no new cluster to be appended to $Cr_1$ in $W_{t_2}^{t_9}$, the checked gathering is final result in this window.

\begin{comment}
\begin{figure}[htbp]
%\centering
\subfigure[A closed crowd in $W_{t_1}^{t_8}$]{
\label{fig.gather1}%
\includegraphics[width=3.12in,height=0.5in]{gathering1.png}}

\subfigure[Verify and find invalid clusters]{
\label{fig.gather2}%
\includegraphics[width=3.12in,height=0.5in]{gathering2.png}}

\subfigure[Detect the closed gathering]{
\label{fig.gather3}%
\includegraphics[width=3.12in,height=0.5in]{gathering3.png}}

\subfigure[Slide to $W_{t_2}^{t_9}$, reuse the previous result]{
\label{fig.gather4}%
\includegraphics[width=3.4in,height=0.5in]{gathering4.png}}

%\subfigure[Verify the new cluster first]{
%\label{fig.gather5}%
%\includegraphics[width=3.4in,height=0.5in]{gathering5.png}}

%\subfigure[Verify the previous invalid first]{
%\label{fig.gather6}%
%\includegraphics[width=3.4in,height=0.5in]{gathering6.png}}

%\subfigure[find the invalid cluster again]{
%\label{fig:gather5}%
%\includegraphics[width=3.4in,height=0.5in]{gathering7.png}}

\subfigure[Closed gathering in $W_{t_2}^{t_9}$]{
\label{fig.gather8}%
\includegraphics[width=3.4in,height=0.5in]{gathering8.png}}

%\subfigure[Suppose $c_{t_9}$ is invalid in $W_{t_2}^{t_9}$]{
%\label{fig.gather9}%
%\includegraphics[width=3.4in,height=0.5in]{gathering9.png}}

\caption{Illustration of online gathering detection. Blue circle indicates invalid cluster. Orange circle represents the new cluster, and that with black border is valid cluster.}
\end{figure}
\end{comment}

By the illustration of Example~\ref{examp2}, we propose our \textbf{Verification with Removing and Adding} (VRA) algorithm to detect closed aggregations with corresponding groups in sliding window efficiently.

As shown in Algorithm~\ref{VRA}, $Ag$ denotes a closed aggregation, and $Gr$ is its corresponding group. $Par$ is the participator set with their $cnt$s. If $Cr$ is an emerging closed crowd, we get its participator set of $Cr$ from scratch (denoted as $Participator(Cr)$). Otherwise, we can fast find out the participator set of $Cr$ from $Par$ of last window by only verifying the objects in $c_{t_{i-1}}$ and $c_{t_j}$ (lines 4-6).
Lines 8-19 show the detection process of closed aggregation applying a downward method.
%VRA first updates the participators by removing invalid clusters (lines 10-12). Actually, we only focus on the real participators whose $ct \geq k_p$ instead of all objects in $Cr$ by Lemma 4.
VRA first checks if each cluster in the crowd copy $Cr'$ contains enough participators shown as lines 12-14. If not, VRA then updates the participators by removing the invalid clusters (lines 9-11), and re-checks each clusters of the remainder again (lines 12-14) till the remainder is not a crowd or there is no more invalid cluster.
If there is no more invalid cluster, $Cr'$ and its current participators are reported as a closed aggregation and corresponding group.
Actually, we only focus on the real participators whose $cnt \geq k_p$ instead of all objects in $Cr$ in this detection process because the participators of a crowd must be from that of its super-crowd by Lemma 4. $Par'(k_p)$ denotes the set of real participators that satisfy $k_p$ threshold, and $Par'(k_p,c)$ denotes the set of real participators in cluster $c$.

However, if $c_{t_j}$ is an invalid cluster in the process, the detection falls back to obtained result in last window. As shown in lines 17-19, we further verify the obtained aggregation of last window by removing the expired timestamp $t_{i-1}$.

%\vspace{-0.1cm}
\begin{algorithm}[h]
\caption{Verify with Removing and Adding (VRA)}
\label{VRA}
\renewcommand{\algorithmicrequire}{\textbf{Input:}}
\renewcommand{\algorithmicensure} {\textbf{Output:}}
\begin{algorithmic}[1]
\Require{closed crowd $Cr$ in window $W_{t_i}^{t_j}$, closed crowd $Cr_1$ and $Par$ in $W_{t_{i-1}}^{t_{j-1}}$, $k_c$, $k_{p}$, $m_p$}
\Ensure {aggregation $Ag$, and corresponding group $Gr$}
\State $Ag \leftarrow \emptyset$; $Gr \leftarrow \emptyset$; $C_{un} \leftarrow \emptyset$;
\If {$Cr$ is an emerging closed crowd}
%    \State $Par \leftarrow \emptyset$; $C_{un} \leftarrow \emptyset$;
    \State $Par \leftarrow Participator(Cr)$;
\Else
    \State $Par \leftarrow Par-obj(c_{t_{i-1}})$; //delete information of $t_{i-1}$
    \State $Par \leftarrow Par+obj(c_{t_{j}})$; //add information of $t_{j}$
\EndIf
    \State $Cr' \leftarrow Cr$; $Par' \leftarrow Par$;
    \While {($|Cr'-C_{un}|\geq k_c$)}
        \For {each $c \in C_{un}$}
            \State $Par' \leftarrow Par'-obj(c)$;
        \EndFor
        \State $Cr' \leftarrow Cr'-C_{un}$; $C_{un} \leftarrow \emptyset$;
        \For {each $c \in Cr'$}
            \If {($|Par'(k_p,c)|<m_p$)}
                \State $C_{un} \leftarrow C_{un}\cup c$;
            \EndIf
        \EndFor
        \If {($C_{un}==\emptyset$)}
            \State $Ag \leftarrow Cr'$; $Gr \leftarrow Par'(k_p)$; Break;
        \ElsIf {($c_{t_j} \in C_{un}$ \& $|Cr'-C_{un}|\geq k_c$)}
            \State Fall back to $Ag$ and $Gr$ in $W_{t_{i-1}}^{t_{j-1}}$;
            \State Verify $Ag-\langle c_{t_{i-1}}\rangle$ on $Gr$; Break;
        \EndIf
    \EndWhile
%    \If {($|Cr'-C_{un}|< k_c$)}
%        \State $Cr$ can not form a gathering.
%    \EndIf
\end{algorithmic}
\end{algorithm}

\begin{Lem}
\label{lem5}
Given a closed crowd $Cr$ in the current window $W_{t_i}^{t_j}$, If $Ag$ is reported as an aggregation from $Cr$ by $VRA$ algorithm, then the aggregation $Ag$ is closed in $W_{t_i}^{t_j}$.
\end{Lem}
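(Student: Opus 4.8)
The plan is to establish the two halves of ``closed aggregation'' separately: that the reported $Ag$ is indeed an aggregation, and that it is maximal, i.e.\ no super-crowd of $Ag$ is again an aggregation. The first half I would read directly off the termination branch of Algorithm~\ref{VRA}: $Ag$ is returned only when the loop reaches $C_{un}=\emptyset$ with $Cr'=Ag$, which means every remaining cluster $c\in Ag$ satisfies $|Par'(k_p,c)|\ge m_p$, and the loop guard ensures $|Ag|\ge k_c$. Since $Ag\subseteq Cr$ and $|Ag|\ge k_c$, the downward-closure property of crowds (a consequence of the triangle inequality of the Hausdorff distance together with the $d*\Delta t$ thresholding) makes $Ag$ itself a crowd; hence by Definition~\ref{Gather} it is an aggregation.

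The core tool for maximality I would isolate first is a monotonicity observation about the removal process. As VRA deletes clusters from $Cr'$, each count $o.cnt$ can only decrease, so the participator set $Par'(k_p)$ and, for every cluster, the number $|Par'(k_p,c)|$ of participators it contains are non-increasing along the run. Consequently a cluster that is once placed in $C_{un}$ (invalid) remains invalid in every later, smaller $Cr'$, which is exactly what licenses its permanent removal. With this in hand I would prove maximality \emph{inside} $Cr$ by contradiction. Suppose some aggregation $A$ satisfies $Ag\subsetneq A\subseteq Cr$, and let $c$ be the first cluster of $A\setminus Ag$ that VRA removes. At the iteration where $c$ is deleted, no cluster of $A$ has yet been removed, so $A\subseteq Cr'$; by Lemma~\ref{lem4} the participators of $A$ form a subset of those of $Cr'$, whence the number of participators of $A$ lying in $c$ is at most $|Par'(k_p,c)|<m_p$. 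This contradicts $A$ being an aggregation, so no aggregation strictly between $Ag$ and $Cr$ exists, i.e.\ $Ag$ is the maximal aggregation contained in $Cr$.

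The step I expect to be the main obstacle is upgrading maximality within $Cr$ to closedness in the whole window $W_{t_i}^{t_j}$, namely excluding a super-crowd $A\supsetneq Ag$ that is an aggregation but uses clusters lying \emph{outside} $Cr$ (for instance a different cluster at a timestamp whose $Cr$-cluster VRA discarded). Here I would lean on the hypothesis that $Cr$ is a \emph{closed} crowd: no crowd properly extends $Cr$, and by the same triangle-inequality and downward-closure machinery any crowd that contains $Ag$ together with an extra cluster adjacent to $Ag$'s clusters must already be a sub-crowd of $Cr$, so the outside case reduces to the within-$Cr$ argument above. Making this reduction fully rigorous---pinning down exactly which extensions the closed-crowd property forbids and arguing that genuinely distinct extensions are captured by a different closed crowd---is the delicate part of the proof.

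Finally, the fall-back branch of VRA, where $Ag$ is inherited from $W_{t_{i-1}}^{t_{j-1}}$ after deleting the expired $c_{t_{i-1}}$ and re-verifying $Ag-\langle c_{t_{i-1}}\rangle$ on $Gr$, I would treat by induction over successive windows: the previous $Ag$ is closed by the induction hypothesis, and removing an expired cluster followed by the re-verification cannot introduce a new super-crowd, so closedness is preserved into the current window.
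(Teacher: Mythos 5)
Your argument is considerably more rigorous than the paper's own proof, which consists of the single assertion that ``according to the downward flow of VRA, $Ag$ is the biggest sub-crowd of $Cr$ that satisfies the condition that each cluster contains $m_p$ participators,'' followed immediately by the conclusion. Your within-$Cr$ maximality argument --- the monotonicity observation that deleting clusters only decreases every $o.cnt$ (so an invalid cluster stays invalid), combined with the contradiction via Lemma~\ref{lem4} applied to the first cluster of $A\setminus Ag$ that VRA removes --- is exactly the justification the paper omits, and it is correct. So for the part of the claim that both you and the paper actually establish (maximality of $Ag$ among aggregations that are sub-crowds of $Cr$), your route is the same in spirit but actually carries the burden of proof.

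The step you flag as delicate is, however, a genuine gap --- in the paper's proof as much as in yours --- and your proposed bridge does not close it. The paper's notion of closedness quantifies over \emph{all} super-crowds of $Ag$ in the window, and the closedness of $Cr$ only forbids crowds that properly contain $Cr$ itself. A super-crowd $A\supsetneq Ag$ obtained by inserting, at a timestamp whose $Cr$-cluster VRA discarded as invalid, a \emph{different} snapshot cluster $c'$ (close enough to its neighbours in $Ag$ and rich enough in participators) is neither a sub-crowd nor a super-crowd of $Cr$, so the closed-crowd hypothesis is silent about it; your claim that ``any crowd containing $Ag$ plus an adjacent cluster must already be a sub-crowd of $Cr$'' therefore does not follow. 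Such an $A$ may well live inside some \emph{other} closed crowd and be found there, but that does not make the $Ag$ reported from $Cr$ closed in the sense of the definition. The honest statement both proofs support is that $Ag$ is the maximal aggregation \emph{contained in $Cr$}; closedness over the whole window would need either an extra argument ruling out cross-crowd extensions or a weakening of the definition. Your treatment of the fall-back branch by induction over windows is reasonable but inherits the same caveat.
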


\begin{proof}
Lemma \ref{lem5} can be proved easily. Suppose $Ag$ is reported as an aggregation by $VRA$ algorithm. According to the downward flow of our $VRA$ algorithm, $Ag$ is the biggest sub-crowd of $Cr$ that satisfies the condition that each cluster contains $m_p$ participators. Namely, there is no super-crowd of $Ag$ which is an aggregation. Therefore, $Ag$ is closed in $W_{t_i}^{t_j}$ by closure property of aggregation described in Sec.~\ref{sec.def}.
\end{proof}

\subsection{Evolving Group Updating}
%In the third phase, we will update the evolving groups by constructing the sequence chains of groups incrementally, as shown in Alg. 3. For each group obtained in the current window, we first evaluate if it is evolved from a group of last window, namely, if it shares most of core members with the groups detected in last window (lines 3-7). If yes, we update the group into the evolving group as shown in line 5. Therefore, we can see that each pair of adjacent groups in an evolving group has an evolutionary relationship. If the group does not share most of members with any group in last widow, i.e., the group is an emerging gathering in the current window, a new chain that only contains the group is constructed and stored in $EGrs$.
%Moreover, If the last group in a chain (the group in last window) is not continued to be evolved in current window, we say that the evolving group is closed. We store the closed evolving groups in $CloEGrs$ (lines 14-17).

In the third phase, we update the evolving groups by constructing the sequence chain of groups incrementally. For each group obtained in the current window, we first evaluate whether it is evolved from a group of last window, namely, whether it shares most of core members with the groups detected in last window. If yes, we update the group into the evolving groups. Therefore, we can find that each pair of adjacent groups in an evolving group has an evolutionary relationship. If the group does not share most of members with any group in last window, i.e., the group is an emerging aggregation in the current window, a new evolving group that only contains the group is constructed.
Moreover, If the last group in a chain (the group in last window) is not evolved in current window, we say that the evolving group is a closed evolving group.

%In Sec. \ref{sec.exp} of experiment, we use a visualization method to present our evolving group to demonstrate the development, evolution, and trend of group events discovered from massive-scale trajectory streams.

% \begin{algorithm}[htbp]
% \caption{Evolving Group Updating}
% \label{VRA}
% \renewcommand{\algorithmicrequire}{\textbf{Input:}}
% \renewcommand{\algorithmicensure} {\textbf{Output:}}
% \begin{algorithmic}[1]
% \Require{groups $Grs$ in the current window $W_{t_i}^{t_j}$, Evolving groups $EGrs$ in $W_{t_{i-1}}^{t_{j-1}}$, support threshold $m_g$, $k_g$.}
% \Ensure {updated evolving groups $EGrs$.}
% \For {each $Gr \in Grs$}
%     \State $evl \leftarrow false$;
%     \For {each $chn \in EGrs$}
%         \If {($|Gr \cap chn.endGr|\geq m_g* Min(|Gr|,|chn.endGr|)$)}
%             \State $chn \leftarrow chn+\langle Gr \rangle$;
%             \State $evl \leftarrow true$; $chn.evl \leftarrow true$;
%         \EndIf
%     \EndFor
%     \If {($evl==false$)}
%         \State $EGrs \leftarrow EGrs \cup \langle Gr \rangle$;
%     \EndIf
% \EndFor

% \For {each $chn \in EGrs$}
%     \If {($chn.evl\ !=false$ \& $chn.len \geq k_g$)}
%         \State $CloEGrs \leftarrow chn$;
%         \State $EGrs.remove(chn)$; $chn.evl \leftarrow false$;
%     \EndIf
% \EndFor

% \end{algorithmic}
% \end{algorithm}

\subsection{Complexity Analysis}
\label{subsec.complex}
Suppose that there are, on average, $m$ new clusters per timestamp, $n$ ending clusters in $endclu$ and $l$ closed crowd candidates in $CanSet$ per window, and $k$ trajectory points per cluster. The Incre algorithm constructs the closed crowd candidates by computing the Hausdorff distance between new clusters and ending clusters. Thus the complexity of Incre is $O(mnk^2)$ in the worst case. Suppose that the ratio of pruning clusters, on average, $\alpha$ in each window. The complexity becomes $O((1-\alpha)mnk^2)$ after applying the cluster pruning strategy.
For each closed crowd, VRA algorithm detects the closed aggregation by verifying at most $w-k_c$ clusters to update the participator set. All $l$ crowd candidates may be closed in the worst case. Therefore, it worst-case complexity is $O(l(w-k_c)k)$.
For evolving group updating, we evaluate whether the new groups are evolved from the groups of last window. Therefore, the complexity is $O(l^2)$ in the worst case of all $l$ crowd candidates being aggregations.
The overall complexity of our DEG method is $O((1-\alpha)mnk^2)+O(l(w-k_c)k)+O(l^2)$ in the worst case.

\section{Online Discovery Framework}
\label{sec.mtod}
To further drive down computation cost, we now present our \underline{m}ulti-\underline{t}hreading based \underline{o}nline \underline{d}iscovery (MTOD) framework. By proposing sector-based partition for object clusters, the MTOD framework achieves memory-sharing parallel discovery of evolving groups over massive-scale trajectory streams in near real time.

\subsection{Sector-Based Partition}
\label{subsec.secpar}
%We first introduce an flexible sector-based partition for snapshot clusters at each timestamp for our parallel discovery framework. 
%Although random distribution can be applied to our multi-threading based discovery framework, it suffers from two major shortcomings. First, it seems that random distribution can achieve load balancing, but in reality clusters in different density areas.
To further reduce the number of Hausdorff distance computation and simultaneously achieve load balancing, we propose an flexible sector-based partition of snapshot clusters at each timestamp for multi-threading parallel discovery framework.

As shown in Figure~\ref{fig.sector}, we first find the center of data space and use concentric circles to divide the data space into multiple annuluses. The radii of concentric circles are determined according to the distribution of snapshot clusters. Then, in each annulus, we can determine the areas of sectors according to the number of clusters assigned to each thread. 
% \textcolor{red}{
The partition is updated at each timestamp. For example, in Figure~\ref{fig.sector}, the red solid lines are the dividing lines at the current timestamp, and the green dotted lines indicate the dividing lines at the last timestamp. The snapshot clusters are new clusters at the current timestamp. 
% }  
% we have used different color to represent partition in different timestamps in the figure. The green edges indicate the partition of the last timestamp, and the red edges represent current partition. And all the clusters you see here belong to the current timestamp.

\begin{figure}
\begin{center}
\includegraphics[width=0.50\linewidth]{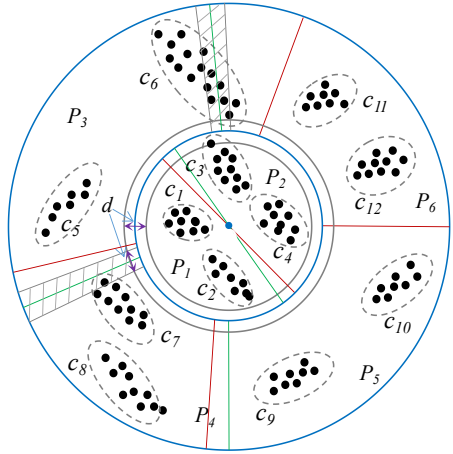}
\caption{An example of sector-based partition. Each gray dashed oval denotes a cluster. The blue circles are concentric circles. The purplish red lines are boundaries of sectors.} 
% \vspace{-5mm}
\label{fig.sector}
\end{center}
\end{figure}

% \textcolor{red}{
The sector-based partition is more flexible than grid-based partition. we only need to adjust the radii of concentric circles and/or the boundaries of sectors to achieve load balancing at each timestamp. 
As shown in Figure~\ref{fig.sector}, the new partitions can be easily obtained by only updating the sector boundaries at the last timestamp. We use $P_i$ to denote a partition at the current timestamp and $P_i'$ to denote a partition at the last timestamp. 
Notice that the radii of the concentric circles may need to be adjusted when the snapshot cluster distribution changes greatly. 
% }

To this end, we regard the the center of data space as the pole and the horizontal coordinate as the polar axis to construct a polar coordinate system. 
According to the stored mean centers and maximum radii of all new clusters, it is easy to obtain the radii of the concentric circles and the boundary lines of sectors for sector-based partition. 
More specifically, we first sort the polar coordinates of all cluster centers, and then, according to the number of clusters $m$ and the number of threads $T_m$, use the sorted polar radii of cluster centers to determine the radii of concentric circles, and use the sorted polar angles of cluster centers in the same ring to determine the boundary lines of sectors. 

Therefore, the time complexity of re-partition mainly includes the transformation of cluster center coordinates, the polar radius sorting of cluster centers, the polar angle sorting of cluster centers in the same ring, and the division of concentric circles and the determination of sector boundary lines. 
That is, the total time complexity is $O(m + m\log m + \sum m_i\log m_i)$, where $m\log m$ is the time complexity of sorting polar radii of $m$ cluster centers, and $m_i$ is the number of clusters in $i$-th annulus. 
The space complexity is $O(\frac{m}{T_m} + r)$, where $r$ is the number of annuluses, storing the radii of annuluses, and $\frac{m}{T_m}$ is the number of partitions, storing the polar coordinate information of boundary lines of all sectors. 

Additionally, we observe that the new clusters in a partition may not have to validate the ending clusters of last window in other partitions. To guarantee this point, we use the boundary zone to separate the annuluses and sectors. As shown in Figure~\ref{fig.sector}, we use a ring of $2d$ width and a boundary zone of at least $2d$ width to separate two annuluses and two sectors respectively. The new clusters that cover the boundary zone are also checked with the ending clusters of last window in the adjacent partition area. 
% Similarly, the ending clusters of last window that cover the boundary zone also verify with the new clusters in the adjacent partition area. 
In other cases, we never need to validate the closed crowd candidates with the new clusters in other partitions. Namely, we directly prune more clusters before applying the long-distance pruning. 
% {\color{red}
More specifically, new clusters are classified into two types based on the relative relationship between their locations and sector-based partitions of last timestamp. 
The first type of clusters are those that completely fall within a partition, that is, they have no overlap with any boundary zone. For example, $c_5$ lies completely in $P_3'$ of the last timestamp. Thus $c_5$ is a cluster of the first type. 
The second type of clusters are those remaining clusters, i.e., clusters that have overlap with any boundary zone or cross over any boundary zone. 
For example, since $c_7$ covers the boundary zone of $P_3'$ and $P_4'$, it belongs to the second type. Similarly, $c_6$ crosses over the boundary zone of $P_3'$ and $P_6'$ and also lies on the ring zone of $P_2'$. Thus, $c_6$ is also a cluster of the second type. 
Based on the above analysis, the new clusters of the first type only need to be checked with the ending clusters of last window which lie in the same partition. 
The new clusters of the second type would be checked with the ending clusters of last window which lie in the partitions whose boundary zones overlapping with the new clusters.

The sector-based partition is not only applicable to memory-sharing parallel framework but also applied to shared-nothing distributed architecture, because the partition method can easily group the nearby new clusters with local closed crowd candidates.

\subsection{Theoretic Analysis}

% {\color{red} 
In this section, we theoretically analyze and prove the correctness of our proposed parallel method.

% Based on the above analysis, we could know that the all the clusters can be classified into these two types. And they have no intersection with each other. Therefore, the correctness of the whole algorithm can be proved by proving the correctness of algorithm on each of them respectively. 

Based on the analysis in Sec.~\ref{subsec.secpar}, we classify the new clusters into two independent categories. Below we will analyze and prove their correctness separately. 
First, we can easily get following Lemma~\ref{lem6}. 
% In the following context, we will prove correctness of MTOD algorithm for first type by proving the following Lemma \ref{lem6}. 

\begin{Lem}
\label{lem6}
Given a new cluster of the first type $c_1$ that completely falls in $P_i'$, for any cluster $c_2$ in $P_j'\ (i \neq j)$, then $c_1$ must have a Hausdorff distance larger than $d$ with $c_2$. 
\end{Lem}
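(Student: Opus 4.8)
The plan is to reduce the statement about Hausdorff distance to one about the ordinary minimum point-to-point distance between the two clusters, and then read that minimum off directly from the geometry of the sector-based partition. First I would observe that, straight from the definition of $d_H$, for any two clusters
\[
d_H(c_1,c_2)\ \geq\ \max_{p\in c_1}\min_{q\in c_2}d(p,q)\ \geq\ \min_{p\in c_1,\,q\in c_2}d(p,q),
\]
where the first inequality keeps only one argument of the outer maximum and the second holds because each inner minimum is at least the global minimum over both clusters, so their maximum is too. Hence it suffices to prove the purely geometric fact $\min_{p\in c_1,\,q\in c_2}d(p,q)>d$, i.e.\ that every point of $c_1$ lies farther than $d$ from every point of $c_2$.

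The key fact comes from the construction in Section~\ref{subsec.secpar}: any two distinct partitions $P_i'$ and $P_j'$ are kept apart by a separating buffer --- a ring of width $2d$ when they lie in different annuluses, or a boundary zone of width at least $2d$ when they are different sectors of the same annulus. Since $c_1$ is a cluster of the first type, it lies entirely inside $P_i'$ and, by definition, has no overlap with any buffer; and since $c_2\subseteq P_j'$ with $j\neq i$, the whole of $c_2$ lies on the far side of the buffer separating the two partitions. I would then argue that the straight segment joining an arbitrary $p\in c_1$ to an arbitrary $q\in c_2$ must traverse the full width of that buffer, and because the length of the portion of a segment lying inside a strip (or angular band) is at least the band's perpendicular width, we obtain $d(p,q)\geq 2d$ in the adjacent case. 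For non-adjacent partitions the segment must cross at least one such buffer, so the same lower bound $d(p,q)\geq 2d$ persists. Combining $\min_{p,q}d(p,q)\geq 2d>d$ with the reduction above yields $d_H(c_1,c_2)>d$ (in fact the stronger $d_H\geq 2d$, though only the weaker bound is claimed).

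The step I expect to need the most care is the separation estimate for sectors. Unlike the ring between two annuluses, which is a genuine strip of constant width $2d$, the boundary zone between two adjacent sectors is an angular band whose Euclidean width varies with radius, so I must confirm that the partition construction really guarantees a \emph{perpendicular} width of at least $2d$ throughout the annulus --- in particular at its inner radius, where the band is tightest --- rather than merely fixing a constant angular gap. Once that width bound is secured, the crossing argument is routine. A secondary point worth stating explicitly is the sufficiency of the minimum-distance bound: the inequality $d_H\geq\min_{p,q}d(p,q)$ must be justified from the definition, since in general the Hausdorff distance exceeds the minimum point distance and it is precisely this direction of the inequality that the proof exploits.
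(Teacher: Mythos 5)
Your overall strategy is the same as the paper's: reduce the Hausdorff bound to the minimum point-to-point distance between the two clusters and read that off from the width of the separating buffer in the sector-based partition. The reduction $d_H(c_1,c_2)\geq \min_{p\in c_1,\,q\in c_2}d(p,q)$ is correct and is essentially what the paper invokes via the long-distance pruning rule, and your observation that the sector boundary zone is an angular band whose Euclidean width must be certified at the inner radius of the annulus is a fair point that the paper simply stipulates away by declaring the zone to be ``at least $2d$'' wide by construction.

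There is, however, one step where your argument rests on a premise the construction does not provide. You assert that the whole of $c_2$ lies on the far side of the buffer, so that any segment from a point of $c_1$ to a point of $c_2$ traverses the full $2d$ width, giving $d(p,q)\geq 2d$. The partition scheme does not guarantee this: a cluster belonging to $P_j'$ may extend into the boundary zone --- the paper's own proof explicitly handles the case that $c_2$ covers part of the boundary zone of $P_j'$ --- and is only prevented from crossing the dividing line itself. What the construction does guarantee is that $c_1$, being of the first type, is disjoint from the entire zone and hence clears the full $P_i'$-side half of the $2d$-wide buffer, so every point of $c_1$ is at distance at least $d$ from the dividing line, while every point of $c_2$ stays on the other side of that line. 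This yields a separation of $d$, not $2d$; the lemma's conclusion survives (modulo the same strict-versus-non-strict slack that the paper itself glosses over), but your stated intermediate bound $d_H\geq 2d$ is false in general, and your crossing argument as written would fail for a $c_2$ that reaches into the buffer. The repair is exactly the paper's move: measure the clearance of $c_1$ from the dividing line rather than the full width of the buffer. Your treatment of non-adjacent partitions (the segment must cross at least one buffer) matches the paper's reduction to the adjacent case and is fine.
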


% \begin{Lem}
% \label{lem6}
% Two clusters belonging completely to two Partitions respectively must have a distance larger than $d$.
% \end{Lem}

\begin{proof}
Lemma~\ref{lem6} is intuitive and can be proved according to long-distance Pruning in Rule~1. 
Since $c_1$ is a cluster of the first type, it lies completely in $P_i'$ without covering any boundary zone. 
$c_2$ belongs to $P_j'$ ($i \neq j$), thus the distance between $c_1$ and $c_2$ is the smallest when $P_j'$ is an adjacent partition of $P_i'$. 
Suppose that $P_j'$ is an adjacent partition of $P_i'$. 
By the analysis of the long-distance pruning, the $d_H(c_1,c_2)$ must be larger than $d$, because the width of boundary zone between $P_i'$ and $P_j'$ is $2d$. 
That is, even if $c_2$ covers a part of the boundary zone of $P_j'$, it does not cross the boundary line, thus there is still a Hausdorff distance of at least $d$ between $c_1$ and $c_2$. 
% the smallest minimum distance from points in $c_1$ to $c_2$ must be larger than $d$ because the width of boundary area in $P_i$ is $d$. Likewise, the smallest minimum distance from points in $c_2$ to $c_1$ also must be larger than $d$. Thus $d_H(c_1,c_2)>d$. 
\end{proof}

Lemma~\ref{lem6} indicates that a new cluster of the first type is only possible to match the ending clusters of last window which lie in the same partition because all the other clusters definitely have a distance larger than $d$. 
% Based on these analysis, we can prove the correctness of MTOD algorithm for the first type.

Next, we have following Lemma~\ref{lem7} for the new clusters of the second type. 

\begin{Lem}
\label{lem7} Given a new cluster of the second type $c_1$ that covers the boundary zones among a set of partitions $P_{s}' = \{P_i', P_j',\dots\}$, for any cluster $c_2$ in a partition $P_a'$, if $P_a' \notin P_{s}'$, then $c_1$ must have a Hausdorff distance larger than $d$ with $c_2$. 
% then $c_1$ must have a distance larger then $d$ with the clusters in set of Partitions $P_{s2} = \{P_a, P_b, ...\}(P_{s1} \cap P_{s2}=\emptyset)$. 
\end{Lem}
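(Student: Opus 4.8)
The plan is to reduce Lemma~\ref{lem7} to the first-type situation already settled in Lemma~\ref{lem6}. The crucial observation is that although the second-type cluster $c_1$ may sprawl across several partitions and the boundary zones interior to $P_s'$, the only interface relevant to the claim is the one separating the partitions of $P_s'$ from the outside partition $P_a'$; and across that particular interface, $c_1$ is confined in exactly the way a first-type cluster is. So the target is to show $c_1$ stays clear of the $2d$-wide buffer adjacent to $P_a'$, and then reuse the Rule~1 estimate.

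First I would make the containment of $c_1$ precise. Since $P_s'$ is, by definition, the collection of partitions whose mutual boundary zones $c_1$ covers, and $P_a' \notin P_s'$, the cluster $c_1$ does not cover (overlap) the boundary zone lying between $P_a'$ and any partition of $P_s'$ adjacent to it. Hence $c_1$ remains entirely on the $P_s'$ side of that boundary zone without entering it, which is precisely the configuration of a first-type cluster with respect to the $P_s'$/$P_a'$ interface. If $P_a'$ is not adjacent to any partition of $P_s'$ at all, the separation is even larger and the conclusion is immediate, so the binding case is when $P_a'$ abuts some $P_i' \in P_s'$.

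Next I would run the distance argument of Lemma~\ref{lem6} across this single interface. The separating boundary zone has width $2d$; $c_1$ does not enter it from the $P_s'$ side, while $c_2$, being a cluster of $P_a'$, may cover part of its own half of the zone but does not cross the central boundary line. Consequently every point of $c_1$ lies at distance strictly greater than $d$ from every point of $c_2$, and by the definition of Hausdorff distance (equivalently, by the long-distance pruning of Rule~1 applied to the mean centers and radii, exactly as in Lemma~\ref{lem6}) we obtain $d_H(c_1,c_2) > d$.

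The main obstacle is not the concluding inequality but the bookkeeping in the first step: one must argue rigorously that ``$P_a' \notin P_s'$'' forces $c_1$ to miss the \emph{entire} boundary zone toward $P_a'$, rather than merely failing to reach $P_a'$ itself. This hinges on reading the definition of $P_s'$ as capturing every partition whose adjacent boundary zone $c_1$ touches, so that non-membership of $P_a'$ is equivalent to $c_1$ staying clear of the full $2d$ buffer. Once that equivalence is secured, the multi-partition extent of $c_1$ becomes irrelevant, and the proof collapses to the single-interface estimate already established for first-type clusters.
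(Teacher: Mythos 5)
Your proposal is correct and follows essentially the same route as the paper: both reduce Lemma~\ref{lem7} to Lemma~\ref{lem6} by observing that $P_a'\notin P_s'$ means $c_1$ does not overlap the boundary zone adjacent to $P_a'$, so across that single interface $c_1$ is in the same configuration as a first-type cluster and the $2d$-wide buffer yields $d_H(c_1,c_2)>d$. Your explicit handling of the bookkeeping point --- that non-membership in $P_s'$ must be read as $c_1$ missing the entire buffer toward $P_a'$ --- is a welcome clarification that the paper's one-line proof leaves implicit.
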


\begin{proof}
Lemma~\ref{lem7} can be proved based on Lemma~\ref{lem6}. 
Since new cluster $c_1$ only covers the boundary zones of $P_s'$, it will not cover any boundary zone of $P_a'$ because $P_a' \notin P_{s}'$. 
Therefore, $c_1$ has a Hausdorff distance larger than $d$ with $c_2$ in $P_a'$ because of the same reason in Lemma~\ref{lem6}. 
\end{proof} 

Lemma~\ref{lem7} implies that a new cluster of the second type is only possible to match the ending clusters of last window which lie in the partitions whose boundary zone is covered by itself. 
Namely, the new clusters of the second type are impossible to match the ending clusters of last window which lie in the partitions that have no overlap with them. 

% According to the MTOD algorithm, it only does match checking on ending cluster with all the clusters in $P_{s1}$. And the clusters in $P_{s2}$ will not taken into this matching process with the guarantee of the Lemma \ref{lem7}.  Therefore, the correctness of MTOD algorithm for the second type can be proved. 

Our proposed MTOD framework is mainly based on the sector-based partition to achieve parallel evolving group discovery. For each new cluster of the first type, only the end clusters of last window in the same partition are verified. 
For each new cluster of the second type, only the end clusters of last window in the partitions whose boundary zone is covered by itself are verified. 
Therefore, the correctness of our proposed MTOD framework for the new clusters of the first and second types are both proved through Lemma~\ref{lem6} and Lemma~\ref{lem7}. 
% }

\begin{algorithm}[htbp]
\caption{Multi-Threading based Online Discovery Framework}
\label{alg-mtod}
\renewcommand{\algorithmicrequire}{\textbf{Input:}}
\renewcommand{\algorithmicensure} {\textbf{Output:}}
\begin{algorithmic}[1]
\Require{the current window $W_{t_i}^{t_j}$, $C_{t_j}$, $k_{c}$, $m_{c}$, $d$, $k_p$, $m_p$}
\Ensure {groups $CloGr$ in $W_{t_i}^{t_j}$}
\State $CanSet \leftarrow W_{t_{i-1}}^{t_{j-1}}.CanSet$; %$CloCr\leftarrow \emptyset$;
\For {each $Cr \in CanSet$}
    \State Remove $c_{t_{i-1}}$ from $Cr$; //delete clusters at time $t_{i-1}$
    \If {($|t_{j-1}-Cr.endclu.t| > j-i + 1 - k_c$)}
        \State Remove $Cr$ from $CanSet$;
    \EndIf
\EndFor
\State $CanSetCopy \leftarrow CanSet$;
\State $C^{P_{[1\dots n]}}_{t_j} \leftarrow \textsl{Partition}(C_{t_j})$
%\State $CanSetCopy^{Corblk{[1...n]}} \leftarrow \textbf{\textsl{Findcorblk}}(CanSetCopy)$;
\State \textbf{Create} Thread for each partition;
\State $thread\_num \leftarrow n $; %$block_i \in [1...n]$;
\State\textbf{begin} \textsl{Multi-Thread process}:
\For {each $c_{t_j}\in C^{P_i}_{t_j}$}
   \State $P_s' \leftarrow LastPartition(c_{t_j})$;
    \State $c_{t_j}.CanSet \leftarrow CanSetCopy(P_s')$;
    \For {each $Cr \in c_{t_j}.CanSet$}
        \If {($Cr.endclu.st == uncheck$)}
            \If {($d_H(c_{t_j},Cr.endclu)\leq d*\Delta t$)}
                \State $Cr.endclu.st\leftarrow match$;
                \State $Cr.allendclu.st\leftarrow match$;
                \If {$Cr.len \ge k_c - 1$}
                	\State $Ag, Gr \leftarrow$ VRA($Cr+\langle c_{t_j}\rangle$)
                \EndIf
                \State \textbf{synchronized}($CanSet$ and $CloGr$)
                \State \qquad Insert $Cr+\langle c_{t_j}\rangle$ into $CanSet$;
                \State \qquad $CanSet.Cr.endclu.st \leftarrow match$;
                \State \qquad $CloGr \leftarrow CloGr \cup Gr$;
            \Else
                \State $Cr.endclu.st\leftarrow unmth$;
            \EndIf
        \EndIf
    \EndFor
\EndFor
\State \textbf{end} \textsl{Multi-Thread process};
\For {each $Cr \in CanSet$}
    \If {($Cr.endclu.st!=macth$ $\&$ $Cr.len \ge k_c$)}
        \State $Ag, Gr \leftarrow$ VRA($Cr$);
        \State $CloGr \leftarrow CloGr \cup Gr$;
    \EndIf
\EndFor
\end{algorithmic}
\end{algorithm}

\subsection{MTOD Framework}

% {\color{red} 
Algorithm~\ref{alg-mtod} shows the MTOD framework. First, we update the closed crowd candidates by removing the information of expired timestamp $t_{i-1}$ and the unnecessary maintained candidates (lines 1-5). Second, MTOD uses sector-based partition method to divide the new clusters into $n$ partitions, and creates one thread for each partition. 
Next, in each thread process, we apply the proposed Incre and VRA to update the closed candidates in $CanSet$ and discover the closed aggregations and the corresponding groups, as shown in lines 10-27. 
Finally, for each unmatched closed crowd candidate after multi-thread process, MTOD uses VRA to check if it is a close aggregation (lines 28-31).
Note that $C_{t_j}^{P_i}$ denotes the set of new clusters in partition $P_i$ at the current timestamp. % and $CanSetCopy(P_i)$ denotes the local closed crowd candidates in partition $P_i$ and adjacent partitions of $P_i$.
% Since the partition is keeping changing and to distinguish the current time stamp and last time stamp, we use $P'$ to represent the partitions in the last timestamp. 
Function $LastPartition(c_{t_j})$ returns the partitions of last timestamp covered by $c_{t_j}$. Specifically, if $c_{t_j}$ is a cluster of the first type, $LastPartition(c_{t_j})$ returns the partition that $c_{t_j}$ falls into, otherwise, $LastPartition(c_{t_j})$ returns all the partitions $c_{t_j}$ covers. 
$CanSetCopy(P_s')$ denotes the local closed crowd candidates in all partitions in $P_s'$.
% }

%\vspace{-0.1cm}
\section{Experiment}
\label{sec.exp}

\subsection{Datasets}
\label{subsec.datasets}
We use three real world datasets to evaluate the effectiveness and efficiency of our proposed algorithms compared against the state-of-the-art.

\textit{Pedestrian data.} The dataset~\cite{zanlungo2014potential}\cite{zanlungo2015spatial}\cite{brscic2013person} contains the pedestrian position and group annotations in the ATC shopping center in Osaka, Japan. The tracking of pedestrians are done using automatic tracking systems, whereas the groups are labeled manually.
% {\color{red} 
In the dataset, there are two types of files, person tracking files and group files. Person tracking files contain the data for all persons that were tracked in the environment on a given day and period of time. Group files contain the group annotations for the given day. Only pedestrians in groups are listed, pedestrians walking alone are not included. The group files can be used to be compared with our detected results and then get \textit{precision} and \textit{recall}. 
% }
The dataset contains 8 experiment days, and the data for 4 one-hour periods (10:00-11:00, 12:00-13:00, 15:00-16:00, and 19:00-20:00) is provided for each day. The time domain is split into 5 seconds of granularity in our experiments.
% {\color{red} 
More details about this dataset can be found in this  website\footnote{http://www.irc.atr.jp/sets/groups/}.
% }

\textit{Taxi data.} %We use a real life trajectory dataset $Taxi$ in the experiment study. 
The dataset is from T-Drive project~\cite{yuan2013t} collected by Microsoft Research Asia.
T-drive is a smart driving direction services based on GPS trajectories of a large number of taxis, which includes real-world trajectories generated by 30,000 taxis in Beijing in a period of 3 months.
In our experiments, we use a sample of the dataset that contains one week trajectories of 10,357 taxis in a period from February 2 to February 8, 2008.
The total number of points in this dataset is about 15 million and the total distance of the trajectories reaches 9 million kilometers.
We divide a day into four time periods, morning and evening peak time (7:00-10:00 and 16:00-20:00), work time in morning and noon (10:00-13:00) and work time in afternoon (13:00 to 16:00). We interpolate the time domain into the granularity of minute on $Taxi$ dataset.

\textit{Traffic data.}
The dataset is citywide surveillance traffic data collected in Jinan, China. 
This dataset contains $405,370,631$ records of total $11,299,927$ vehicles from $1,704$ surveillance cameras over the period of August 1st, 2016 - August 31st, 2016. We also interpolate the time domain into the granularity of minute on \textit{Traffic} dataset.

\subsection{Experimental Setting}
All algorithms in the experiment are implemented in Java on CHAOS stream engine~\cite{gupta2009chaos}. All tests run on a computer equipped with Inter Xeon E5-2660 CPU (2.2GHz), 16G memory, and Windows Server 2012 operating system.
CHAOS platform supports multiple-dimensional data and count-based/time-based sliding window streams. The arrival rate of the streaming data also can be dynamic tuned in CHAOS engine. In our experiments the arrival rate is fixed as 500k tuples per second.

Our experimental study focuses on evaluating the effectiveness and efficiency of proposed evolving group and corresponding discovering algorithms. Therefore, we compare our proposed algorithm against the state-of-art $gathering$ pattern discovery algorithm~\cite{zheng2013discovery}\cite{zheng2014online} both on discovered patterns and utilized CPU time.
Specifically, we evaluate the effectiveness of our framework on \textit{Pedestrian} and \textit{Taxi}  datasets. We evaluate the efficiency and scalability of our online framework on large-scale \textit{Taxi} and \textit{Traffic} datasets. 
More specifically, we first do a pre-processing on the group files in \textit{Pedestrian} dataset and get a list of real groups for each time period, which is denoted as $Gr_{true}$. The list of groups detected by our method or competitor for each time period is denoted by $Gr_{test}$. 
So a group is stated as ``true positive" if it is both in $Gr_{test}$ and $Gr_{true}$. A group is stated as ``false positive'' if it is in $Gr_{test}$ but not in $Gr_{true}$. A group is stated as ``false negative" if it is not in $Gr_{test}$ but in $Gr_{true}$. \textit{Precision} and \textit{recall} are then calculated based on the above metrics. 

\subsection{Effectiveness}
\label{subsec.effect}

\subsubsection{Effectiveness on Pedestrian data}
\label{subsubsec:effonPedes}
First, we evaluate the effectiveness of our proposed evolving group pattern on $Pedestrian$ data compared against \textit{gathering} pattern \cite{zheng2013discovery}\cite{zheng2014online}. Figure~\ref{fig.prworkday} and Figure~\ref{fig.prweekend} show the comparison results of two patterns on workday and weekend, respectively. We set $MinPts$=2 and $Eps$=1.5 meters for DBSCAN, and set $w$=14, $k_c$=10, $m_c$=2, $k_p$=8, $m_p$=2 and $d$=5 meters for evolving group. Since moving groups in shopping center generally do not change for a short period, we set $m_g$=1 and $k_g$=14. Correspondingly, we set $k_c$=28, $m_c$=2, $k_p$=16, $m_p$=2, and $d$=5 meters for gathering pattern.
% {\color{red} 
% More specifically, we first do a pre-processing on the group files in the dataset and get a list of real groups for each time period, which is denoted as $Gr_{true}$. The list of groups detected by our method or competitor for each time period is denoted by $Gr_{test}$. 
% So a group is stated as ``true positive" if it is both in $Gr_{test}$ and $Gr_{true}$. A group is stated as ``false positive'' if it is in $Gr_{test}$ but not in $Gr_{true}$. A group is stated as ``false negative" if it is not in $Gr_{test}$ but in $Gr_{true}$. \textit{Precision} and \textit{recall} are then calculated based on the above metrics. 
% }
As we can see, the \textit{recall} of evolving group is much better than that of gathering pattern in all periods on both workday and weekend, while evolving group can also achieve similar \textit{precision} to gathering. This indicates that our evolving group can capture more actual groups compared to gathering. This is because people in a group (e.g., a family) may be separated briefly when they are shopping in the shopping center. 
As shown in Figure~\ref{fig.pworkday}, evolving group outperforms gathering pattern in term of \textit{precision} in all periods but morning on workday. 
Because people in the shopping center are generally relative sparse in the morning of workday, thus dense crowds are easily identified using gathering pattern. 
However, evolving group significantly outperforms gathering in term of \textit{recall} on workday morning (Figure~\ref{fig.rworkday}). This is because evolving group not only discovers the dense crowds but also captures the relaxed groups. 

\begin{figure}[htbp]
\centering
\subfigure[Precision]{
\label{fig.pworkday}%
\includegraphics[width=0.4\textwidth]{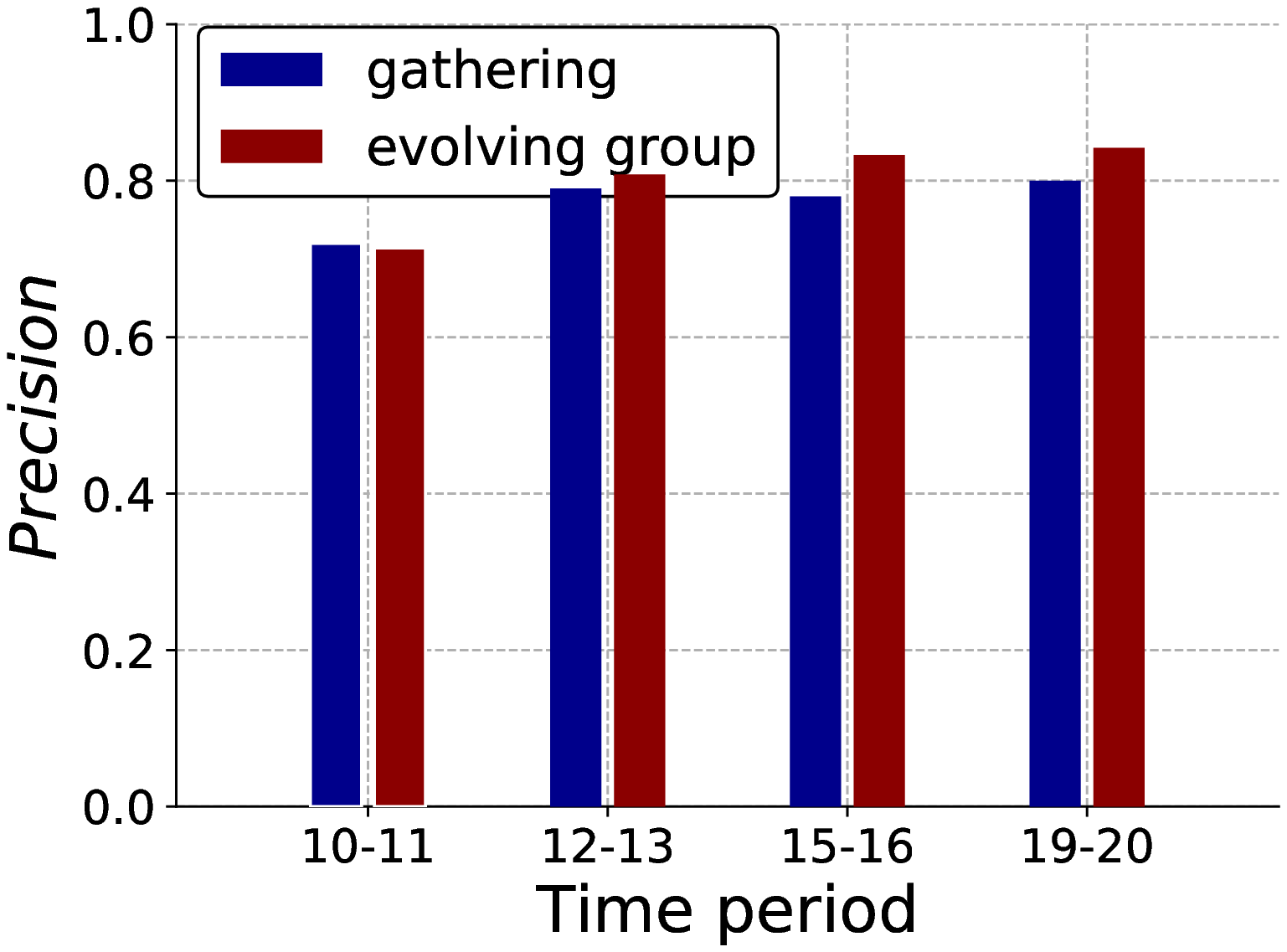}}
\hspace{5mm}
\subfigure[Recall]{
\label{fig.rworkday}%
\includegraphics[width=0.4\textwidth]{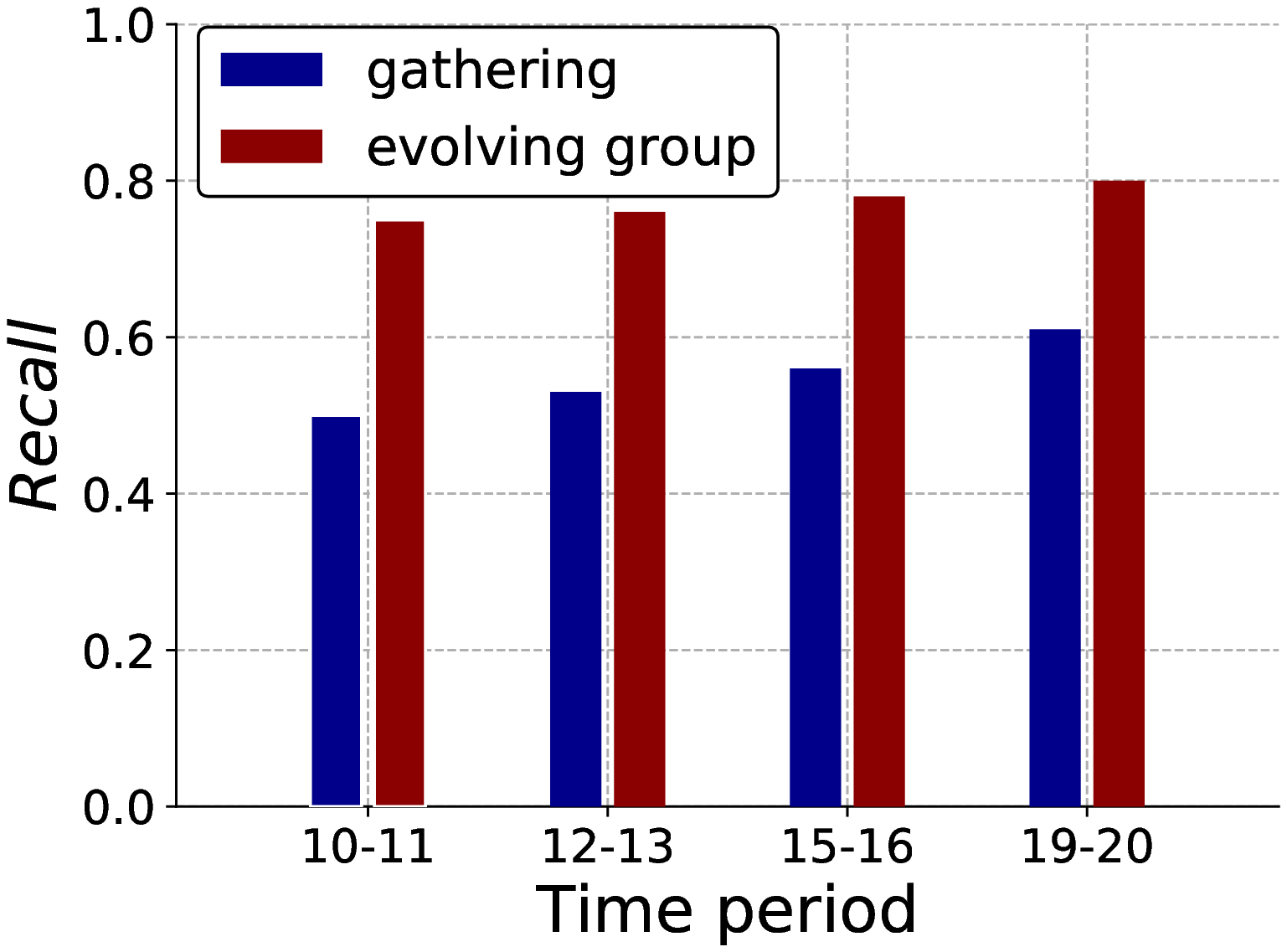}}
\caption{Performance comparison on workday}
% \vspace{-0mm}
\label{fig.prworkday}
\end{figure}

\begin{figure}[htbp]
\centering
\subfigure[Precision]{
\label{fig.pweekend}%
\includegraphics[width=0.4\textwidth]{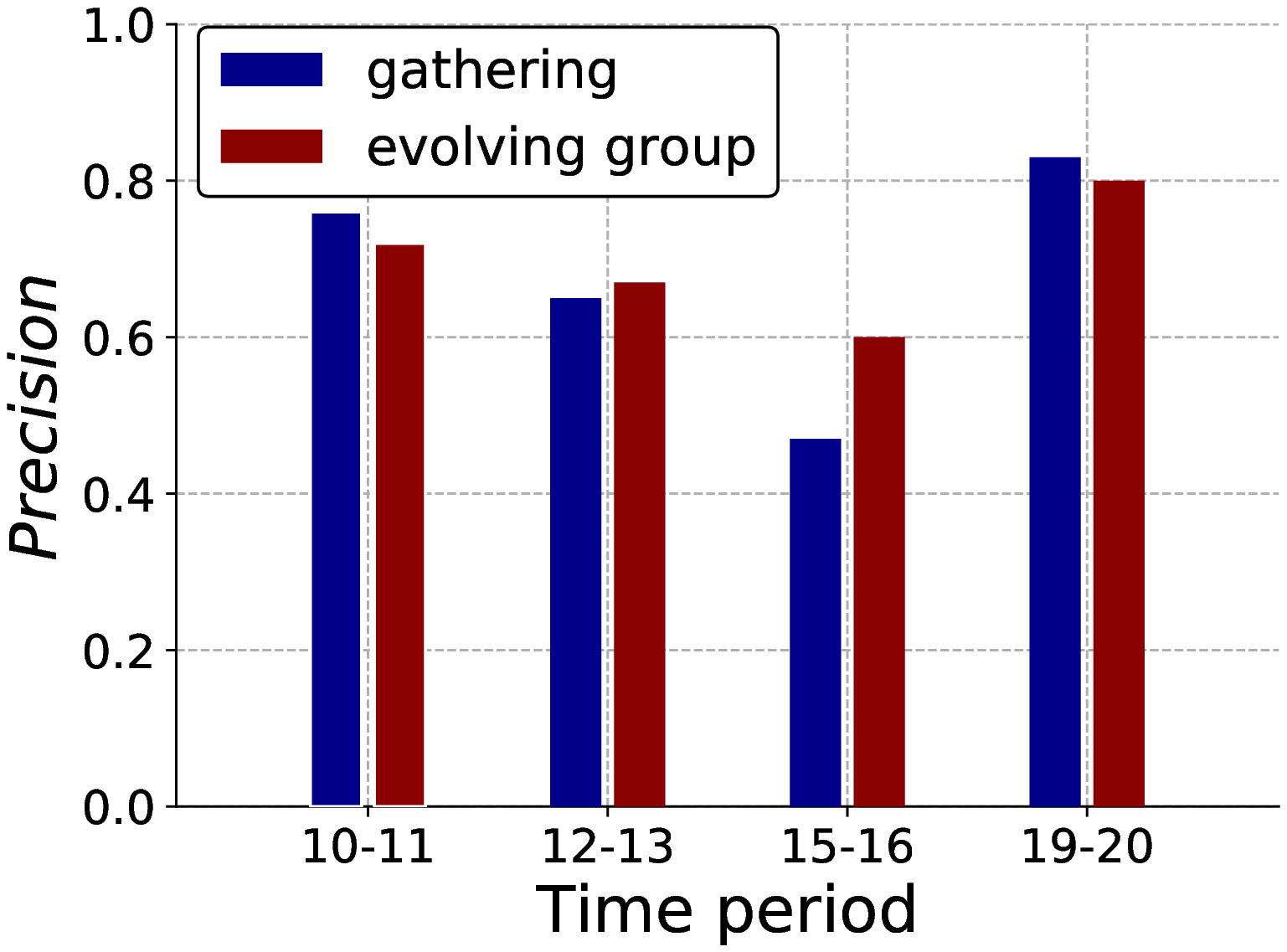}}
\hspace{5mm}
\subfigure[Recall]{
\label{fig.rweekend}%
\includegraphics[width=0.4\textwidth]{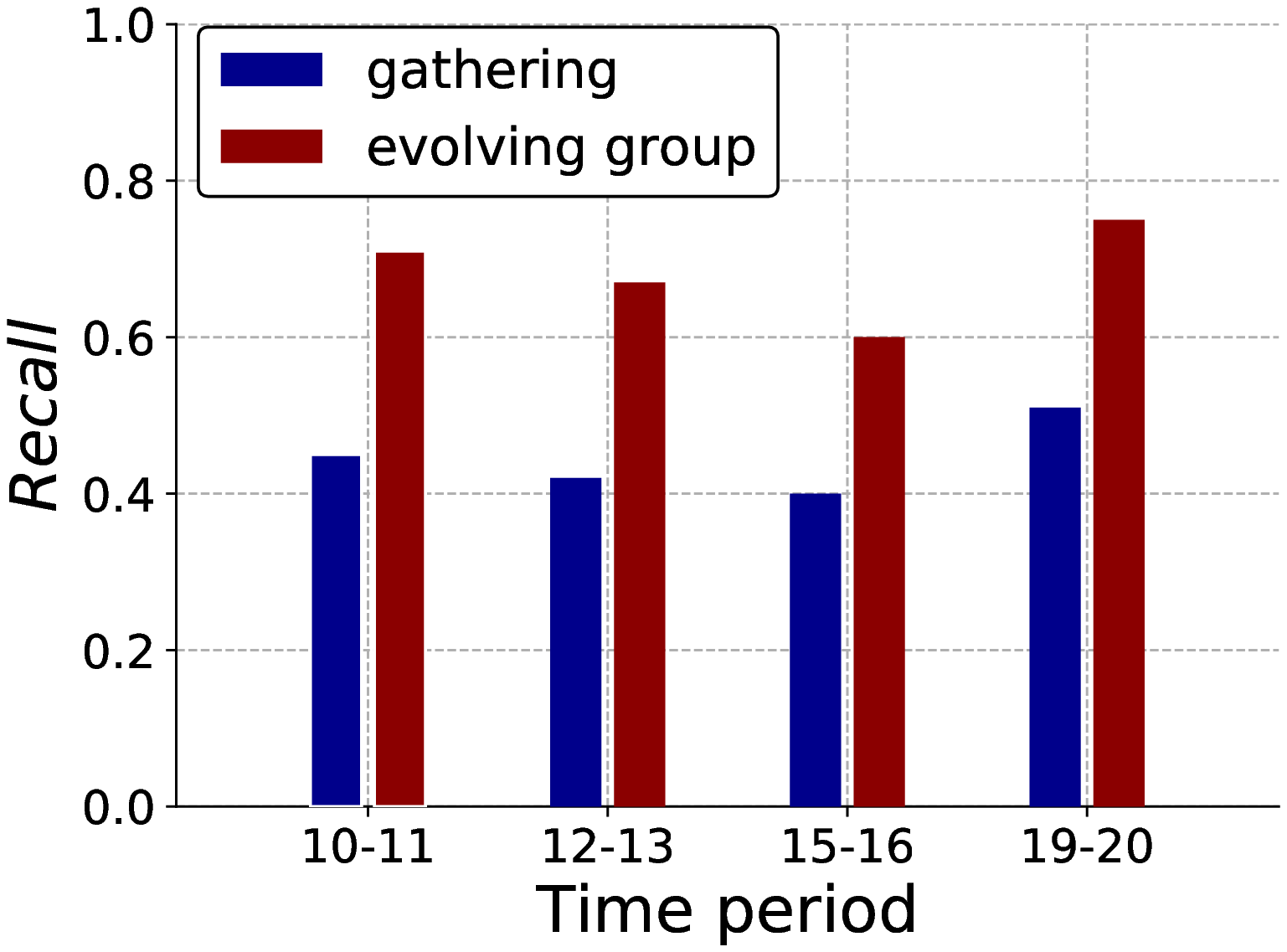}}
\caption{Performance comparison on weekend}
% \vspace{-3mm}
\label{fig.prweekend}
\end{figure}

From Figure~\ref{fig.prweekend}, we can see that our evolving group significantly outperforms gathering pattern in both precision and recall on weekend afternoon. This is because there are lots of people in the shopping center on weekend afternoon. Gathering pattern easily misses the actual relaxed groups in crowding scenes due to strict continuous time constraint, while evolving group just can capture such groups by ignoring some timestamps in the sliding window to fit the characteristics of groups in the real-world applications.

\subsubsection{Effectiveness on Taxi data}

We now evaluate the effectiveness of our proposed evolving group in case study of traffic condition on $Taxi$ data.
Intuitively, a serious traffic congestion is easily captured by a gathering pattern, because many vehicles aggregate in a dense cluster with slow speeds for a relatively long time. However, how a traffic jam is developed and formed? What is causal relationship between the contiguous short jams? What is the trend of a traffic congestion? Is the congestion becoming more serious? or being easing up? The gathering pattern is not applied to capture the evolving problems, while our proposed evolving group pattern exactly is to discover the evolving group events in dynamic trajectory streams.

In this experiment, we also divide the $Taxi$ data into two categories: workday and weekend.
We obtain the snapshot clusters at each timestamp by setting $MinPts$=$5$ and $Eps$=$300$ meters. Figure~\ref{fig.number} shows the average number of patterns discovered by our proposed algorithm on a single day with the settings of $w$=6, $k_c$=5, $m_c$=8, $k_p$=4, $m_p$=5, $m_g$=$0.7$, and $d$=$300$ meters (i.e., a group of 5 or more core members travelling at least 5 timestamps in a 6-minute sliding window). We select the closed evolving groups with $k_g$ $\geq$ 9 (i.e., an evolving group lasting for at least 9 consecutive windows). As comparison, we also search for the gathering patterns at the corresponding settings $k_c$=15, $m_c$=8, $k_p$=11, $m_p$=5 (i.e., a gathering of 5 or more participators travelling together for a period of at least 15 minutes).

In Figure~\ref{fig.workday}, we can find that the overall trend of evolving groups is consistent with the number of gathering patterns, which reflects the severe traffic congestion during the rush time on workday in Beijing. However, more evolving groups are captured in traffic streams compared with gathering pattern, especially during two peak times. This is because gathering only focuses on the serious traffic jams that last for a period of fixed consecutive time units, while our evolving group also tracks the short-lived aggregations of vehicles during non-consecutive time to monitor if they are becoming more serious or getting ease, except the long traffic congestions. 
%For example, although there are no serious traffic jams in some core areas, many vehicles might crowd and wait at consecutive crossings during peak time, which easily form an evolving group, implicating the causal relationships between the crowds of vehicles in traffic streams.
Figure~\ref{fig.weekend} shows the number of discovered patterns on a weekend day. As we see, there are most traffic jams during morning peak time, and afternoon time is followed by. This is because, we learn that the weekend (February 2-3, 2008) approached the Spring Festival of China. Many companies arranged working days on that weekend. Therefore the results shows both characteristic of workday and weekend. However, compared to gathering, our evolving group also detects the traffic congestions during afternoon time, which is consistent with traffic conditions on the weekend afternoon before the Spring Festival in Beijing. Most of citizens go out to purchase the necessities, food and gifts for Festival in core areas of business street and shopping malls, or visit relatives and friends on weekend before the Spring Festival.

Next, we compare the average length of discovered patterns on a single workday and a weekend day. We select the closed evolving groups whose length $ k_g \geq$ 15 to make number of evolving groups be equal to the number of gatherings. From Figure~\ref{fig.length}, we can easily see that the average length of evolving group is larger than that of gathering at all time period, meaning that our group pattern can detect the crowding events earlier or track the trend of the events more time units.
In particular, the average length of evolving groups is much larger than that of gatherings during evening peak time, reaching at 15 minutes gap. This may be because that our framework captures the vehicle group events before the serious traffic jams are formed between 16:40 and 5:10 by observing the discovered patterns. During the afternoon on the weekend, our evolving group also senses the traffic congestions longer than gathering, reflecting the real non-smooth transportation condition.

\begin{figure}[htbp]
\centering
\subfigure[Workday]{
\label{fig.workday}%
\includegraphics[width=0.4\textwidth]{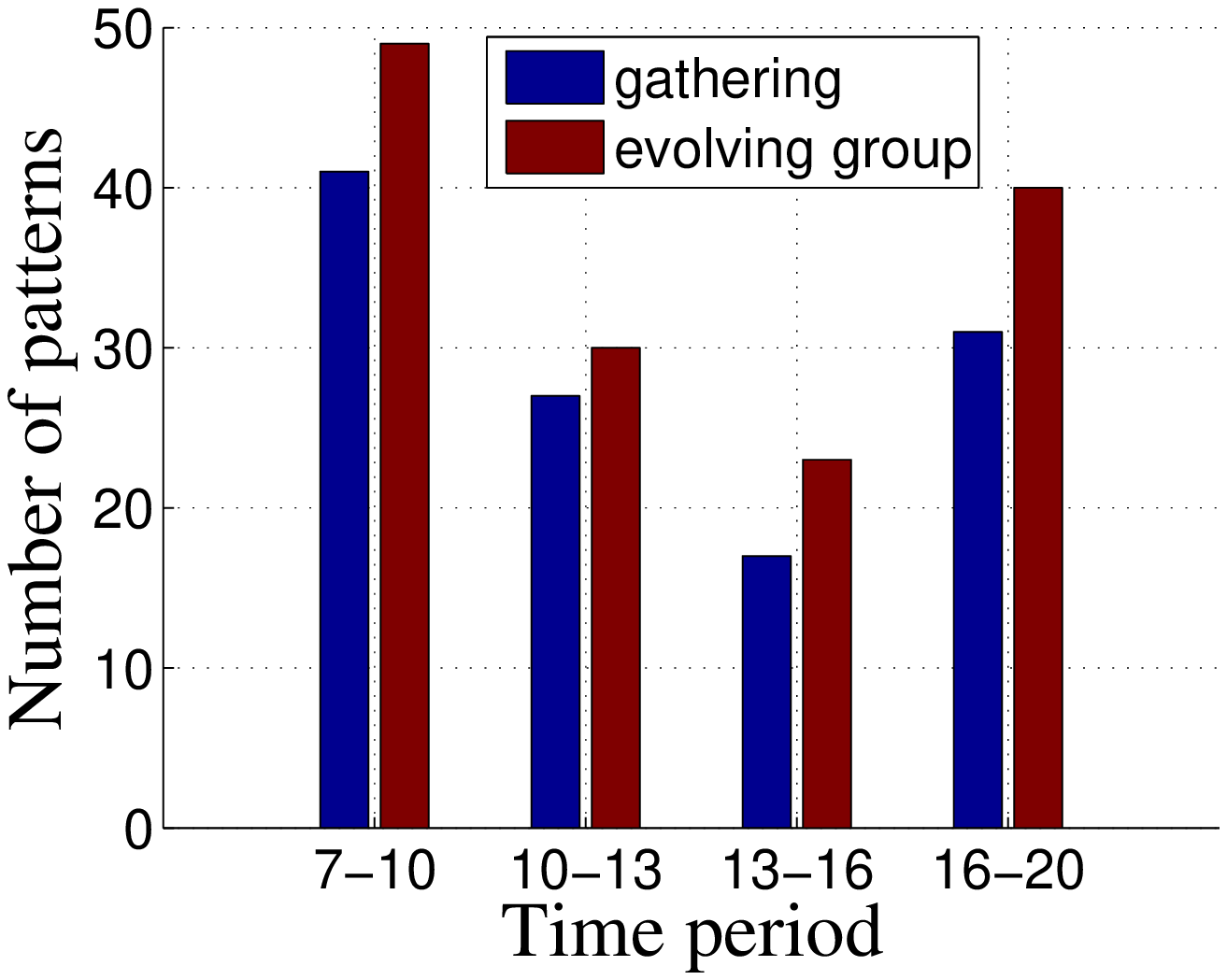}}
\hspace{5mm}
\subfigure[Weekend]{
\label{fig.weekend}%
\includegraphics[width=0.4\textwidth]{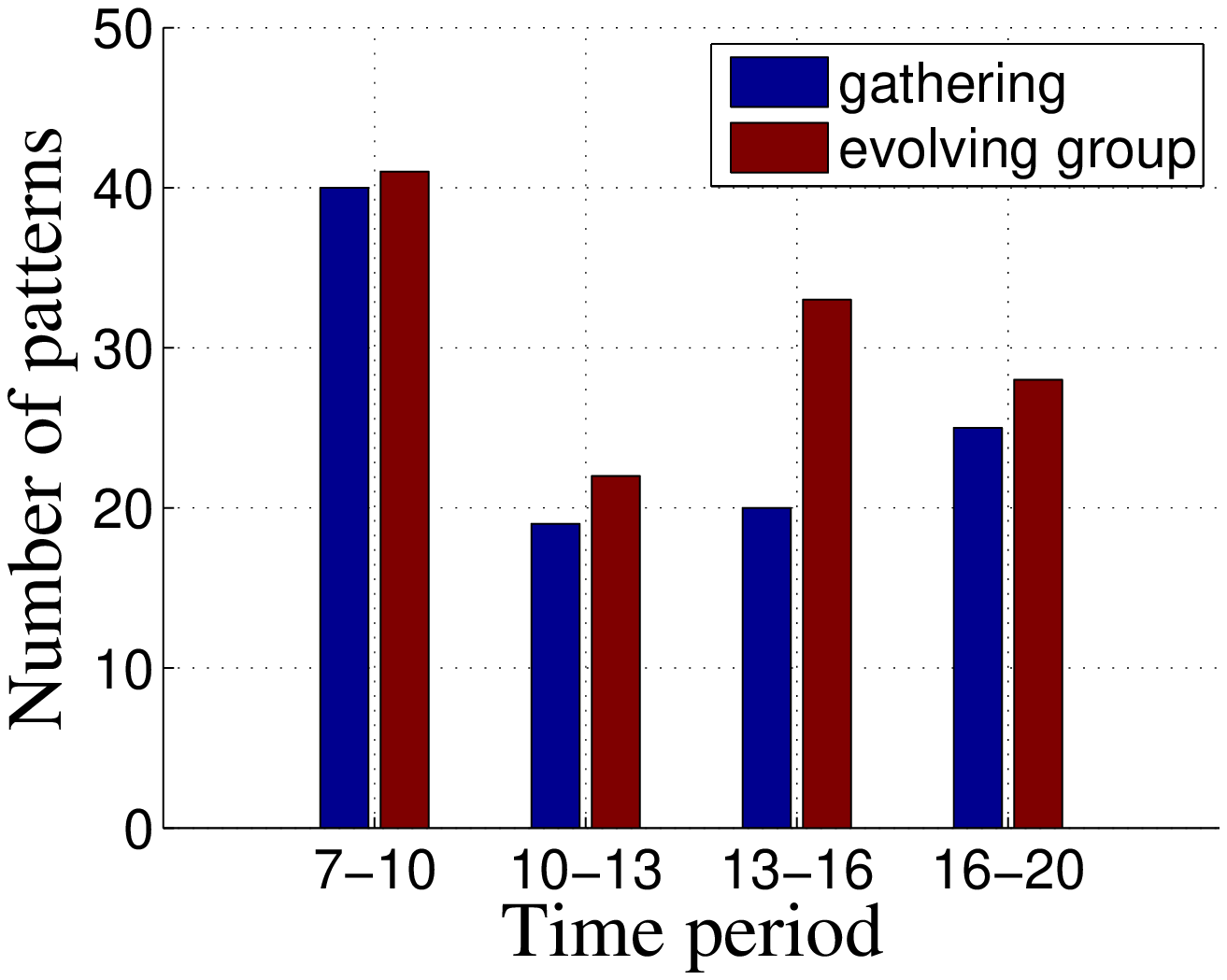}}
\caption{Number of discovered patterns}
% \vspace{-5mm}
\label{fig.number}
\end{figure}

\begin{figure}[htbp]
\begin{center}
\subfigure[Workday]{
\label{fig.workdaylen}%
\includegraphics[width=0.4\textwidth]{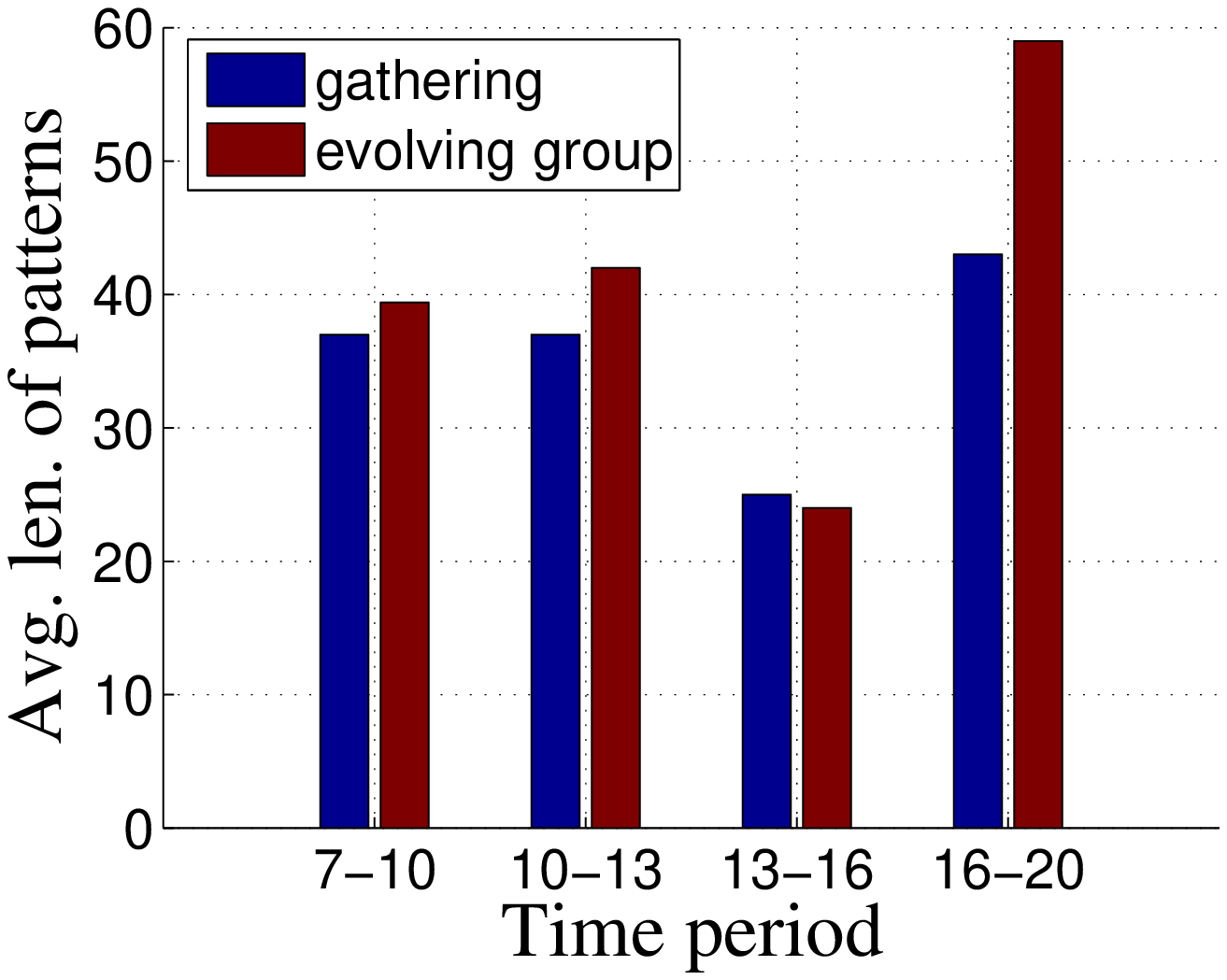}}
\hspace{5mm}
\subfigure[Weekend]{
\label{fig.weekendlen}%
\includegraphics[width=0.4\textwidth]{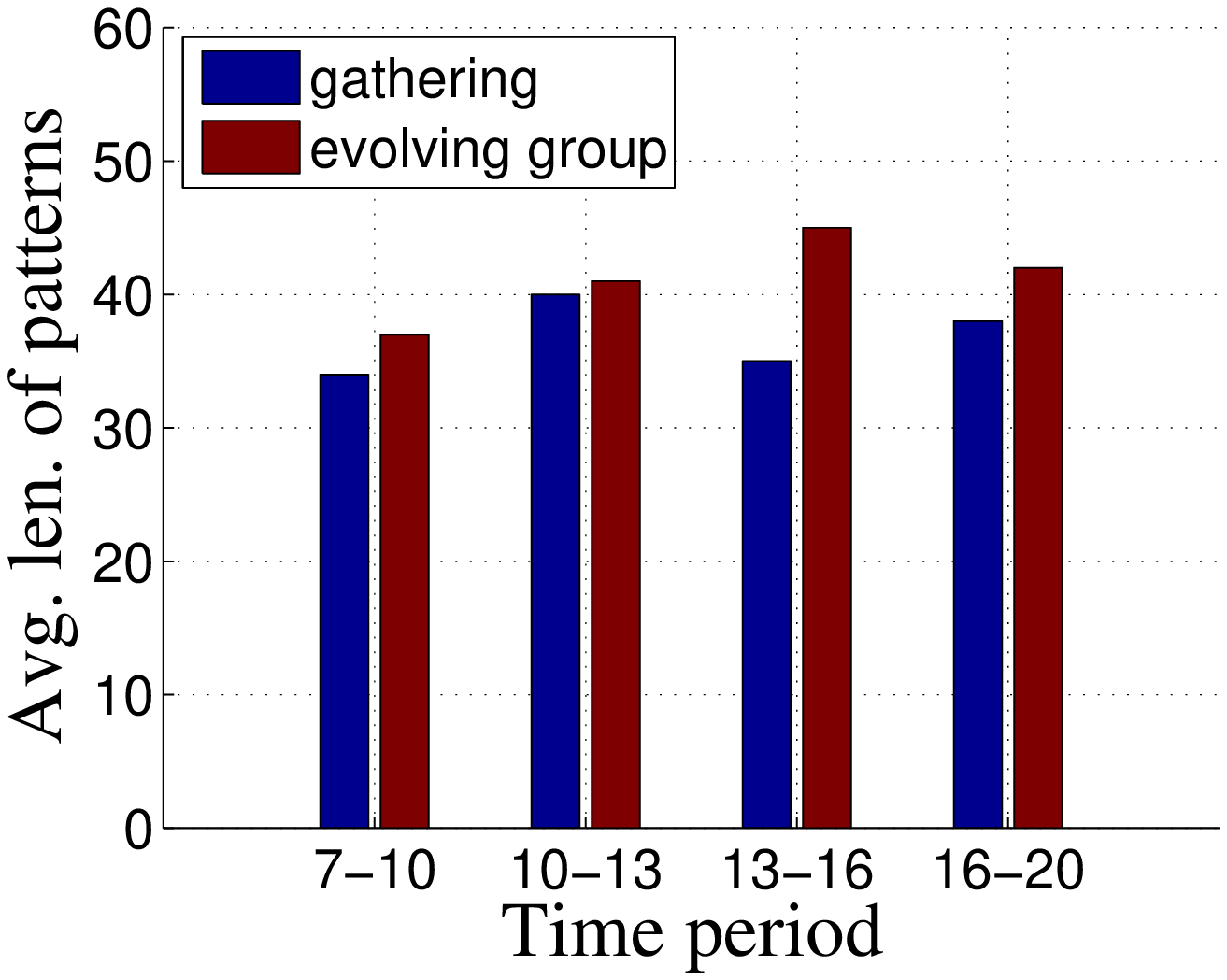}}
\caption{Average length of discovered patterns}
% \vspace{-5mm}
\label{fig.length}
\end{center}
\end{figure}

% \begin{figure}[htbp]
% \begin{center}
% \subfigure[Avg. \# of discovered patterns]{\label{fig.workday}
% \includegraphics[width=1.68in]{figure/result24.eps}}
% \subfigure[Avg. length of a single pattern]{\label{fig.workdaylen}
% \includegraphics[width=1.68in]{figure/number24.eps}}
% \caption{Effectiveness study}
% \end{center}
% \end{figure}

\begin{figure*}[tb]
\centering
\subfigure[case 1]{
\label{fig:casestudy1}
\includegraphics[width=0.95\textwidth]{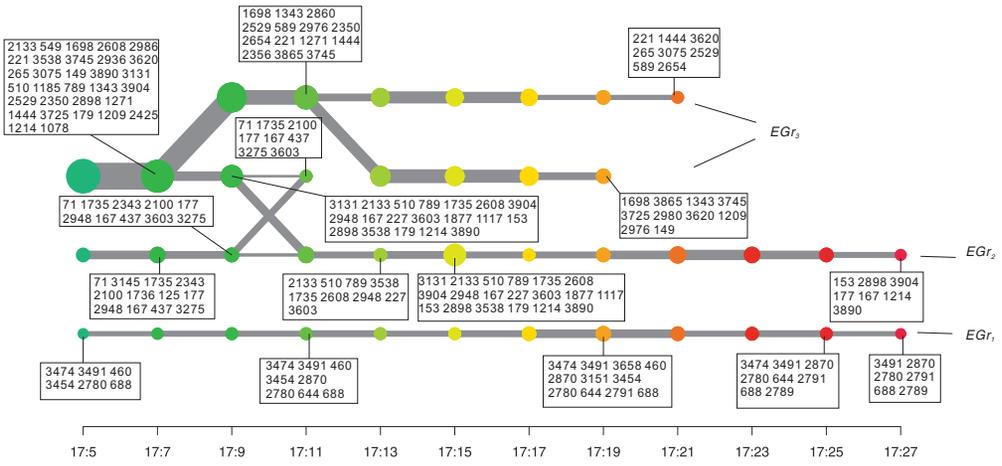}}
\subfigure[case 2]{
\label{fig:casestudy2}
\includegraphics[width=0.95\textwidth]{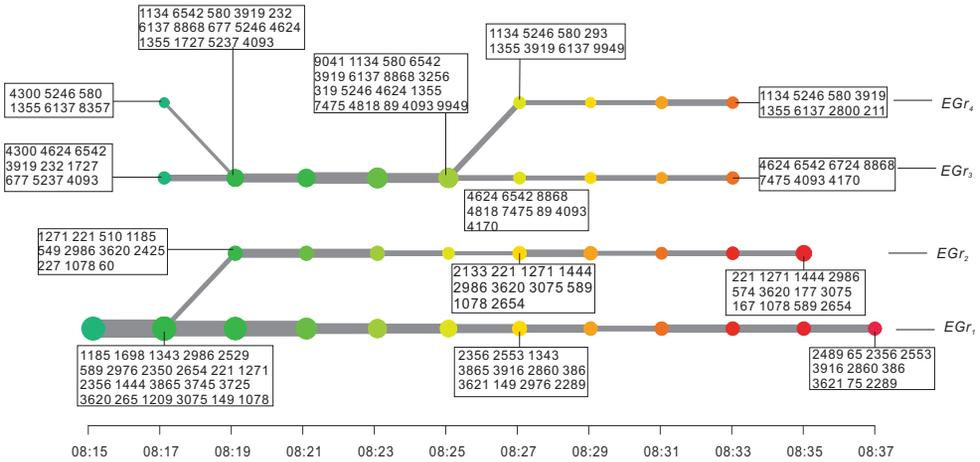}}
\caption{Case studies on effectiveness}
% \vspace{-3mm}
\label{fig:casestudy}
\end{figure*}

\subsubsection{Case studies}

We employ a visualized method with two cases to demonstrate the effectiveness of our evolving group. As shown in Fig. \ref{fig:casestudy}, each circle represents a group in sliding window. For two connected groups by a solid line, the latter is evolved from the former, and the weight of line indicates the number of common members. We also use the size of circle to denote the number of objects in the group, and the color of circle to denote the life time of the evolving group, gradually changing from green to red. We also plot out the IDs of objects in the rectangle for some significant groups. It is easy to see the advantages of evolving groups from Fig. \ref{fig:casestudy1} and \ref{fig:casestudy2}. Even if a gathering pattern has the same length as an evolving group, it can not contain such rich information as an evolving group reveals.

\textbf{Case 1:}
Fig. \ref{fig:casestudy1} shows several evolving groups during the period of 17:05-17:27 on Feb 4. We simply mark out 3 representative evolving groups as shown in Fig. \ref{fig:casestudy1}.
Obviously, $EGr_1$ is independent of other evolving groups, however, it evolves continuously over time, for example, the evolving group grows to 11 core members at clock 17:19, while shrinks into 6 participators at 17:27.
$EGr_3$ represents a serious traffic jam for 6 minutes at beginning of the case. However, the traffic jam gets alleviated by splitting into two smaller groups at 17:13, and then the two evolving groups end at about 17:19 and 17:21 respectively, meaning that transportation condition becomes much smoother.
Another $EGr_2$ interacts with $EGr_3$ during [17:09, 17:11]. We observe that $EGr_2$ suddenly becomes much bigger at time 17:15, and then be much smaller at next window. This may be caused by an emergency, such as a traffic accident or an emergency repair.

\textbf{Case 2:}
In Fig. \ref{fig:casestudy2}, four evolving groups during 8:15-8:37 on Feb 4 are shown in this case.
$EGr_1$ shows the development process of a continuous traffic congestion from very serious towards somewhat light. However, we can see that the participators at time 8:17, 8:27 and 8:37 also change significantly over time, but our model also captures the traffic jam using gradual evolution in sliding window.
$EGr_3$ and $EGr_4$ share the first half part, which demonstrates the forming process of a serious traffic jam vividly. This can be revealed by that most participators keep evolving into next group continuously from 8:17 to 8:25. The cause of the phenomenon that the jam is separated into two groups at time 8:27 may be an efficient shunting strategy. Moreover, we can get that the two vehicle teams are scattered gradually from 8:28 to 8:35 based on the evolving groups shown in Fig. \ref{fig:casestudy2}.

%\vspace{-0.2cm}
\subsection{Efficiency}
%\vspace{-0.15cm}
Next, we compare the performance of our DEG method with the discovery algorithm of gathering pattern in \cite{zheng2014online}. We denote their crowd detection and closed gathering discovery in gathering pattern \cite{zheng2014online} as \textbf{G-crowd} and \textbf{TAD*} respectively.
%In this experiment, we extend the algorithm to also support streaming trajectory data. More specifically, it only process the new trajectory points of moving objects at new timestamp incrementally.
In particular, we measure the running time of each window in different parameter settings. The results are averaged over ten thousand windows. Each window slides by one minute. 
%We omit the experiment studying $k_c$ and $k_p$, since $k_c$ and $k_p$ only affect the number of closed crowds and participators, but have little impact on the time cost of algorithms as above complexity analysis (Sec.~\ref{subsec.complex}).
Since $k_c$ and $m_c$ mainly affect the number of crowds, we only measure the running time of closed crowds discovery with respect to $k_c$ and $m_c$ using two pruning methods: a) -prune, our cluster pruning strategy; b) -grid, grid-based indexing used in~\cite{zheng2013discovery}\cite{zheng2014online}.

\subsubsection{Running time w.r.t. thresholds $k_c$ and $k_p$}

\begin{figure}[htbp]
\centering
\subfigure[Threshold $k_c$]{\label{fig:kc}%
\includegraphics[width=0.4\textwidth]{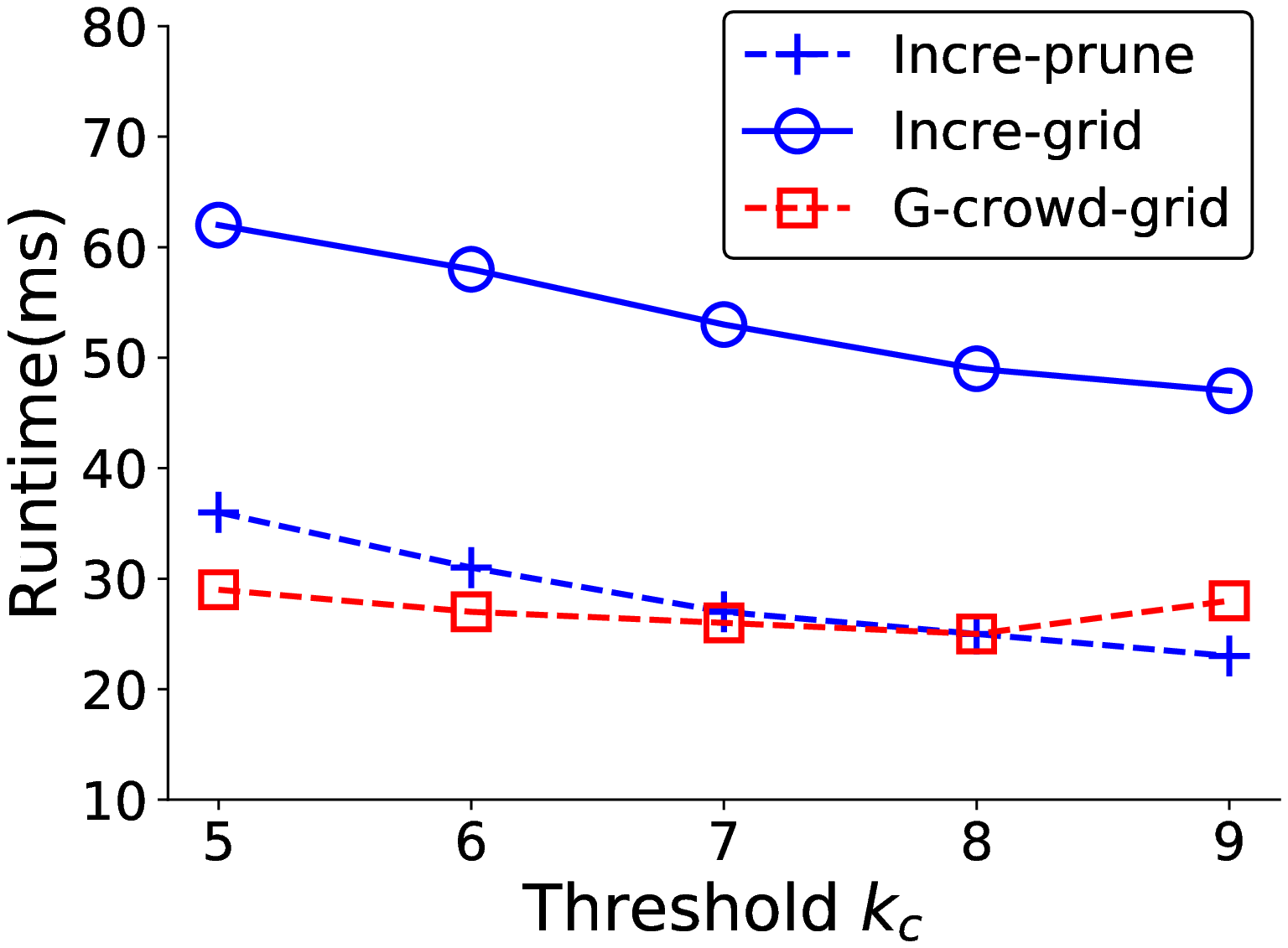}}
\hspace{5mm}
\subfigure[Threshold $k_p$]{\label{fig:kp}%
\includegraphics[width=0.4\textwidth]{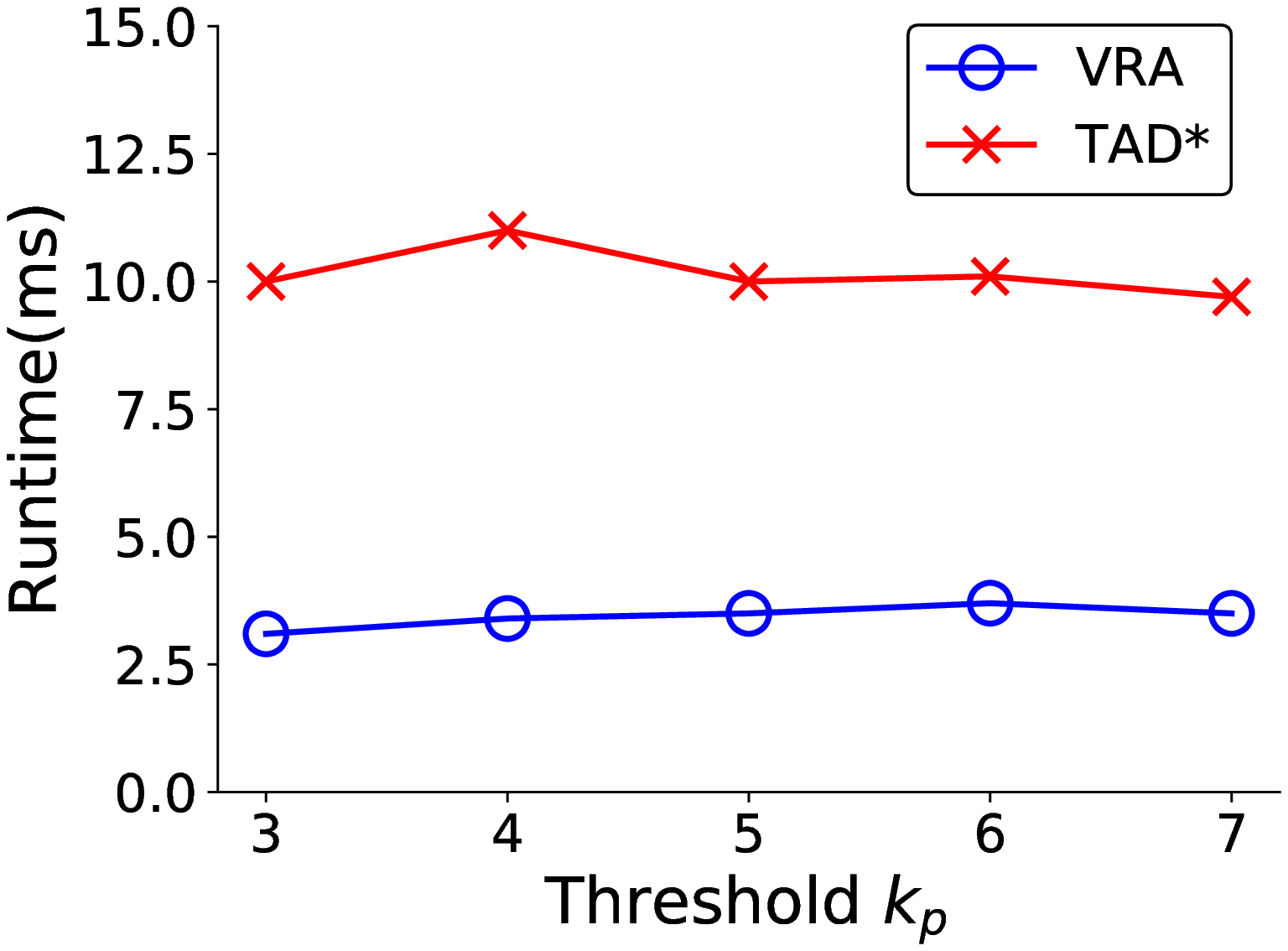}}
%\vspace{-0.3cm}
\caption{Running time w.r.t. thresholds $k_c$ and $k_p$}
\label{fig:kckp}
%\vspace{-0.3cm}
\end{figure}

We first evaluate the impacts of thresholds $k_c$ and $k_p$ on the performance of two methods when fix $w$=10 $m_c$=8, $m_p$=5, $d$=300 meter, $|O_{DB}|$=10,000, and $m_g$=0.7. Figure~\ref{fig:kc} shows the results of running time with respect to $k_c$ when $k_p$=5. Figure~\ref{fig:kp} shows the results of running time with respect to $k_p$ when $k_c$=9.

As shown in Figure~\ref{fig:kc}, the CPU time of our Incre method increases as $k_c$ decreases.  This is because $k_c$ in our evolving group definition would affect the number of closed crowd candidates stored in $CanSet$. The potential crowds whose length is not less than $k_c$ would certainly be maintained, resulting in more detection time of crowd candidates for smaller $k_c$. This is consistent with the above complexity analysis in Sec.~\ref{subsec.complex}.
$k_c$ in gathering pattern only affects the number of detected gatherings, thus has no impact on the detection time. 
However, Incre with our proposed cluster pruning strategy can achieve similar performance to G-crowd, and outperforms G-crowd when $k_c$=9 in context of $w$=10.
%As expected, $k_c$ has no impact on the efficiency of two frameworks, this is because $k_c$ only affects the number of closed crowds, but has no impact on the time cost of finding closed crowds. The potential crowds whose length is less than $k_c$ also be maintained.

From Figure~\ref{fig:kp}, we can see that VRA is superior to TAD* in term of CPU time. This is because we reuse the detection process of participators in finding closed aggregation as window slides, although we maintain much more closed crowd candidates. In addition, $k_p$ has less impact on running time of both VRA and TAD* methods. This is because $k_p$ only affects the number of participators, but no impact on the time cost of detecting them.

\subsubsection{Running time w.r.t. thresholds $m_c$ and $m_p$}

\begin{figure}[htbp]
\centering
\subfigure[Threshold $m_c$]{\label{fig:mc}%
\includegraphics[width=0.4\textwidth]{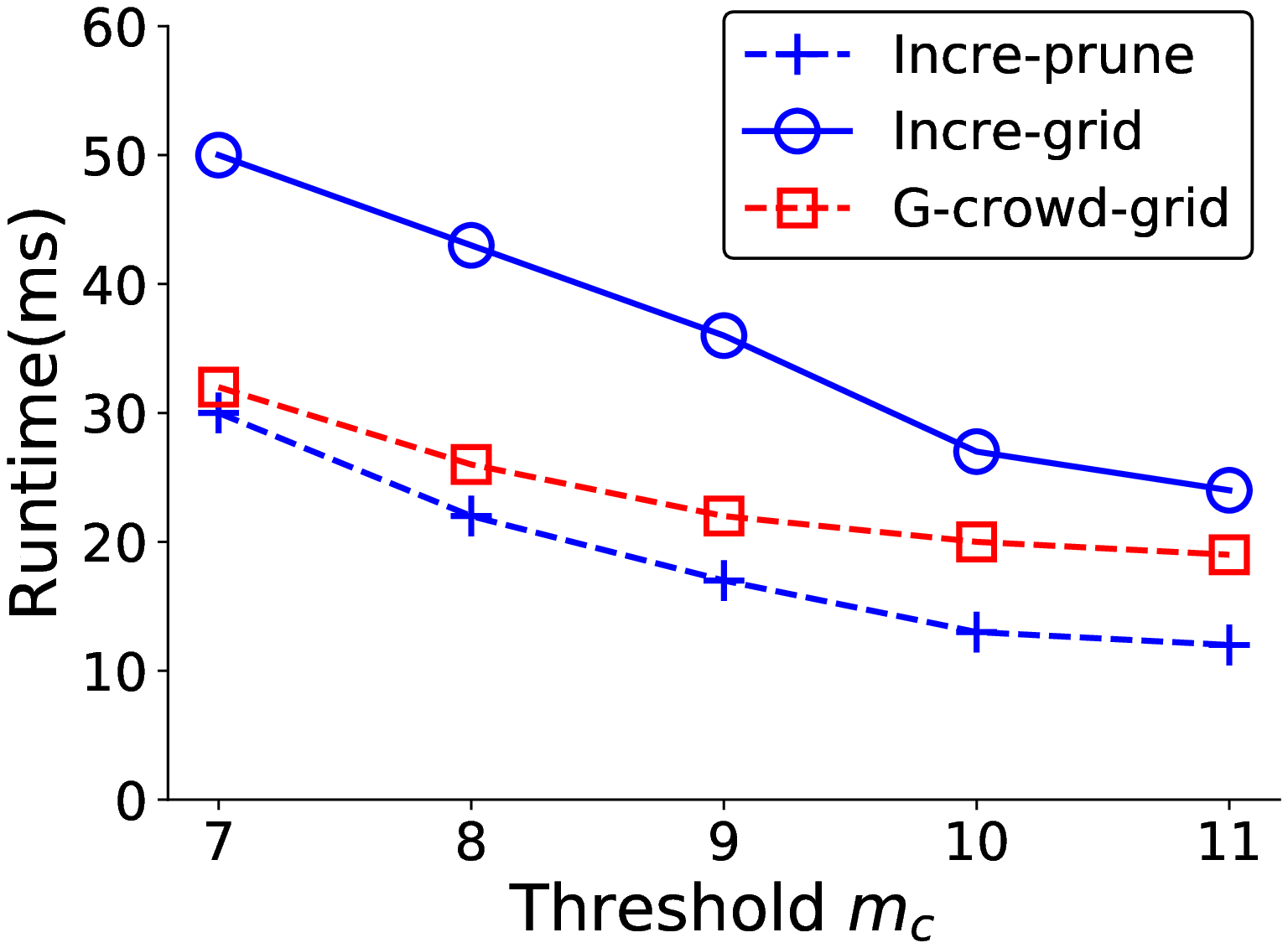}}
\hspace{5mm}
\subfigure[Threshold $m_p$]{\label{fig:mp}%
\includegraphics[width=0.4\textwidth]{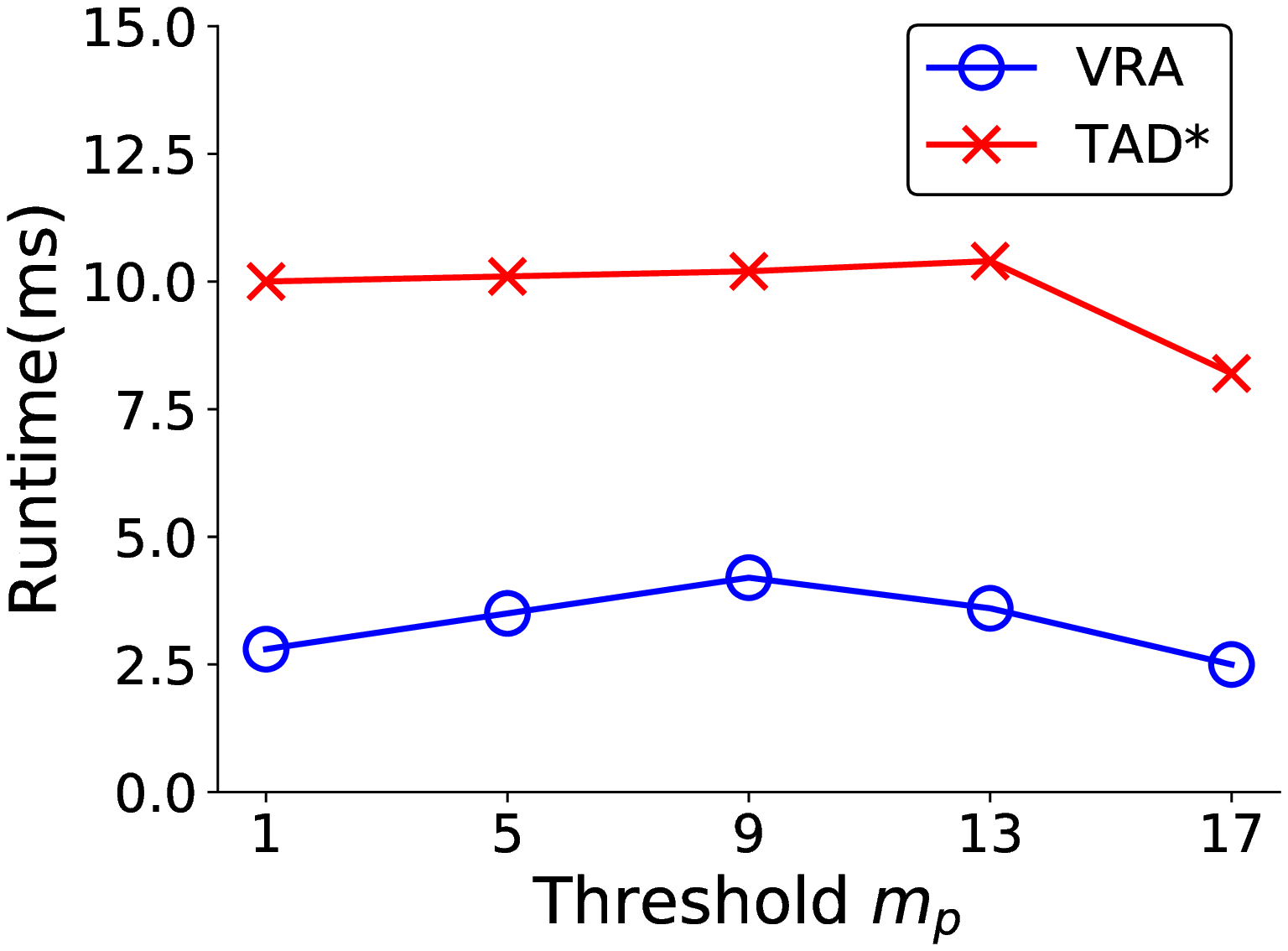}}
%\vspace{-0.3cm}
\caption{Running time w.r.t. thresholds $m_c$ and $m_p$}
\end{figure}

Next, we evaluate the performance of two methods w.r.t threshold $m_c$ and $m_p$. 
% Since $m_c$ only affects the number of crowds, we only measure the running time of closed crowds discovery using two pruning methods: a) -prune, our cluster pruning strategy; b) -grid, grid-based indexing~\cite{zheng2013discovery,zheng2014online}.
We fix $k_c$=7 ($w$=8), $k_p$=5, $d$=300 meter, $|O_{DB}|$=10,000, and $m_g$=0.7.
In Figure \ref{fig:mc}, we vary $m_c$ from 7 to 11 at fixed $m_p$=$5$. In Figure \ref{fig:mp}, we vary $m_p$ from 1 to 17 at fixed $m_c$=8.
As shown in Figure \ref{fig:mc}, we can see that Incre-prune outperforms G-crowd-grid and Incre-grid methods. This may be because 1) we use two optimization principles to reduce the number of crowd candidates and prune the unnecessary validating with new clusters; 2) our cluster pruning efficiently filters the long-distance clusters as well as short-distance clusters, while grid suffers expensive cost in indexing clusters in high-speed streaming data.
Since $m_c$ affects the number of clusters that satisfy the dense group at each timestamp, the larger $m_c$ we choose, the less number of clusters satisfying this threshold. Therefore, the running time of all algorithms decreases as $m_c$ increases.
VRA again exhibits much better performance than TAD* in term of CPU time in Figure \ref{fig:mp}. %This is because we reuse the detection process of participators in finding closed aggregation as window slides, although we maintain much more closed crowd candidates. 
The reason is same as the above explained.
Moreover, $m_p$ would affect the number of invalid clusters in crowd. As $m_p$ increases, more clusters are invalid, which causes process of closed aggregation to terminate more quickly in each window for our framework.

\subsubsection{Running time w.r.t number of objects and hausdorff distance}

Next, we compare total running time of our DEG method against gathering pattern by varying the number of moving objects and Hausdorff distance respectively when fix $w$=8, $k_c$= 7, $m_c$ = 8, $k_p$ = 5, $m_p$ = 5, $m_g$=$0.7$.
Figure~\ref{fig:objects} shows the results at fixed $d$=$300$ meters.
Our algorithm is superior to the discovery framework of gathering pattern (with grid indexing) in detecting closed crowds and aggregations, shown as Incre+VRA (with pruning strategy).
Moreover, our overall framework DEG also outperforms the gathering pattern in term of total running time, although our evolving group can capture more interesting patterns by considering crowds across non-consecutive time points.
%This may be because we use two optimization principles to reduce the number of crowd candidates and prune the unnecessary validating with new clusters.
In particular, our framework reaches 31ms per window when $|O_{DB}|$=$10,000$, saving 16\% CPU time compared to the gathering pattern.
Figure~\ref{fig:distance} shows the results w.r.t Hausdorff distance at fixed $|O_{DB}|$=$10,000$. As expected, the running time of both algorithms increases as the distance increases. This is because the search space between new clusters and ending clusters increases although both employ pruning methods, further causing the number of crowd candidates increases at each window.

\begin{figure}[htbp]
\centering
\subfigure[Number of objects]{\label{fig:objects}%
\includegraphics[width=0.4\textwidth]{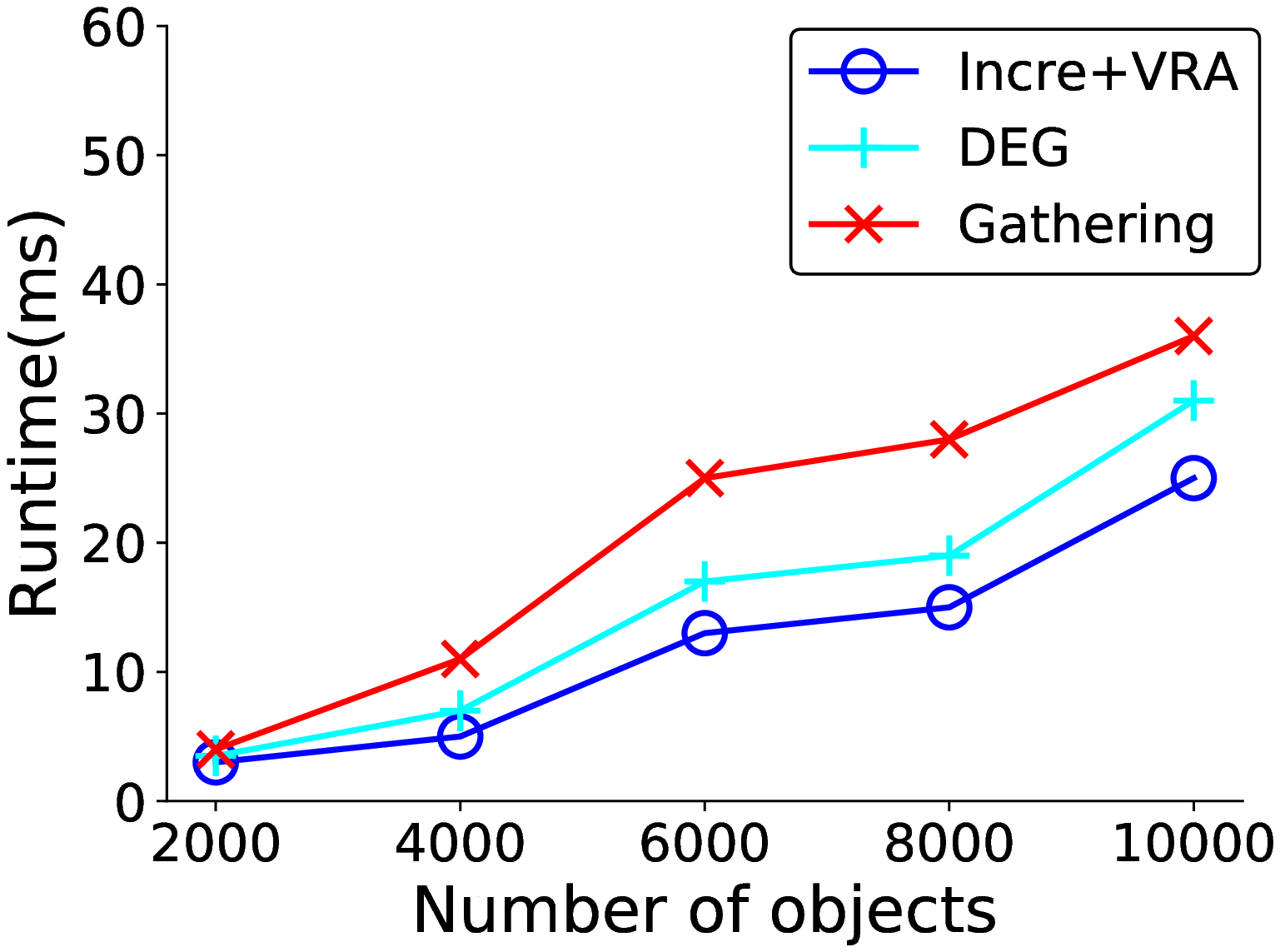}}
\hspace{5mm}
\subfigure[Hausdorff distance]{\label{fig:distance}%
\includegraphics[width=0.4\textwidth]{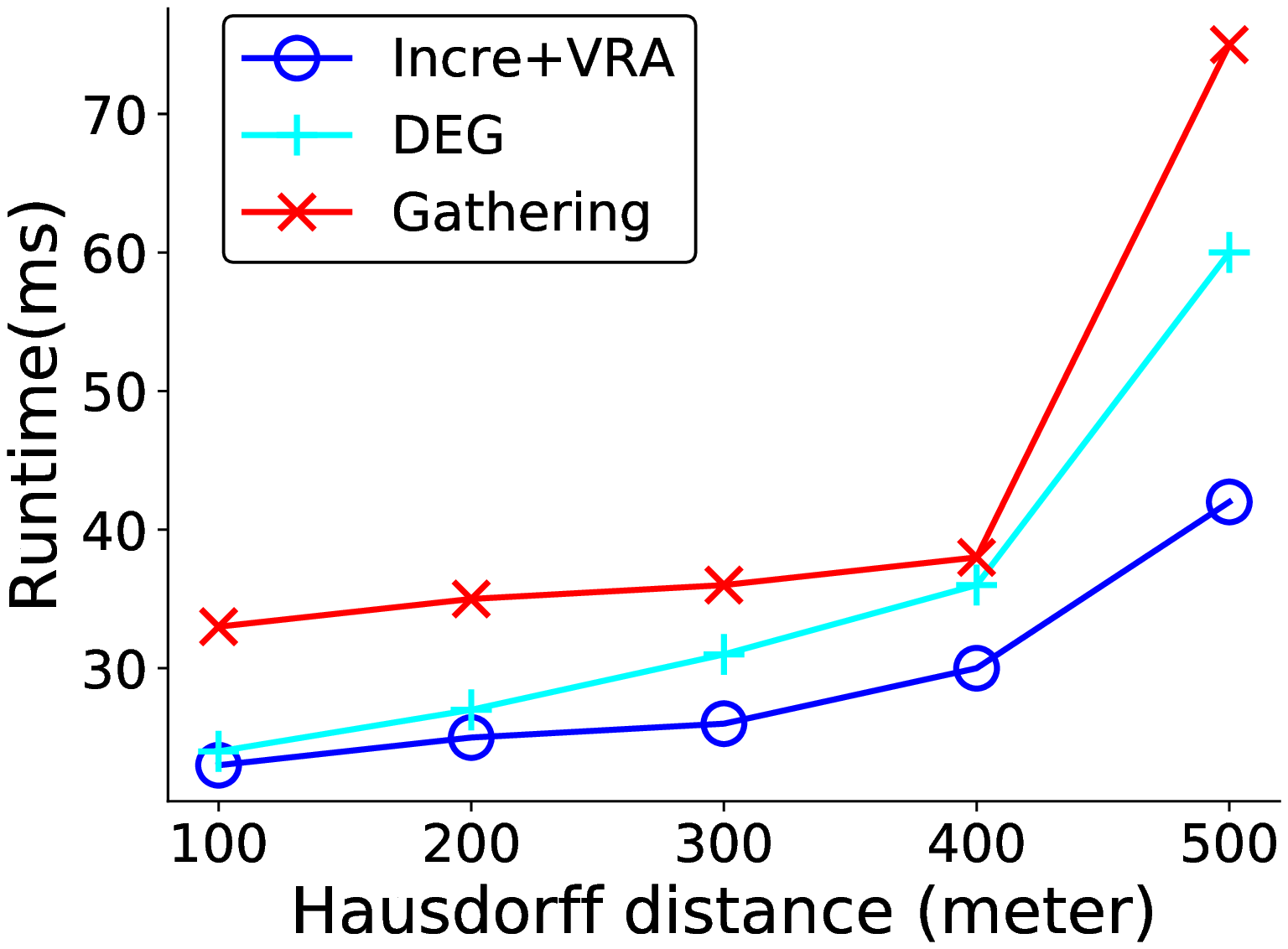}}
%\vspace{-0.3cm}
\caption{Running time w.r.t. number of objects and Hausdorff distance}
% \vspace{-3mm}
\end{figure}

%In summary, our framework exhibits better scalability than gathering pattern, achieving on average efficiency of 40,000 objects per second processing capacity when also including the clustering time. Therefore it is suitable for the online monitoring in practical applications producing high volume trajectory streams.

\subsection{Scalability}
\label{subsec.scale}
Finally, we evaluate the scalability of our parallel MTOD framework on the large-scale \textit{Traffic} data with respect to the number of threads and the number of objects (vehicles). 
We fix $w$=8, $k_c$=7, $m_c$=10, $k_p$=$5$, $m_p$=10, $d$=$1,500$ meters, $m_g$=$0.7$, and $k_g$=10.
The results are shown in Figure~\ref{fig:scale}, \textit{Random} denotes multi-threading method using random distribution of new clusters. 
Figure~\ref{fig:threads} shows the results of all methods on all vehicles (around 180K) on the road network in \textit{Traffic} data by varying the number of threads. Figure~\ref{fig:scale.objects} shows the results of all methods with respect to the number of vehicles when the number of threads is fixed to 16, i.e., $|O_{DB}|$ varies from 60K to 180K.
As shown in Figure~\ref{fig:threads}, our MTOD framework consistently outperforms our proposed serial DEG algorithm and Random. In particular, MTOD achieves 10 speed up over our DEG algorithm when the number of threads equals to 8, while Random only improves up to 6 times faster than DEG algorithm. This is because our MTOD uses sector-based partition to distribute the new clusters into each thread, which quickly prunes the closed crowd candidates in non-adjacent sectors without computing. Therefore, our MTOD successfully achieves a super-linear speed-up ratio.
From Figure~\ref{fig:scale.objects}, we can see that our MTOD framework shows better scalability than our serial DEG method and is superior to Random with respect to the number of objects. More specifically, MTOD framework significantly outperforms DEG by on average 10.6 times in runtime on all tested cases, whereas Random only improves on average 6.8 times over DEG. As explained above, randomly distributing only achieves parallel discovery of closed aggregations for each partition, but it still checks all closed crowd candidates in $CanSet$ for each new clusters, while our MTOD directly prunes more crowd candidates before applying the long-distance pruning.

\begin{figure}[htbp]
\centering
\subfigure[Number of threads]{\label{fig:threads}%
\includegraphics[width=0.4\textwidth]{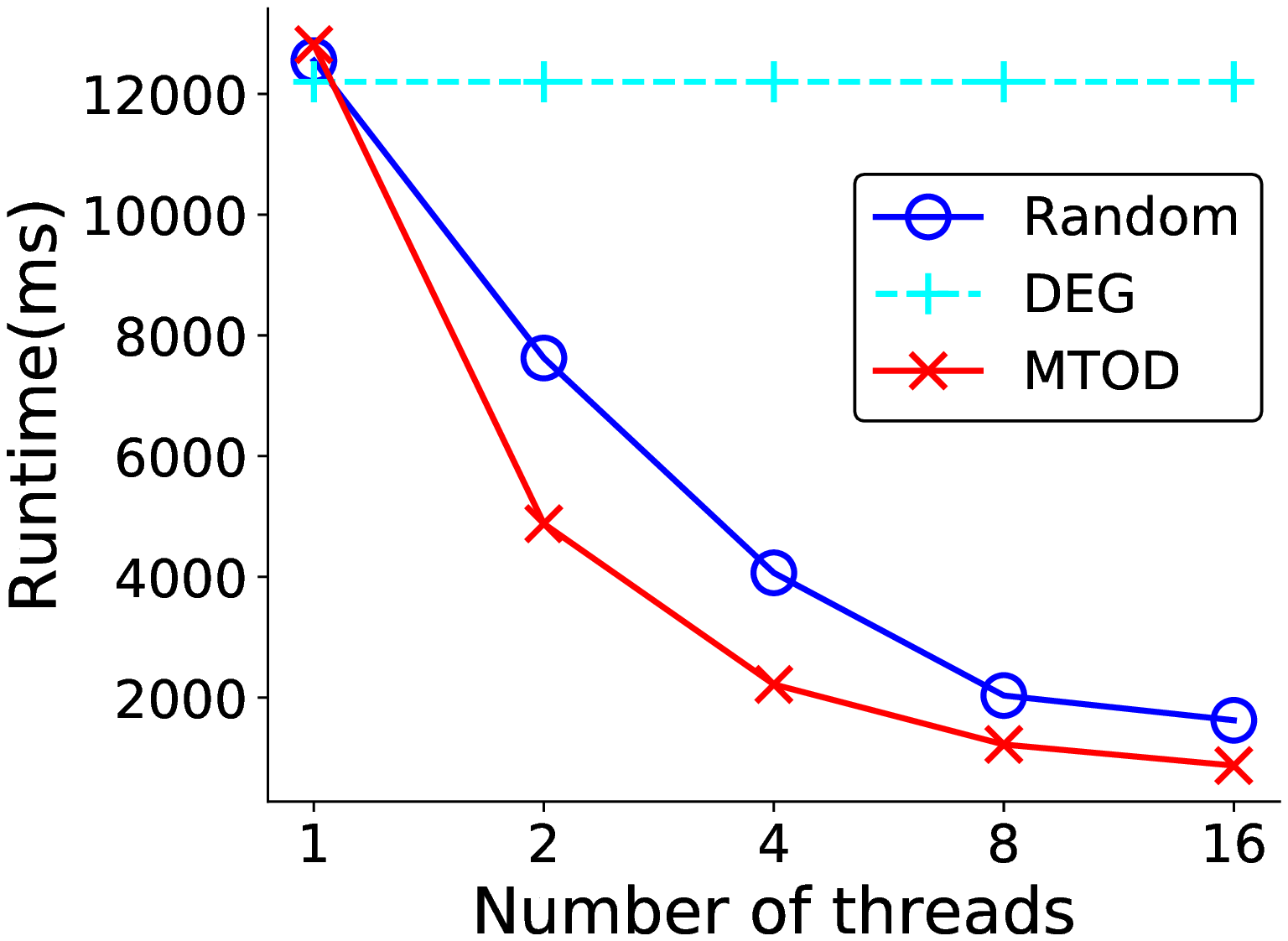}}
\hspace{5mm}
\subfigure[Number of objects]{\label{fig:scale.objects}%
\includegraphics[width=0.4\textwidth]{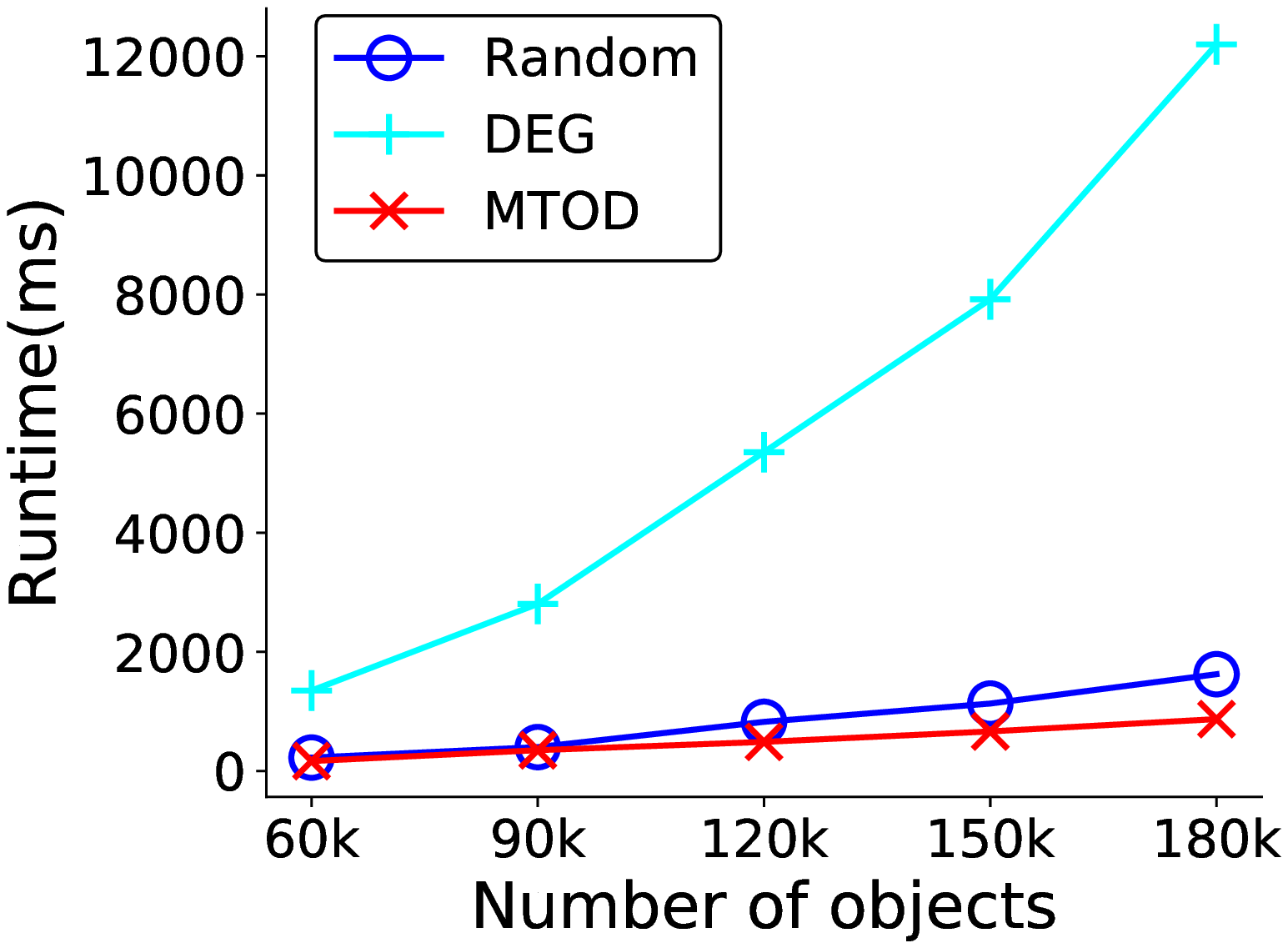}}
%\vspace{-0.3cm}
\caption{Efficiency comparison of parallel framework}
\label{fig:scale}
%\vspace{-0.5cm}
\end{figure}
\section{Conclusion}
\label{sec.con}
In this work we focus on the detection of evolving group patterns over massive-scale trajectory streams. After analyzing the requirements of stream trajectory monitoring applications, we first propose the novel concept of evolving group to capture the variety of group events and their evolving process in trajectory streams. Moreover, we design an online discovery algorithm of evolving group, which contains three phases incorporating a series of optimization principles to reduce computation cost. Furthermore, we extend a multi-threading based parallel discovery framework to scale to huge-scale trajectory streams. At last we evaluate the effectiveness and efficiency of our method compared against the state-of-the-art on three real wold large-scale datasets.

%In this paper, we study the problem of discovering closed gathering patterns from a large-scale trajectory database. Different from the earlier proposed concepts, such as flock, convoy and swarm, which aim to identify groups of moving objects travelling together for a certain time period, the gatherings are able to model a variety of non-trivial group events or incidents. Since the whole discovery process with straightforward solutions can be very lengthy in a practical dataset, we propose a series of techniques which address the indexing, searching and updating issues respectively. At last we validate the effectiveness and efficiency of our proposals by conducting extensive experiments based on a real and largescale taxicab trajectory dataset.

\section*{Acknowledgment}
This work is partially supported by the National Natural Science Foundation of China (nos. 61773331, 61703360, 61403328 and 61502410).
%\textit{(Ruoshan Lan and Yanwei Yu contribute equally to this work. Corresponding author: Yanwei Yu.)}

\bibliographystyle{ACM-Reference-Format}
\bibliography{group}

\end{document}